\newcommand{\p}{{\rm I}\kern-0.18em{\rm P}}
\newcommand{\1}{{\rm 1}\kern-0.24em{\rm I}}
\newcommand{\E}{{\rm I}\kern-0.18em{\rm E}}
\newcommand{\R}{{\rm I}\kern-0.18em{\rm R}}
\newtheorem{theorem}{Theorem}[section]
\newtheorem{lemma}{Lemma}[section]
\newtheorem{proposition}{Proposition}[section]
\newtheorem{ex}{Example}[section]
\newtheorem{definition}{Definition}[section]
\newcommand{\qmq}[1]{\quad\mbox{#1}\quad}
\newcommand{\qm}[1]{\quad\mbox{#1}}
\newcommand{\mA}{\mathcal{A}}
\newcommand{\mC}{\mathcal{C}}
\newcommand{\mM}{\mathcal{M}}
\newcommand{\mS}{\mathcal{S}}
\newcommand{\fC}{\mathfrak{C}}
\newcommand{\AMO}{$\alpha$ max optimal}
\newcommand{\AO}{$\alpha$ optimal}
\newcommand{\To}{\rightarrow}
 \newcommand{\wtilde}[1]{\widetilde{#1}}
 \newcommand{\oa}{\overline{a}}
  \newcommand{\ub}{\underline{b}}
\title{Optimal and fast confidence intervals for hypergeometric successes}
\author{\textsc{Jay Bartroff}$^*$, \textsc{Gary Lorden}$^\dag$, and \textsc{Lijia Wang}$^\ddag$\\
\small{$^*$Department of Statistics and Data Sciences, University of Texas at Austin, Austin, Texas, USA}\\
\small{$^\dag$Department of Mathematics, Caltech, Pasadena, California, USA}\\
\small{$^\ddag$Department of Mathematics, University of Southern California, Los Angeles, California, USA}
}
\begin{document}
\maketitle

\abstract{We present an efficient method of calculating exact confidence intervals for the hypergeometric parameter representing the number of ``successes,'' or ``special items,'' in the population.  The method inverts minimum-width acceptance intervals after shifting them to make their endpoints nondecreasing while preserving their level.  The resulting set of confidence intervals achieves minimum possible average size, and even in comparison with confidence sets not required to be intervals it attains the minimum possible cardinality most of the time, and always within $1$.  The method compares favorably with existing methods not only in the size of the intervals but also in the time required to compute them.  The available \textsf{R} package \texttt{hyperMCI} implements the proposed method.}

\section{Introduction}
Given integers $0<n\le N$ and $0\le M\le N$, a random variable $X$ has the \textit{hypergeometric distribution~$\mbox{Hyper}(M,n,N)$} if
\begin{equation}\label{hyp.pdf}
P_M(X=x) = \left. {M\choose x}{N-M\choose n-x}\right/ {N\choose n}
\end{equation}
for all integer values of $x$ such that the quotient~\eqref{hyp.pdf} is defined, with $P_M(X=x)=0$ otherwise. It is not hard to see that \eqref{hyp.pdf} is nonzero if and only if
\begin{equation}\label{x.range}
x_{\min}:=\max\{0, M+n-N\}\le x\le \min\{M, n\}=:x_{\max}.
\end{equation}

The most common setting in which the hypergeometric distribution arises is when $X$ counts the number of items with a certain binary ``special'' property, sometimes called a ``success,'' in a simple random sample (i.e., sampled uniformly without replacement) of size~$n$ from a population of size~$N$ containing $M$ special items. But the hypergeometric arises in many other ways\footnote{For readers interested in aspects of the hypergeometric distribution not considered here, we refer them to \citet{Hald90} for its history and naming, \citet{Keilson71} for log-concavity and other properties, and \citet{Chvatal79} and \citet{Skala13} for exponential tail bounds,  to name a few.} not involving a simple random sample, such as the analysis of a $2\times 2$ contingency table using Fisher's Exact Test, and in other sampling schemes.  

\subsection{Summary of our approach}
Our approach to constructing $(1 - \alpha)$-confidence intervals for $M$ based on $X$ is by inverting tests of the hypotheses $H: M=M_0$, which we denote as $H(M_0)$, for $M_0=0,1,\ldots, N$. For testing $H(M)$, we utilize  acceptance intervals~$[a_M,b_M]$ that \emph{maximize} the acceptance probability $P_M( X\in [a_M,b_M])$ among all shortest possible level-$\alpha$ intervals, a property we call \textit{\AMO{}} which is discussed in Section~\ref{sec: adj}, along with a novel method of shifting a set of \AMO{} intervals so their endpoints $a_M, b_M$ form nondecreasing sequences. This guarantees that the confidence sets that result from inversion are intervals, which is our goal here. After obtaining and shifting a set of \AMO{} intervals, in Section~\ref{sec:mod.minacc} we discuss how to further modify them to make them symmetrical, and discuss the case $M=N/2$ when $N$ is even, which needs separate handling. With our symmetrical and monotonic acceptance intervals in hand, in Section~\ref{sec:length.CI} we prove the size-optimality results for the confidence intervals that result from inversion. In Section~\ref{sec:ex.comp} we present some numerical examples and compare with two existing methods, including the notable, recent method of W.~Wang~\citeyearpar{Wang15}. There we also apply our method to some data about the air quality in China.

\subsection{Related previous work}

For exact confidence sets, there is much more literature on the related problem of the Binomial success probability than for the hypergeometric, beginning with \citet{Clopper34} who applied the method of pivoting the c.d.f.\ to the Binomial problem. Sterne's~\citeyearpar{Sterne54} method for the Binomial inverts hypothesis tests with the $p$-value as the test statistic, and he observed that the resulting intervals are ``sometimes narrower'' \citep[][p.~278]{Sterne54} than the Clopper-Pearson intervals. Sterne's method can alternatively be described as inverting acceptance intervals with maximal acceptance probability, which is similar to the method we apply here to the hypergeometric.  \citet{Crow56} showed that Sterne's~\citeyearpar{Sterne54} method yields intervals with minimal total (or average) width, but also pointed out some ``irregularities'' in the method, such as occcasionally producing non-intervals, or giving longer intervals for \emph{lower} confidence levels based on the same data. \citet{Crow56} proposed a modification of Sterne's method eliminating these irregularities while maintaining minimal total width. \citet{Blyth83} proposed a further modification of Sterne's method giving intervals with more regular monotonic endpoint sequences than Sterne's and Crow's, while also achieving minimal total width. \citet{Blaker00,Blaker01} proposed an improvement  of the Clopper-Pearson method giving shorter intervals, nested by confidence level, by choosing a more efficient partition of the error probabilities than the ``equal tails'' approach of the earlier method. A referee brought to our attention the recent method of \citet{Schilling14}, which produces length-minimizing, exact intervals for the Binomial problem by shifting acceptance intervals to achieve monotonicity of endpoints before inverting; this is similar to our approach to the hypergeometric.

For the hypergeometric, pivoting the c.d.f.\ was proposed by \citet{Konijn73} and \citet{Buonaccorsi87}, but length-optimality was not addressed until \citet{Wang15}, who proposed a computationally-intensive method for both 1-and 2-sided intervals, and proved that the 1-sided intervals were length-minimizing. See Section~\ref{sec:ex.comp} for a more detailed description and comparison of these methods.

\citet[][p.~463]{Casella02} give a summary of work on confidence sets for some other discrete distributions. One notable example is Crow and Gardner's~\citeyearpar{Crow59} for the Poisson mean, a method similar to Crow's~\citeyearpar{Crow56} for the Binomial.

\section{Additional notation}
Throughout the paper we treat the positive integers $n$ and $N$, and the desired confidence level~$1-\alpha\in(0,1)$, as fixed quantities, known to the statistician, and inference centers on the unknown value of $M$.  Since the parameter~$M$ of interest is an integer, the \textit{intervals} we consider are actually sets of consecutive integers, which we denote by $[a,b]$ but actually mean $\{a,a+1, \ldots,b\}$.  For an arbitrary set~$A$ we let $P_M(A)$ denote $P_M(X\in A)$ where $X\sim\mbox{Hyper}(M,n,N)$, which $X$ will denote throughout unless otherwise specified. For a scalar $x$ we let $P_M(x)$ denote $P_M(X=x)$. We let $\lfloor y\rfloor$ denote the largest integer $\le y$ and $\lceil y\rceil$ the smallest integer $\ge y$. For sets $A, B$ let $A\setminus B=\{a\in A:\; a\notin B\}$ denote the set difference and $|A|$ denote set cardinality, e.g., $|[a,b]|=b-a+1$ for integers $a\le b$. For a nonnegative integer $j$ we let $[j]=\{0,1,\ldots,j\}$.

%\clearpage

\section{\AMO{} acceptance sets and modifying intervals for monotonicity}\label{sec: adj}

In this section we establish properties of acceptance intervals that will guarantee that they still enjoy size optimality when they are appropriately shifted to make their endpoints monotonic.  The next definition makes this precise, and  we call the property \emph{$\alpha$ max optimal}. Theorem~\ref{thm:adj} shows how to modify any set of \AMO{} acceptance intervals to produce intervals whose endpoints $a_M, b_M$ are nondecreasing in $M$, thus producing confidence \emph{intervals} upon inversion rather than non-interval confidence \emph{sets}; see also Section~\ref{sec:length.CI}. It is not difficult to construct \AMO{} acceptance intervals, and a simple and straightforward algorithm to do so which we call Algorithm~\ref{alg:minacc} is given in Section~\ref{sec:alg1.AMO}, where we also prove that it is \AMO{}.

For the next definition we more generally consider acceptance \emph{sets} (not necessarily intervals): A \emph{level-$\alpha$ acceptance set} for $H(M)$ is any subset~$S_M\subseteq[n]$ such that \begin{equation*}
P_M(S_M)\ge 1-\alpha. 
\end{equation*}

\begin{definition} \label{def:a.opt} Fix $n$, $N$, and $\alpha\in(0,1)$.
\begin{enumerate}
\item  Given $M\in [N]$, a subset $S\subseteq [n]$ is \emph{\AO{} for $M$} if $P_M(S)\ge 1-\alpha$ and $P_M(S^*)< 1-\alpha$ whenever $S^*\subseteq [n]$ with $|S^*|<|S|$. A collection $\{S_M:\; M\in\mM\}$, $\mM\subseteq [N]$, is \emph{\AO{} (for $\mM$)} if,  for all $M\in\mM$, $S_M$ is \AO{} for $M$.
\item Given $M\in [N]$, a subset $S\subseteq [n]$ is \emph{$P_M$-maximizing} if all elements of $S$ have positive $P_M$-probability and $P_M(S)\ge P_M(S^*)$ whenever $|S^*|= |S|$. A collection $\{S_M:\; M\in\mM\}$, $\mM\subseteq [N]$, is \emph{$P_{\mM}$-maximizing} if,  for all $M\in\mM$, $S_M$ is $P_M$-maximizing.
\item A collection $\{S_M:\; M\in\mM\}$, $\mM\subseteq [N]$, is \emph{\AMO{} (for $\mM$)} if it is \AO{} and $P_{\mM}$-maximizing.
\end{enumerate}
\end{definition}

The next proposition shows the link between the more general probability-maximizing \emph{sets} in the definition, and intervals; namely, that probability-maximizing sets are always intervals.

\begin{proposition} If $S$ is $P_M$-maximizing for some $M$ then it is a subinterval of $[x_{\min}, x_{\max}]$, as defined in \eqref{x.range}.
\end{proposition}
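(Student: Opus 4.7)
The plan is to derive a contradiction by a swap argument relying on strict log-concavity of the hypergeometric p.m.f. First I note that since every element of $S$ must have positive $P_M$-probability, \eqref{x.range} immediately gives $S\subseteq [x_{\min},x_{\max}]$; the remaining content of the claim is that $S$ has no gaps. Suppose, for contradiction, that $S$ is not an interval, so there exist $a,b\in S$ and $c\in[n]\setminus S$ with $a<c<b$. Because $a,b\in[x_{\min},x_{\max}]$ and this set is itself an integer interval, $c\in[x_{\min},x_{\max}]$ as well, so $P_M(c)>0$.

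The core step is to show that the hypergeometric p.m.f.\ is \emph{strictly} log-concave on its support, i.e.\ $P_M(x)^2>P_M(x-1)P_M(x+1)$ for $x_{\min}<x<x_{\max}$. Writing out the binomial ratios gives
\[
\frac{P_M(x+1)}{P_M(x)}=\frac{(M-x)(n-x)}{(x+1)(N-M-n+x+1)},
\]
and I would verify that as $x$ increases by $1$ on $[x_{\min},x_{\max}-1]$, the numerator strictly decreases (both factors are positive and strictly decrease) while the denominator strictly increases (both factors are positive, since $x\ge x_{\min}\ge M+n-N$ forces $N-M-n+x+1\ge 1$, and strictly increase). Hence the ratio is strictly decreasing in $x$, which is exactly strict log-concavity. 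This small polynomial-sign check is the one step requiring actual computation; I expect it to be the main (and only) obstacle, though it is routine.

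From strict log-concavity, a standard telescoping of the increments $r_i=\log P_M(i+1)-\log P_M(i)$ gives
\[
\log P_M(c)>\frac{b-c}{b-a}\log P_M(a)+\frac{c-a}{b-a}\log P_M(b)\ge\min\bigl\{\log P_M(a),\log P_M(b)\bigr\},
\]
so $P_M(c)>\min\{P_M(a),P_M(b)\}$. Without loss of generality assume $P_M(a)\le P_M(b)$, and set $S^\ast=(S\setminus\{a\})\cup\{c\}$. Then $|S^\ast|=|S|$, every element of $S^\ast$ has positive $P_M$-probability, and
\[
P_M(S^\ast)=P_M(S)-P_M(a)+P_M(c)>P_M(S),
\]
contradicting the $P_M$-maximizing property of $S$ in Definition~\ref{def:a.opt}. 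Therefore $S$ must be an interval, completing the proof.
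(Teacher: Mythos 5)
Your proof is correct and follows essentially the same route as the paper's: both arguments reduce to showing that a gap point $c$ strictly between two elements of $S$ satisfies $P_M(c)>\min\{P_M(a),P_M(b)\}$, which is incompatible with $S$ being $P_M$-maximizing. The only difference is packaging — you establish the key inequality via an inline proof of strict log-concavity and close with an explicit swap, whereas the paper cites its unimodality result (Lemma~\ref{lem:hyp.uni.x}) and the exchange characterization in Lemma~\ref{larger}.
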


\begin{proof}
By definition $S\subseteq [x_{\min}, x_{\max}]$, and by the (near-strict) unimodality of every $P_M$ in Lemma~\ref{lem:hyp.uni.x} there exists $[m_1,m_2]\subseteq [x_{\min}, x_{\max}]$ with $m_2-m_1=0$ or $1$ such that $P_M(x)$ is strictly increasing on $[x_{\min},m_1]$ and strictly decreasing on $[m_2,x_{\max}]$. Suppose $x_1,x_2\in S$ and $x_1<y<x_2$. Then $y\le m_1$ implies $P_M(y)>P_M(x_1)$ and $y\ge m_2$ implies $P_M(y)>P_M(x_2)$. Since $m_1$ and $m_2$ are equal or adjacent, at least one of these cases applies, and therefore $P_M(y)>\min\{P_M(x_1), P_M(x_2)\}$. By Lemma~\ref{larger} it follows that $y\in S$ and thus $S$ is an interval.
\end{proof}

Our main result concerning \AMO{} acceptance intervals, stated in the next theorem, is that they can always be modified in order to make both sequences of endpoints nondecreasing in $M$ while still being \AO{}.

\begin{theorem} \label{thm:adj}
Fix $n$, $N$, $\alpha\in(0,1)$. Let $\mM \subseteq[N]$  be an arbitrary set of consecutive integers, and $\{[a_M, b_M]:\; M\in\mM\}$ a set of \AMO{} acceptance intervals. For $M\in\mM$ define
 \begin{equation}\label{oa.ub.def}
\oa_M=\max_{M'\le M}a_{M'}\qmq{and}\ub_M=\min_{M'\ge M} b_{M'}.
\end{equation}
 Finally, define
\begin{equation}\label{Ma.Mb.def}
\mM_a=\{M\in\mM:\; a_M<\oa_M\}\qmq{and}\mM_b=\{M\in\mM:\; b_M>\ub_M\}.
\end{equation}
Then the following hold.
\begin{enumerate}
\item\label{thm.adj.Mab} The sets $\mM_a$ and $\mM_b$ are disjoint.
\item\label{thm.adj.ints}  The adjusted intervals
\begin{equation}\label{thm.adj.ints.def}
[a_M^{adj},b_M^{adj}]:=\begin{cases}
[\oa_M,b_M+(\oa_M-a_M)],&M\in\mM_a\\
[a_M-(b_M-\ub_M),\ub_M],&M\in\mM_b\\
[a_M,b_M],&\mbox{all other $M\in\mM$,}
\end{cases} 
\end{equation}
are \AO{} and have nondecreasing endpoint sequences.
\end{enumerate}
\end{theorem}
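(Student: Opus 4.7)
The proof has two parts and rests on two structural properties of the hypergeometric family used implicitly so far: (a) monotone likelihood ratio in $M$, so that $P_{M'}$ stochastically dominates $P_M$ whenever $M'>M$; and (b) the fact---derivable from (a)---that for each fixed integer length $L$, the $P_M$-maximizing length-$L$ interval has both endpoints weakly nondecreasing in $M$. My plan for Part~\ref{thm.adj.Mab} is proof by contradiction. Suppose some $M\in\mM_a\cap\mM_b$ and take $M_1<M$ with $a_{M_1}=\oa_M>a_M$ and $M_2>M$ with $b_{M_2}=\ub_M<b_M$. Applying (b) at the common length $b_M-a_M+1$ forces the lengths $b_{M_1}-a_{M_1}+1$ and $b_{M_2}-a_{M_2}+1$ each to differ from $b_M-a_M+1$; splitting into cases by these comparisons, I would combine the AMO property at $M_1$ and $M_2$ with stochastic dominance to exhibit an interval strictly shorter than $[a_M,b_M]$ that nevertheless carries $P_M$-probability $\ge 1-\alpha$, contradicting the AO property of $[a_M,b_M]$.

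For Part~\ref{thm.adj.ints}, length preservation is immediate from~\eqref{thm.adj.ints.def}: in all three cases both endpoints shift by the same amount, so $|[a_M^{adj},b_M^{adj}]|=|[a_M,b_M]|$. Monotonicity of the endpoint sequences is where Part~\ref{thm.adj.Mab} enters. Disjointness of $\mM_a$ and $\mM_b$ gives $a_M^{adj}=\oa_M$ for every $M\notin\mM_b$ (combining the $\mM_a$ case with the ``no change'' case), and $\oa_M$ is nondecreasing by construction; an explicit check at transitions into and out of $\mM_b$---using $a_M^{adj}=\oa_M-(b_M-\ub_M)$ there and the symmetric identity $b_M^{adj}=\ub_M$ for $M\in\mM_b$---verifies monotonicity across these boundaries, and a mirror-image argument handles the $b^{adj}$-sequence.

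The heart of Part~\ref{thm.adj.ints} is AO preservation, i.e.\ $P_M([a_M^{adj},b_M^{adj}])\ge 1-\alpha$ after the shift. For $M\in\mM_a$, write $\oa_M=a_{M^*}$ for some $M^*<M$; the adjusted interval is the right-shift of $[a_M,b_M]$ by $\oa_M-a_M$, with left endpoint $a_{M^*}$ and the same length as $[a_M,b_M]$. My plan is to compare it with the AMO interval $[a_{M^*},b_{M^*}]$: invoking (b) to control the $P_M$-maximizing length-$(b_M-a_M+1)$ interval whose left endpoint is at least $a_{M^*}$, and using stochastic dominance of $P_M$ over $P_{M^*}$ together with $P_{M^*}([a_{M^*},b_{M^*}])\ge 1-\alpha$ to control the tails. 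A symmetric argument dispatches $M\in\mM_b$. The main obstacle throughout is converting pointwise MLR inequalities into bounds on probabilities of finite \emph{intervals} sharp enough to clear $1-\alpha$: two-sided tails do not decouple cleanly, so the AMO constraint at the witness ($M^*$ or $M_2$) must be used to pin down both tails jointly rather than estimating each in isolation.
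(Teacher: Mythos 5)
Your proposal correctly isolates where the difficulty lies, but the plan as written does not close the gap it identifies. The obstacle you name at the end --- that MLR/stochastic dominance controls one-sided tails but not probabilities of two-sided finite intervals --- is precisely the step left unresolved, and stochastic dominance is the wrong tool for it. For $M\in\mM_a$ with witness $M_\ell<M$ satisfying $a_{M_\ell}=\oa_M$, the paper does not compare $P_M$ with $P_{M_\ell}$ via dominance. It first establishes that for a \emph{fixed} interval $I$ the map $M'\mapsto P_{M'}(I)$ is unimodal in $M'$ (Lemma~\ref{lem:uni.intvl}, proved by a coupling with a single ``red ball,'' not by MLR alone), and that shifting $I$ rightward moves its peak rightward (Lemma~\ref{peak property}). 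The second ingredient your sketch omits is geometric: by the ``one endpoint at a time'' property of probability-maximizing intervals (Lemma~\ref{a_b_one_problem}, a consequence of Lemma~\ref{larger} and the ratio monotonicity of Lemma~\ref{prob ratio}), the shifted interval $[\oa_M,\,b_M+(\oa_M-a_M)]$ \emph{contains} the entire witness interval $[a_{M_\ell},b_{M_\ell}]$. The level is then preserved by a dichotomy: either the shift does not decrease the $P_M$-probability (done), or it does, in which case by Lemma~\ref{peak property} the peak of the shifted interval lies to the right of $M$, so by unimodality
\begin{equation*}
P_M\bigl([\oa_M,\,b_M+(\oa_M-a_M)]\bigr)\ge P_{M_\ell}\bigl([\oa_M,\,b_M+(\oa_M-a_M)]\bigr)\ge P_{M_\ell}\bigl([a_{M_\ell},b_{M_\ell}]\bigr)\ge 1-\alpha.
\end{equation*}
Without the containment and the unimodality-in-$M$ lemma there is no way to ``pin down both tails jointly'' as you hope.

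The disjointness argument has a similar gap. You propose to exhibit an interval strictly shorter than $[a_M,b_M]$ with $P_M$-probability at least $1-\alpha$, but no such interval is constructed, and the natural candidates fail: the witness intervals $[a_{M_1},b_{M_1}]$ and $[a_{M_2},b_{M_2}]$ are indeed strict subintervals of $[a_M,b_M]$ (by Lemma~\ref{a_b_one_problem}), but then the \AO{} property of $[a_M,b_M]$ forces \emph{both} to carry $P_M$-mass strictly \emph{less} than $1-\alpha$; unimodality merely places $M$ past the peak of the first and before the peak of the second, which is perfectly consistent. The paper instead argues directly that $M\in\mM_a$ implies $b_{M'}\ge b_M$ for every $M'>M$, the key step being part~\ref{the.thm.a.bond} of Lemma~\ref{the thm} (if $a_{M'}\ge\oa_M$ then $b_{M'}\ge b_M+\oa_M-a_M>b_M$), whose proof again runs through the containment and unimodality machinery. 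That same lemma is also what your ``explicit check at transitions into and out of $\mM_b$'' would require for endpoint monotonicity --- another place where the proposal asserts rather than proves. Length preservation is, as you say, immediate; but \AO{} requires minimal length \emph{and} level $\alpha$, and the latter carries all the weight here.
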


The proof of the theorem and auxiliary results are given in Appendix~\ref{sec:a.max.opt.proofs}.

\section{\AO{}, symmetrical, nondecreasing acceptance intervals}\label{sec:mod.minacc}

We  first present Algorithm~\ref{alg:minacc} and verify that it is \AMO{} in Section~\ref{sec:alg1.AMO}. This sets the stage to apply Theorem~\ref{thm:adj} to the Algorithm~\ref{alg:minacc} acceptance intervals
\begin{equation}\label{alg1.N/2}
\{[a_M,b_M]:\; M\in [\lfloor N/2\rfloor]\}
\end{equation}
 to obtain the modified intervals
\begin{equation}\label{alg1.adj.N/2}
\{[a_M^{adj},b_M^{adj}]:\; M\in [\lfloor N/2\rfloor]\}.
\end{equation}

A set of acceptance intervals $\{[a_M,b_M]: M\in[N]\}$ is \textit{symmetrical} if the intervals are equivariant with respect to the reflections~$M\mapsto N-M$, $[a_M,b_M]\mapsto [n-b_M,n-a_M]$.  That is, if
\begin{equation}\label{acc.adm.def}
[a_{N-M},b_{N-M}] = [n-b_M, n-a_M]\qm{for all $M \in[N]$.}
\end{equation}
This can equivalently be stated as the intervals $[a_M,b_{N-M}]$, $M\in[N]$, all having midpoint $n/2$, or having endpoints summing to $n$. We seek symmetrical acceptance intervals because they will result in symmetrical confidence intervals (defined below analogously to \eqref{acc.adm.def}) upon inversion in Section~\ref{sec:length.CI}. From \eqref{alg1.adj.N/2}, one way to achieve symmetry is to define $[a_M^{adj},b_M^{adj}]$ for $M>\lfloor N/2\rfloor$ as the reflection of the intervals~\eqref{alg1.adj.N/2} across $n/2$. This achieves symmetry everywhere except at $M=N/2$ when $N$ is even and $[a_{N/2}^{adj},b_{N/2}^{adj}]$ is not symmetric about $n/2$.  This is the strategy taken in Theorem~\ref{thm:a*b*}, with the $M=N/2$ interval taken to be \eqref{N/2.sym.int}, and the resulting intervals are \AO{}, symmetrical, and have nondecreasing endpoint sequences. We call the result of applying Theorem~\ref{thm:a*b*} to Algorithm~\ref{alg:minacc}, Algorithm~\ref{alg:AO.ints}, which is also given in algorithmic form in Appendix~\ref{sec:alg2}.  However, Theorem~\ref{thm:a*b*} is general and applies to not just Algorithm~\ref{alg:minacc} but starting from any \AMO{} intervals~\eqref{alg1.N/2}.

 \subsection{\AMO{} intervals: Algorithm~\ref{alg:minacc}}\label{sec:alg1.AMO}
 
The following is a simple algorithm for producing \AMO{} acceptance intervals. The algorithm starts from the endpoints $C=D=\arg\max_x P_M (x)$ equal to the mode, and moves the endpoints outward, incrementing the acceptance probability.  An alternative is to begin from $[C,D]=[0, N]$ and move the endpoints inward, decrementing the probability, although this will be slower in most settings.

\begin{algorithm}[!htp]
\caption{Given $\alpha$, $n$, and $N$, produce a set of level~$\alpha$ acceptance intervals~$\{[a_M,b_M]:\; M\in[N]\}$.}

\begin{algorithmic} 
\REQUIRE $N \in \mathbb{N}$, $n \leq N$ and $ 0 < \alpha < 1$

\FOR{$M = 0, ..., \lfloor N/2 \rfloor$ }
\STATE $x_{\min}= \max\{0,M+n-N\}$
\STATE $x_{\max}= \min\{n,N\}$
\STATE $C, D = \lfloor \frac{(n+1)(M+1)}{N+2}\rfloor$
\STATE $P = P_M(C)$
\STATE\textbf{if}  $C>x_{\min}$ \textbf{then} $PC= P_M(C-1)$ \textbf{else} $PC= 0$ \textbf{end if}
\STATE\textbf{if}  $D<x_{\max}$ \textbf{then} $PD = P_M(D+1)$ \textbf{else} $PD= 0$ \textbf{end if}
\WHILE{$P< 1 - \alpha$}
\IF{$PD>PC$}
\STATE $D= D+1$
\STATE $P= P+PD$
\STATE\textbf{if}  $D<x_{\max}$ \textbf{then} $PD = P_M(D+1)$ \textbf{else} $PD= 0$ \textbf{end if}
\ELSE 
\STATE $C= C-1$
\STATE $P= P+PC$
\STATE\textbf{if}  $C>x_{\min}$ \textbf{then} $PC= P_M(C-1)$ \textbf{else} $PC= 0$ \textbf{end if}
\ENDIF
\ENDWHILE
\STATE $a_M = C$
\STATE $b_M = D$
\STATE $a_{N - M} = n - b_M$
\STATE $b_{N - M} = n - a_M$
\ENDFOR
\RETURN $\{[a_M,b_M]\}^{N}_{M = 0}$ 
\end{algorithmic}
\label{alg:minacc}
\end{algorithm}

\begin{lemma}\label{lem:alg1.AMO}
The acceptance intervals $\{[a_M, b_M]:\; M\in [\lfloor N/2\rfloor]\}$ produced by Algorithm~\ref{alg:minacc} are \AMO{}.
\end{lemma}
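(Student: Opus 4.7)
The plan is to verify the two conditions of Definition~\ref{def:a.opt}: each $[a_M,b_M]$ must be $P_M$-maximizing and $\alpha$-optimal. Both will follow from a single loop invariant, namely that after each pass through the \textbf{while}-loop the current interval $[C,D]$ comprises precisely $|[C,D]|$ elements of $[n]$ of largest $P_M$-value (ties broken arbitrarily). Because $\sum_{x\in S}P_M(x)$ is manifestly maximized over $|S|=k$ by taking the $k$ largest atoms, this invariant is exactly $P_M$-maximality of $[C,D]$ among sets of its cardinality.

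The first step is to note that the initialization $C=D=\lfloor(n+1)(M+1)/(N+2)\rfloor$ is a mode of $P_M$, a standard consequence of the ratio identity
\[
\frac{P_M(x+1)}{P_M(x)}=\frac{(M-x)(n-x)}{(x+1)(N-M-n+x+1)}\ge 1 \iff x+1\le \frac{(M+1)(n+1)}{N+2}.
\]
The main step is then induction on the loop iteration, strengthening the invariant by the observation that $[C,D]$ always contains a mode (immediate from the initialization and outward-only expansion). The crucial consequence of near-strict unimodality (Lemma~\ref{lem:hyp.uni.x}) is that whenever $[C,D]$ contains a mode, $P_M(x)\le P_M(C-1)$ for every $x<C$ and $P_M(x)\le P_M(D+1)$ for every $x>D$; hence
\[
\max_{x\in[n]\setminus[C,D]}P_M(x)=\max\{P_M(C-1),P_M(D+1)\},
\]
and the algorithm's greedy endpoint update appends a next-largest element, preserving the invariant. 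Boundary cases $C=x_{\min}$ or $D=x_{\max}$ are absorbed by the conventions $PC=0$ or $PD=0$, which automatically steer the expansion in the only feasible direction; the algorithm cannot stall because whenever $P<1-\alpha$ there remain positive-probability atoms to absorb.

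The invariant yields $P_M$-maximality of the returned $[a_M,b_M]$ at once. For $\alpha$-optimality, let $k^*=|[a_M,b_M]|$. The loop exits at the first instant when $P\ge 1-\alpha$, so the interval of size $k^*-1$ present at the top of the final iteration had $P_M$-probability strictly less than $1-\alpha$ and, by the invariant, maximized $P_M$-probability over all subsets of $[n]$ of cardinality $k^*-1$. Since the maximum $P_M$-probability over subsets of cardinality $j$ is nondecreasing in $j$, no subset of cardinality at most $k^*-1$ achieves probability $\ge 1-\alpha$, which is exactly $\alpha$-optimality. I expect the main obstacle to be rigorously justifying the inductive step, i.e., that the \emph{local} comparison of $P_M(C-1)$ versus $P_M(D+1)$ realizes the \emph{global} maximum of $P_M$ over $[n]\setminus[C,D]$; this is precisely where unimodality, together with the mode-containment clause of the invariant, is indispensable. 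Ties between $PC$ and $PD$, and the degenerate case $k^*=1$, are handled without further effort.
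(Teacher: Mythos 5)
Your proposal is correct and follows essentially the same route as the paper's proof: initialize at the mode, maintain by induction the invariant that the greedy outward expansion keeps the interval $P_M$-maximizing (using unimodality to upgrade the local comparison of $P_M(C-1)$ versus $P_M(D+1)$ to a global maximum over the complement), and then derive $\alpha$-optimality from the fact that the penultimate interval is $P_M$-maximizing yet fails to reach level $1-\alpha$. Your write-up is if anything slightly more explicit than the paper's (spelling out the monotonicity in cardinality of the maximal probability and the boundary conventions), but no new idea is involved.
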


\begin{proof} Fix $M\in [\lfloor N/2\rfloor]$ and use $P$ to denote $P_M$. Recall that $P(x)=0$ for $x$ outside $[x_{\min},x_{\max}]$. Algorithm~\ref{alg:minacc} builds up a nested sequence of acceptance intervals $I,J,K,\ldots$ by selecting at each stage a new point having maximum probability among the ones available. The first interval is $I=\{\lfloor m\rfloor\}$ where $m$ is given by \eqref{argmax.x} and $\lfloor m\rfloor$, and thus $I$, have maximum probability; see Lemma~\ref{lem:hyp.uni.x}.  At each successive stage, the current interval $J$ is expanded to $K=J\cup \{x\}$, where  $x$ is $y$, the point adjacent to $J$ on the left, or $z$, adjacent on the right, chosen to yield the maximum of their probabilities.  Since $y$ is to the left of $\{\lfloor m\rfloor\}$, the points further to the left have smaller probabilities by virtue of the unimodality property in Lemma~\ref{lem:hyp.uni.x}, and similarly $z$ has the maximum probability to the right.  Therefore the selected point $x$ has the maximum probability outside of the current interval~$J$, and hence by the Definition~\ref{def:a.opt} of $P$-maximizing sets, the next interval $K$ preserves the $P$-maximizing property of $J$, using an obvious induction argument.
 
For the given $M$, the while loop in Algorithm~\ref{alg:minacc} ends with the first interval $V$ that is level $\alpha$. Hence its predecessor is not level $\alpha$, and since it is $P$-maximizing, no set with fewer points than $V$ is level $\alpha$.  Thus $V$ is \AO{} as well as $P$-maximizing, and hence it is \AMO{}.
\end{proof}

\subsection{Reflection and modification at $N/2$: Algorithm~\ref{alg:AO.ints}}\label{sec:alg2.intro}

Starting with a set of \AMO{} intervals~$\{[a_M,b_M]:\; M\in [\lfloor N/2\rfloor]\}$ we will now define a new set of intervals~$\{[a^*_M, b^*_M]:\; M\in [N]\}$ by (i) applying the adjustments in Theorem~\ref{thm:adj}, (ii) reflecting across $n/2$ to obtain symmetrical intervals for $M>\lfloor N/2\rfloor$, and (iii) if $N$ is even setting $[a^*_{N/2}, b^*_{N/2}]$ to be the interval 
\begin{equation}\label{N/2.sym.int}
[h_{\alpha/2}, n - h_{\alpha/2}],\qmq{where} h_{\alpha/2} = \max\left\{x \in [n] :\; P_{N/2}(X < x ) \leq \alpha/2\right\}.
\end{equation}
The next theorem establishes that the  resulting intervals are \AO{}, symmetrical, and have nondecreasing endpoint sequences.

\begin{theorem}\label{thm:a*b*} Given a set of \AMO{} intervals~$\{[a_M,b_M]:\; M\in [\lfloor N/2\rfloor]\}$, let $\{[a_M^{adj},b_M^{adj}]:\; M\in [\lfloor N/2\rfloor]\}$ denote the result of applying Theorem~\ref{thm:adj}, and  
\begin{equation}\label{a*b*.def}
[a^*_M, b^*_M]=\begin{cases}
[a^{adj}_M, b^{adj}_M],&\mbox{for $M=0,1,\ldots, \lceil N/2\rceil -1$;}\\
[n-b^{adj}_{N-M}, n-a^{adj}_{N-M}],&\mbox{for $M= \lfloor N/2\rfloor +1,\ldots, N$;}\\
[h_{\alpha/2}, n - h_{\alpha/2}],&\mbox{for $M=N/2$ if $N$ is even.}
\end{cases}
\end{equation}
Then $\mA^*:=\{[a_M^*,b_M^*]:\; M\in [N]\}$ are level-$\alpha$, symmetrical,  have nondecreasing endpoint sequences, and are size-optimal except possibly for $M=N/2$ when $N$ is even; in this case, $[a_{N/2}^*,b_{N/2}^*]$ is size-optimal unless $[a_{N/2}^*,b_{N/2}^*-1]$ has probability $1-\alpha$ or greater, in which case $[a_{N/2}^*,b_{N/2}^*-1]$ is \AO{} and $\mA^*$ is larger by one in total size than an \AO{} collection. In any case, $\mA^*$ is \AO{} among symmetrical collections.
\end{theorem}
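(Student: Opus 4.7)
My approach verifies the four claims of Theorem~\ref{thm:a*b*}---level $\alpha$, symmetry, nondecreasing endpoints, and size-optimality together with the closing statement on symmetric collections---with the careful work concentrated at the middle index. For level $\alpha$: when $M<\lceil N/2\rceil$ this is immediate from Theorem~\ref{thm:adj}; when $M>\lfloor N/2\rfloor$ the hypergeometric identity $P_M(x)=P_{N-M}(n-x)$ gives $P_M([n-b^{adj}_{N-M},n-a^{adj}_{N-M}])=P_{N-M}([a^{adj}_{N-M},b^{adj}_{N-M}])\geq 1-\alpha$; and when $N$ is even and $M=N/2$, the symmetry of $P_{N/2}$ about $n/2$ together with the definition of $h_{\alpha/2}$ gives $P_{N/2}([h_{\alpha/2},n-h_{\alpha/2}])=1-2P_{N/2}(X<h_{\alpha/2})\geq 1-\alpha$. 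Symmetry of $\mA^*$ under $M\mapsto N-M$ is then immediate from~\eqref{a*b*.def} and the fact that~\eqref{N/2.sym.int} is mapped to itself under $x\mapsto n-x$.

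For monotonicity, within $[0,\lceil N/2\rceil-1]$ it is inherited from Theorem~\ref{thm:adj}, and within $[\lfloor N/2\rfloor+1,N]$ it follows by reflection: as $M$ increases, $N-M$ decreases, so by Theorem~\ref{thm:adj} both $a^{adj}_{N-M}$ and $b^{adj}_{N-M}$ decrease, and hence both $n-b^{adj}_{N-M}$ and $n-a^{adj}_{N-M}$ increase. The delicate case is the transition across the middle. For odd~$N$ this reduces to $a^{adj}_{(N-1)/2}+b^{adj}_{(N-1)/2}\leq n$, which I would prove in two steps: first, every AMO interval for $M\leq\lfloor N/2\rfloor$ satisfies $a_M+b_M\leq n$, using unimodality of $P_M$, the mode bound $\lfloor(n+1)(M+1)/(N+2)\rfloor\leq(n+1)/2$, and the MLR property that $P_{N-M}(x)/P_M(x)$ is increasing in $x$ (ruling out any $P_M$-maximizing interval that sits strictly right of $n/2$); second, the adjustment~\eqref{thm.adj.ints.def} preserves this inequality because its shifted endpoints are controlled by the original endpoints of neighboring indices satisfying the same bound. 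For even~$N$ one also needs the interlocking $a^{adj}_{N/2-1}\leq h_{\alpha/2}\leq n-b^{adj}_{N/2-1}$, which I would derive by comparing the adjusted AMO interval at $M=N/2-1$ with the symmetric $\alpha$-level interval at $M=N/2$ via MLR between $P_{N/2-1}$ and $P_{N/2}$; this interlocking is the main obstacle.

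For size-optimality, the cases $M\neq N/2$ follow from Theorem~\ref{thm:adj} (first case) and reflection (second case: any strictly shorter level-$\alpha$ interval for $M$ would reflect to one for $N-M$, contradicting \AO{} there). For $M=N/2$ with $N$ even, $[h_{\alpha/2},n-h_{\alpha/2}]$ is, by the definition of $h_{\alpha/2}$, the shortest \emph{symmetric} level-$\alpha$ interval about $n/2$. An asymmetric interval one unit shorter, namely $[h_{\alpha/2},n-h_{\alpha/2}-1]$---or, by $P_{N/2}$-symmetry, its reflection, of equal probability---has $P_{N/2}$-probability $1-2P_{N/2}(X<h_{\alpha/2})-P_{N/2}(h_{\alpha/2})$, which is $\geq 1-\alpha$ exactly under the stated condition; in that case $\mA^*$ at $M=N/2$ is one unit larger than \AO{} and $[a^*_{N/2},b^*_{N/2}-1]$ is itself \AO{}, while otherwise $[h_{\alpha/2},n-h_{\alpha/2}]$ is already \AO{}. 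Finally, $\mA^*$ is \AO{} among symmetric collections because any such collection's $M=N/2$ interval must have form $[c,n-c]$, whose shortest level-$\alpha$ instance is precisely~\eqref{N/2.sym.int}, restoring optimality at that index; the other indices inherit \AO{} from the earlier analysis.
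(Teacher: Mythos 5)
Your top-level decomposition---level $\alpha$ and symmetry by construction and by the reflection $n-X\sim\mbox{Hyper}(N-M,n,N)$, size-optimality for $M\ne N/2$ via Theorem~\ref{thm:adj} and reflection, and optimality at $N/2$ among symmetric competitors $[c,n-c]$ via $c\le h_{\alpha/2}$---is exactly the paper's. You have also correctly isolated the two facts on which the monotonicity transition across $N/2$ rests: $a_M+b_M\le n$ for $M<N/2$ (the paper's Lemma~\ref{lemma: n - b > a}) and the interlocking $a^{adj}_{N/2-1}\le h_{\alpha/2}\le n-b^{adj}_{N/2-1}$ (the paper's Lemma~\ref{lem:a.largest}). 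But both are left as sketches, and the arguments you indicate for them would not close as stated.

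First, for the claim that the adjustment preserves $a+b\le n$: when $M^*\in\mM_a$ the adjusted interval is $[a_{M'},\,b_{M^*}+(a_{M'}-a_{M^*})]$ with $a_{M'}=\oa_{M^*}$ for some $M'<M^*$, so the required inequality is $2a_{M'}+b_{M^*}-a_{M^*}\le n$. This does \emph{not} follow from the neighboring original intervals satisfying $a_M+b_M\le n$: Lemma~\ref{adjust one more} gives $b_{M'}-a_{M'}\le b_{M^*}-a_{M^*}$, so the shifted upper endpoint is at least $b_{M'}$ and the bound $a_{M'}+b_{M'}\le n$ points the wrong way. The paper's Lemma~\ref{lem:a*b*.mono} closes this by extending the \AMO{} family past $N/2$ by reflection and applying the endpoint bound of Lemma~\ref{the thm} at the reflected index $N-M'$, whose lower endpoint $n-b_{M'}\ge a_{M'}=\oa_{M^*}$; that reflection-and-extend step is the missing idea. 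Second, for the interlocking inequality, ``MLR between $P_{N/2-1}$ and $P_{N/2}$'' is not the tool that works: the paper derives $a_M\le h_{\alpha/2}$ for all $M\le N/2-1$ from the unimodality in $M'$ of $M'\mapsto P_{M'}([a_M,n-a_M])$ (Lemma~\ref{lem:uni.intvl}), using that this symmetric interval is level-$\alpha$ at both $M$ and $N-M$ and hence at $N/2$; and it separately needs $[a_{N/2},b_{N/2}]\subseteq[h_{\alpha/2},n-h_{\alpha/2}]$ (a pointwise-mass comparison via Lemma~\ref{larger}) so that $N/2\notin\mM_a\cup\mM_b$ and the chain $h_{\alpha/2}\le n-b^{adj}_{N/2}\le n-b^{adj}_{N/2-1}$ goes through---a step you do not address. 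Since you flag these precisely as ``the main obstacle'' without a workable argument, the proof is incomplete exactly where the theorem is hard.
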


Note that if $N$ is odd then the first two cases of \eqref{a*b*.def} cover all $M\in[N]$.

Before the proof of the theorem, we make a few comments about the $M=N/2$ interval~\eqref{N/2.sym.int} when $N$ is even. It is clearly the smallest symmetrical level-$\alpha$ acceptance interval for $H(N/2)$, and is \AO{}. Also, we have $h_{\alpha/2} \leq n/2$ since 
\begin{equation}\label{h<n/2}
P_{N/2}(X < \lfloor n/2 \rfloor + 1) = P_{N/2}(X \leq \lfloor n/2 \rfloor) \geq 1/2 > \alpha/2.
\end{equation}
 Finally, it is not necessary to use special calculations to get $h_{\alpha/2}$ since it is easily obtained from an \AMO{} interval $[a_{N/2}, b_{N/2}]$ by $h_{\alpha/2}=\min\{a_{N/2},n-b_{N/2}\}$. Note that if $[a_{N/2}, b_{N/2}]$ is already symmetrical, then \eqref{N/2.sym.int} is the same interval.

\begin{proof} Symmetry is by construction, and monotonicity is proved in Lemma~\ref{lem:a*b*.mono}. Theorem~\ref{thm:adj} establishes $\alpha$-optimality in the first case of \eqref{a*b*.def}, and to establish it in the second case, let $M>\lfloor N/2\rfloor$ and $X\sim\mbox{Hyper}(M,n,N)$. Then 
\begin{multline*}
P_M(X\in[a_M^*,b_M^*]) = P_M(X\in[n-b_{N-M}^*, n-a_{N-M}^*]) \\
= P_M(n-X\in[a_{N-M}^*, b_{N-M}^*]) \ge 1-\alpha,
\end{multline*}
this last by the first case of \eqref{a*b*.def} and since $n-X\sim\mbox{Hyper}(N-M,n,N)$; see Lemma~\ref{lem:hyp.basics}. For size optimality, we will show that $P_M([c,d])\ge 1-\alpha$ implies that $d-c\ge b_M^*-a_M^*$. By an argument similar to the one above, $[n-d,n-c]$ is level-$\alpha$ for testing $H(N-M)$ and thus no shorter than $[a_{N-M}^*,b_{N-M}^*]$, so
$$d-c=(n-c)-(n-d)\ge b_{N-M}^*-a_{N-M}^* = (n-a_M^*) - (n-b_M^*) = b_M^*-a_M^*,$$ as claimed. For $N$ even, the third case of \eqref{a*b*.def} is clearly level-$\alpha$, and any competing symmetrical interval for $H(N/2)$ must be of the form $[c,n-c]$. If this is level-$\alpha$ then $c\le h_{\alpha/2}$, thus it can be no shorter than \eqref{N/2.sym.int}.
\end{proof}

\section{Optimal symmetrical confidence intervals}\label{sec:length.CI}
\subsection{Confidence and acceptance sets}
For a set~$\mS$ let $2^{\mS}$ denote the power set of $\mS$, i.e., the set of all subsets of $\mS$. A \textit{confidence set with confidence level $1-\alpha$} is a function $\mC:[n]\To 2^{[N]}$ such that the coverage probability satisfies
\begin{equation*}
P_M(M\in\mC(X))\ge 1-\alpha\qmq{for all}M\in[N].
\end{equation*}
For short, we refer to such a $\mC$ as a $(1-\alpha)$-confidence set. If a confidence set~$\mC$ is interval-valued (i.e., for all $x\in[n]$, $\mC(x)$ is an interval) we call it a \textit{confidence interval}. A confidence set $\mC$ is \textit{symmetrical} if
\begin{equation}\label{CI.adm.def}
\mC(x)=N-\mC(n-x) \qm{for all $x\in[n]$.}
\end{equation} Here, for a set~$\mS$, the notation $N-\mS$ means $\{N-s:\; s\in\mS\}$. Symmetry~\eqref{CI.adm.def} is an equivariance condition requiring that the confidence set is reflected about $N/2$ when the data is reflected about $n/2$. See also Section~\ref{sec:ex.comp} for how this definition compares with the regularity conditions of W.~Wang~\citeyearpar{Wang15}.

Similarly, we shall denote a level-$\alpha$ acceptance set by a function $\mA: [N]\To 2^{[n]}$ such that
\begin{equation*}
P_M(X\in\mA(M))\ge 1-\alpha\qmq{for all}M\in[N],
\end{equation*}
and call an interval-valued (i.e., for all $M\in[N]$, $\mA(M)$ is an interval) acceptance set an acceptance interval\footnote{Note that whereas above we referred to an expression like \eqref{alg1.N/2} as a set of acceptance intervals, we will now call it \emph{an} acceptance interval (singular).  This is to coincide with our terminology for a confidence set, as well as avoid cumbersome phrases like ``a set of acceptance sets.''} and write $\mA(M)=[a_M,b_M]$, or similar.

We also need to generalize the concept of symmetry from \eqref{acc.adm.def} to handle general sets, so we say that an acceptance set~$\mA$ is \textit{symmetrical} if 
\begin{equation*}
\mA(M) = n - \mA(N-M)\qmq{for all}M\in[N].
\end{equation*} 
This says that the set is equivariant with respect to reflections $M\mapsto N-M$, and specializes to \eqref{acc.adm.def} for intervals.

\subsection{Inverted confidence sets}

We will construct confidence sets that are inversions of acceptance sets, and vice-versa. If $\mA$ is a level-$\alpha$ acceptance set then
\begin{equation}\label{CS.gen}
\mC_{\mA}(x)=\{M\in[N]:\; x\in\mA(M)\}
\end{equation} is a $(1-\alpha)$-confidence set.  Conversely, given a $(1-\alpha)$-confidence set~$\mC$,
\begin{equation*}
\mA_{\mC}(M)=\{x\in[n]:\; M\in\mC(x)\}
\end{equation*} is a level-$\alpha$ acceptance set; see, for example, \citet[][Chapter~9.3]{Rice07}. Moreover, $\mC_{\mA_{\mC}} =\mC$ and $\mA_{\mC_{\mA}} = \mA$, which are immediate from the definitions. However, neither $\mA$ nor $\mC$ being interval-valued guarantees that its inversion is. 

We will evaluate confidence and acceptance sets by their \textit{total size}, which we define as the sum of the cardinalities of each set: Recalling that $|\cdot|$ denotes set cardinality, define the \textit{total size} of acceptance and confidence sets to be
\begin{equation*}
|\mA|=\sum_{M=0}^N|\mA(M)| \qmq{and}|\mC| = \sum_{x=0}^n |\mC(x)|.
\end{equation*} 
If $\mA(M)=[a_M,b_M]$ is an acceptance interval then
\begin{equation*}
|\mA|=\sum_{M=0}^N|[a_M,b_M]|=\sum_{M=0}^N (b_M-a_M+1),
\end{equation*}  and similarly for a confidence interval~$\mC$.

Lemma~\ref{big theorem} records some basic facts about inverted confidence sets.  
 
\begin{lemma}\label{big theorem}
Let $\mA$ be an acceptance set. Then the following hold.
\begin{enumerate}
\item\label{lem.part:tot.len}  
\begin{equation}\label{lenCA=lenA}
|\mC_{\mA}| = |\mA|.
\end{equation}
\item\label{lem.part:Aad.Cad} $\mC_{\mA}$ is symmetrical if and only if $\mA$ is symmetrical.
\item\label{lem.part:Amn.Cint} If, in addition, $\mA(M)=[a_M,b_M]$ is interval-valued and the endpoint sequences $\{a_M\}$ and $\{b_M\}$ are nondecreasing, then $\mC_{\mA}$ is interval-valued.
\end{enumerate}
\end{lemma}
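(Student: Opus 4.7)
The plan is to treat all three parts as direct consequences of the swap-of-quantifiers duality built into the definition $x \in \mA(M) \iff M \in \mC_{\mA}(x)$, together with the monotonicity assumption in part~3.

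For part~\ref{lem.part:tot.len}, I would write the total size as a double indicator sum and swap the order of summation. Specifically,
\[
|\mA| = \sum_{M=0}^{N}\sum_{x=0}^{n}\mathbf{1}[x\in\mA(M)] = \sum_{x=0}^{n}\sum_{M=0}^{N}\mathbf{1}[M\in\mC_{\mA}(x)] = |\mC_{\mA}|,
\]
using only the defining equivalence of $\mC_{\mA}$. This is essentially a one-line Fubini calculation and presents no real obstacle.

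For part~\ref{lem.part:Aad.Cad}, I would chase memberships through both symmetry definitions. Assuming $\mA$ is symmetrical, for any $x\in[n]$ and any $M\in[N]$,
\[
M\in\mC_{\mA}(x)\iff x\in\mA(M)\iff x\in n-\mA(N-M)\iff n-x\in\mA(N-M)\iff N-M\in\mC_{\mA}(n-x),
\]
which is exactly $\mC_{\mA}(x)=N-\mC_{\mA}(n-x)$. The reverse implication runs along the same chain of equivalences starting from symmetry of $\mC_{\mA}$, or alternatively one can invoke the involutive identity $\mA_{\mC_{\mA}}=\mA$ noted in the text.

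For part~\ref{lem.part:Amn.Cint}, the key observation is that
\[
\mC_{\mA}(x)=\{M\in[N]:a_M\le x\le b_M\}=\{M:a_M\le x\}\cap\{M:b_M\ge x\}.
\]
Because $\{a_M\}$ is nondecreasing in $M$, the first set is a down-set in $[N]$ and hence has the form $[0,M^*]$ (possibly empty); because $\{b_M\}$ is nondecreasing, the second set is an up-set and has the form $[M_*,N]$ (possibly empty). Their intersection is either empty or the integer interval $[M_*,M^*]$, so $\mC_{\mA}(x)$ is an interval. The only thing to be careful about is the convention that the empty set counts as a (degenerate) interval, which is consistent with the integer-interval notation adopted in Section~2. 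This monotonicity-to-intervals step is the part most likely to need a careful statement, but the argument itself is elementary and presents no real obstacle.
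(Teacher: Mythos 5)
Your proposal is correct and follows essentially the same route as the paper: part~1 by swapping the order of a double indicator sum, part~2 by chasing memberships through the two symmetry definitions, and part~3 by exploiting monotonicity of the endpoint sequences. Your down-set/up-set phrasing of part~3 is just a repackaging of the paper's betweenness argument (if $M_1<M<M_2$ lie in $\mC_{\mA}(x)$ then $a_M\le a_{M_2}\le x$ and $x\le b_{M_1}\le b_M$), so there is no substantive difference.
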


\begin{proof} Denote $\mC_{\mA}$ simply by $\mC$. For part~\ref{lem.part:tot.len}, letting $\bm{1}\{\cdot\}$ denote the indicator function,
\begin{multline*}
|\mC|=\sum_{x\in[n]}|\mC(x)|=\sum_{x\in[n]}|\{M\in[N]: x\in\mA(M)\}|\\
=\sum_{x\in[n],\; M\in[N]}\bm{1}\{x\in\mA(M)\}=\sum_{M\in[N]}|\{x\in[n]: x\in\mA(M)\}|\\
=\sum_{M\in[N]}|\mA(M)| = |\mA|.
\end{multline*}

For part~\ref{lem.part:Aad.Cad}, if $\mA$ is symmetrical,
\begin{align*}
\mC(x)&=\{M\in[N]:\; x\in\mA(M)\}\\
&= \{M\in[N]:\; x\in n-\mA(N-M)\}\\
&= \{M\in[N]:\; n-x\in\mA(N-M)\}\\
&= \{N-M\in[N]:\; n-x\in\mA(M)\}\\
&= N-\{M\in[N]:\; n-x\in\mA(M)\}\\
&=N-\mC(n-x).
\end{align*} A similar argument shows the converse.

For part~\ref{lem.part:Amn.Cint}, fix arbitrary $x\in[n]$ and to  show that $\mC(x)$ is an interval, suppose that $M_1, M_2\in\mC(x)$ and we will show that $M\in\mC(X)$ for all $M_1<M<M_2$. Since $M_2\in\mC(x)$, $x\in[a_{M_2}, b_{M_2}]$ so $x\ge a_{M_2}\ge a_M$ by monotonicity.  By a similar argument, $x\le b_{M_1}\le b_M$, thus $x\in[a_M,b_M]$ so $M\in\mC(x)$.

\end{proof}

\subsection{Size optimality}\label{sec:C*.sz.opt}
We say that a confidence set~$\mC$ is \textit{size-optimal} among a collection of confidence sets if it achieves the minimum total size in that collection. The results in this section establish size-optimality of $\mC^*=\mC_{\mA^*}$, where  $\mA^*=\{[a_M^*,b_M^*]:\; M\in[N]\}$ denotes the result of applying Theorem~\ref{thm:a*b*} to any \AMO{} acceptance intervals~$\{[a_M,b_M]:\; M\in[N]\}$. Thus, $\mA^*$ could be the intervals given by Algorithm~\ref{alg:AO.ints}, or the result of starting with any other \AMO{}  intervals. Whatever the choice of $\mA^*$, note that $\mC^*$ is a symmetrical, $(1-\alpha)$-confidence interval by Lemma~\ref{big theorem}

Theorems~\ref{thm:len.opt.set} and \ref{thm:len.opt.int}, which follow, are the main results of the paper.  Theorem~\ref{thm:len.opt.set} is the more powerful of the two in that it gives wide conditions under which $\mC^*$ is size-optimal among symmetrical confidence \emph{sets} (not just intervals) and shows that, even in the worst case, the total size $|\mC^*|$ is at most $1$ point larger than the optimal set. Theorem~\ref{thm:len.opt.int} specializes to intervals and gives conditions for optimality there. In particular, it shows that $\mC^*$ is size-optimal among all symmetrical non-empty (i.e., $\mC(x)\ne\emptyset$ for all $x$) intervals, which are usually preferred in practice.  

\begin{theorem}\label{thm:len.opt.set}
Let $\mC^*$ be as defined above and $\fC_S$ the class of all symmetrical, $(1-\alpha)$-confidence sets. Then $\mC^*$ is size-optimal in $\fC_S$, i.e.,
\begin{equation*}
|\mC^*|= \min_{\mC\in\fC_S} |\mC|,
\end{equation*}
if either of the following holds:
\begin{enumerate}[label= (\alph*)]
\item $n$ or $N$ is odd;
\item\label{part:set.nonopt} $n,N$ are even and there is no size-optimal $\mC\in\fC_S$ such that $|\mA_{\mC}(N/2)|$ is even. If $n$, $N$, and  $|\mA_{\mC}(N/2)|$ are all even for some size-optimal $\mC\in\fC_S$, then 
\begin{equation}\label{le.set.opt+1}
|\mC^*|\le \min_{\mC\in\fC_S} |\mC|+1.
\end{equation}
\end{enumerate}
In addition, $\mC^*$ is size-optimal among all $\mC\in\fC_S$ such that $\mA_{\mC}$ are all intervals.
\end{theorem}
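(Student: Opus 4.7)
The plan is to convert the question about symmetric confidence sets into one about symmetric level-$\alpha$ acceptance sets via Lemma~\ref{big theorem}, and then exploit symmetry to reduce everything to a pointwise comparison, with a single delicate case at $M=N/2$. For any $\mC\in\fC_S$, Lemma~\ref{big theorem} gives $|\mC|=|\mA_\mC|$ and that $\mA_\mC$ is a symmetric level-$\alpha$ acceptance set, and conversely any such $\mA$ yields $\mC_\mA\in\fC_S$ of the same total size. So the problem is equivalent to minimizing $|\mA|$ over symmetric level-$\alpha$ acceptance sets $\mA$. The symmetry constraint $\mA(N-M)=n-\mA(M)$ forces $|\mA(M)|=|\mA(N-M)|$, yielding the decomposition
\[
|\mA| \;=\; 2\sum_{M=0}^{\lceil N/2\rceil -1}|\mA(M)| \;+\; \mathbf{1}\{N\text{ even}\}\,|\mA(N/2)|.
\]
For each $M<N/2$, Theorem~\ref{thm:a*b*} states that $\mA^*(M)$ is \AO{}, so $|\mA(M)|\ge|\mA^*(M)|$, and thus the paired sum contributes at least $2\sum_{M<N/2}|\mA^*(M)|$. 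It remains only to analyze the minimum cardinality of a symmetric level-$\alpha$ acceptance set for $H(N/2)$, which is relevant only when $N$ is even.

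For that $M=N/2$ analysis I will use that $P_{N/2}$ is symmetric about $n/2$ (Lemma~\ref{lem:hyp.basics}) and unimodal (Lemma~\ref{lem:hyp.uni.x}). Any symmetric subset of $[n]$ is a disjoint union of reflection orbits $\{x,n-x\}$ and, if $n$ is even, possibly the fixed point $\{n/2\}$. A greedy argument (pick highest-probability orbits first) shows that the maximum-probability symmetric set of even cardinality $2j$ is $[\lceil n/2\rceil-j,\lfloor n/2\rfloor+j]$ when $n$ is odd, or $[n/2-j,n/2+j]\setminus\{n/2\}$ when $n$ is even, and of odd cardinality $2j+1$ (only possible when $n$ is even) is $[n/2-j,n/2+j]$. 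Arguing as at the end of the proof of Theorem~\ref{thm:a*b*}, the defining property of $h_{\alpha/2}$ together with the symmetry of $P_{N/2}$ shows that $[h_{\alpha/2},n-h_{\alpha/2}]=\mA^*(N/2)$ is the smallest level-$\alpha$ set of its own parity, while when $n$ is even the smallest level-$\alpha$ set of the \emph{opposite} parity is $\mA^*(N/2)\setminus\{n/2\}$, of cardinality $|\mA^*(N/2)|-1$, which is level-$\alpha$ iff $P_{N/2}(n/2)\le P_{N/2}(\mA^*(N/2))-(1-\alpha)$.

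Combining the decomposition with this analysis covers all cases. If $N$ is odd, the $|\mA(N/2)|$ term is absent and $|\mA|\ge|\mA^*|$ follows immediately. If $N$ is even but $n$ is odd, every symmetric subset of $[n]$ has even cardinality and the opposite-parity alternative does not exist, so the smallest symmetric level-$\alpha$ acceptance set for $H(N/2)$ is exactly $\mA^*(N/2)$; together with the paired sum this proves part~(a). When $n,N$ are both even, the minimum total over symmetric $\mA$ equals $|\mA^*|$ if $\mA^*(N/2)\setminus\{n/2\}$ fails to be level-$\alpha$ (in which case every size-optimal $\mA_\mC(N/2)$ must be odd-sized and equal in size to $|\mA^*(N/2)|$), and equals $|\mA^*|-1$ if $\mA^*(N/2)\setminus\{n/2\}$ is level-$\alpha$ (realized by an even-sized $\mA_\mC(N/2)$). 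This is exactly part~(b) and the within-one bound~\eqref{le.set.opt+1}. The final claim of the theorem follows from the same decomposition, since a symmetric acceptance \emph{interval} at $M=N/2$ must be of the form $[c,n-c]$ and is no shorter than $\mA^*(N/2)$ by the argument at the end of the proof of Theorem~\ref{thm:a*b*}, closing the one potential gap.

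The main obstacle will be the $M=N/2$ case with $n,N$ both even: correctly determining, via symmetry and unimodality of $P_{N/2}$, the maximum-probability symmetric set of each cardinality (in particular ruling out more exotic symmetric alternatives), and then translating the resulting parity dichotomy into the precise case distinction of part~(b).
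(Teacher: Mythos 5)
Your proof is correct and follows essentially the same route as the paper's: invert to acceptance sets via Lemma~\ref{big theorem}, bound $|\mA_{\mC}(M)|$ below by $|\mA^*(M)|$ for $M\ne N/2$ using $\alpha$-optimality, and resolve $M=N/2$ by a symmetry-and-parity analysis of centered (possibly punctured) intervals, which is precisely the content of the paper's Lemmas~\ref{lemma: interval 2 } and \ref{lem:Aodd.C*opt}. The only cosmetic differences are that you pair $M$ with $N-M$ so that the \AO{} property is invoked only for $M<N/2$ (where the paper instead uses Lemma~\ref{lem:A*<ACM} for all $M\ne N/2$), and that you phrase the $M=N/2$ step as a greedy orbit argument rather than the paper's explicit interval construction.
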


The proofs of Theorems~\ref{thm:len.opt.set}and \ref{thm:len.opt.int} utilize some auxiliary lemmas, stated and proved in Appendix~\ref{sec:len.opt.aux}. See also Example~\ref{ex:AN/2.even} for an instance of $\mC^*$ failing to be optimal under conditions satisfying part~\ref{part:set.nonopt}.

\begin{proof}[Proof of Theorem~\ref{thm:len.opt.set}]
First suppose $N$ is odd, and let $\mC\in\fC_S$ be arbitrary;  we will show that $|\mC^*|\le|\mC|$.  Since $N/2$ is not an integer, by Lemma~\ref{lem:A*<ACM} we have $|\mA^*(M)|\le|\mA_{\mC}(M)|$ for all $M\in[N]$ thus, using Lemma~\ref{big theorem},
\begin{equation}\label{C>C*.set}
|\mC|=|\mA_{\mC}|=\sum_{M=0}^N |\mA_{\mC}(M)|\ge \sum_{M=0}^N |\mA^*(M)| = |\mA^*|=|\mC^*|,
\end{equation} as claimed.

Now suppose $N$ is even and let $\mC\in\fC_S$ be size-optimal.   By Lemma~\ref{lem:A*<ACM} we have $|\mA^*(M)|\le|\mA_{\mC}(M)|$ for all $M\in[N]$ other than $M=N/2$. If $n$ or $|\mA_{\mC}(N/2)|$ is odd, then by  Lemma~\ref{lemma: interval 2 } there is an interval~$[a, n-a]$ such that $n - 2a + 1 = |\mA_{\mC}(N/2)|$  and $P_{N/2}( [a, n - a]) \geq P_{N/2}( \mA_{\mC}(N/2))$. Since $\mA^*(N/2)=[a_{N/2}^*, b_{N/2}^*] = [a_{N/2}^*, n  - a_{N/2}^*]$ is the shortest symmetrical acceptance interval for $M = N/2$, we have  
\begin{equation*}
|\mA^*(N/2)|=b_{N/2}^* - a_{N/2}^*+1\le n-2a +1= |\mA_{\mC}(N/2)|.
\end{equation*} This, with the above inequality for the $M\ne N/2$ cases, establishes \eqref{C>C*.set} in this case.

The remaining case -- when $N$, $n$, and $|\mA_{\mC}(N/2)|$ are all even -- is handled by Lemma~\ref{lem:Aodd.C*opt}, recalling that $\mC$ was size-optimal to establish \eqref{le.set.opt+1}.

For the final statement in the theorem, for any such $\mC$, $\mA_{\mC}$ is symmetrical and thus has total size at least $|\mA^*|$, so $|\mC|=|\mA_{\mC}|\ge  |\mA^*|=|\mC^*|$.
\end{proof}

\begin{theorem}\label{thm:len.opt.int}
Let $\mC^*$ be as defined above and $\fC_I$ the class of all symmetrical, $(1-\alpha)$-confidence intervals. Then $\mC^*$ is size-optimal in $\fC_I$, i.e.,
\begin{equation*}
|\mC^*|= \min_{\mC\in\fC_I} |\mC|,
\end{equation*}
if either of the following holds:
\begin{enumerate}[label= (\alph*)]
\item\label{thm.part:len.opt.odd.int} $n$ or $N$ is odd;
\item\label{thm.part:len.opt.empty} $n,N$ are even and there is no size-optimal $\mC\in\fC_I$ such that 
\begin{equation}\label{C'=empty}
\mC(n/2)=\varnothing.
\end{equation}  A sufficient condition for $\mC^*$ to be size-optimal in this case is that 
\begin{equation}\label{suff.CI.opt}
\alpha< \left.{N/2\choose n/2}^2\right/{N\choose n}.
\end{equation}

\end{enumerate}
 In particular, $\mC^*$ is size-optimal  among all nonempty $\mC\in\fC_I$ regardless of the parity of $n, N$.
\end{theorem}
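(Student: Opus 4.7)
The plan is to deduce Theorem~\ref{thm:len.opt.int} from Theorem~\ref{thm:len.opt.set} by exploiting the containment $\fC_I\subseteq\fC_S$ together with the fact that $\mC^*\in\fC_I$ (by Lemma~\ref{big theorem}, parts~\ref{lem.part:Aad.Cad} and~\ref{lem.part:Amn.Cint}, since $\mA^*$ is symmetrical with nondecreasing endpoint sequences by Theorem~\ref{thm:a*b*}). Part~(\ref{thm.part:len.opt.odd.int}) is then immediate: when $n$ or $N$ is odd, Theorem~\ref{thm:len.opt.set}(a) yields $|\mC^*|=\min_{\mC\in\fC_S}|\mC|\le \min_{\mC\in\fC_I}|\mC|$, and $\mC^*\in\fC_I$ gives the reverse inequality.

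For part~(\ref{thm.part:len.opt.empty}) with $n,N$ both even, I would fix a size-optimal $\mC\in\fC_I$ and mimic the proof of Theorem~\ref{thm:len.opt.set} to show $|\mC^*|\le|\mC|$. The key observation that removes the potential ``$+1$'' slack is that the hypothesis $\mC(n/2)\ne\varnothing$ forces $|\mA_{\mC}(N/2)|$ to be odd: by~\eqref{CI.adm.def} applied at $x=n/2$, $\mC(n/2)=N-\mC(n/2)$ is a nonempty integer interval symmetric about $N/2$, hence contains the integer $N/2$; therefore $n/2\in\mA_{\mC}(N/2)$, and since $\mA_{\mC}$ is itself symmetrical about $n/2$ (Lemma~\ref{big theorem}, part~\ref{lem.part:Aad.Cad}), $\mA_{\mC}(N/2)$ is a symmetric set containing its center, so its cardinality is odd. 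With this in hand, Lemma~\ref{lem:A*<ACM} gives $|\mA^*(M)|\le|\mA_{\mC}(M)|$ for $M\ne N/2$, while Lemma~\ref{lemma: interval 2 } together with the minimality of $[a_{N/2}^*,n-a_{N/2}^*]$ among symmetric level-$\alpha$ acceptance intervals for $H(N/2)$ guaranteed by Theorem~\ref{thm:a*b*} handles $M=N/2$. Summing and invoking Lemma~\ref{big theorem}, part~\ref{lem.part:tot.len}, yields $|\mC^*|\le|\mC|$.

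For the sufficient condition~\eqref{suff.CI.opt}, I would observe that if some $\mC\in\fC_I$ had $\mC(n/2)=\varnothing$, then at $M=N/2$,
\begin{equation*}
P_{N/2}(N/2\in\mC(X)) \le 1-P_{N/2}(X=n/2) = 1-\left.{N/2\choose n/2}^2\right/{N\choose n} < 1-\alpha,
\end{equation*}
contradicting coverage; hence the hypothesis of part~(\ref{thm.part:len.opt.empty}) is automatic under~\eqref{suff.CI.opt}. The final claim, optimality among nonempty $\mC\in\fC_I$ regardless of parity, then follows because odd $n$ or $N$ falls into part~(\ref{thm.part:len.opt.odd.int}), and in the even-even case the nonemptiness requirement $\mC(x)\ne\varnothing$ for all $x$ in particular gives $\mC(n/2)\ne\varnothing$, placing us in part~(\ref{thm.part:len.opt.empty}). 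I do not anticipate a serious obstacle beyond identifying which auxiliary lemma to invoke at $M=N/2$; the odd-cardinality observation is the only substantive ingredient not already present in the proof of Theorem~\ref{thm:len.opt.set}.
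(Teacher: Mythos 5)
Your proposal is correct and follows essentially the same route as the paper: part (a) via $\fC_I\subseteq\fC_S$, the sufficient condition via the coverage bound at $M=N/2$, and, crucially, the same observation that symmetry plus interval-ness plus $\mC(n/2)\ne\varnothing$ forces $N/2\in\mC(n/2)$ and hence $|\mA_{\mC}(N/2)|$ odd. The only cosmetic difference is that you re-derive the summing step inline, whereas the paper packages it as Lemma~\ref{lem:Aodd.C*opt} (whose odd case rests on exactly the lemmas you cite).
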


We comment that the scenario~\eqref{C'=empty} seems to be particularly rare since $\mC(n/2)$ is typically the widest confidence interval.  Thus, even allowing empty intervals, Theorem~\ref{thm:len.opt.int} establishes size optimality of $\mC^*$ among intervals for most intents and purposes, and \eqref{le.set.opt+1} holds in any case.  However, it may be possible to construct an adversarial example with that property.

\begin{proof}[Proof of Theorem~\ref{thm:len.opt.int}] Part \ref{thm.part:len.opt.odd.int} is a consequence of Theorem~\ref{thm:len.opt.set} since $\fC_I\subseteq \fC_S$.

Assume $N$ and $n$ are even, and there is no size-optimal $\mC$ satisfying \eqref{C'=empty}. Let $\mC$ be any size-optimal interval and since $\mC(n/2) \neq \varnothing$, there is some $M \in \mC(n/2)$. Because $\mC(n/2)$ is symmetrical, $N - M \in \mC(n/2)$, and because $\mC(n/2)$ is an interval, $N/2 \in \mC(n/2)$ since it lies between $M$ and $N - M$. This implies that $n/2\in\mA_{\mC}(N/2)$, which is symmetrical about $n/2$.  Using these facts,
\begin{align*}
|\mA_{\mC}(N/2)| &= 2|\{x \in \mA_{\mC}(N/2) \mid x < n/2\}| + |\{n/2\}| \\
&= 2|\{x \in \mA_{\mC}(N/2) \mid x < n/2\}| + 1,
\end{align*}
an odd number. We then have $|\mC^*|\le |\mC|$ by Lemma~\ref{lem:Aodd.C*opt}.

To see that \eqref{suff.CI.opt} is sufficient, suppose there is a $\mC$ with $\mC(n/2)= \varnothing$.  Then for $M=N/2$, we have 
\begin{equation*}
\alpha\ge P_M(M\not\in\mC(X))\ge P_M(X=n/2)=\left.{N/2\choose n/2}^2\right/{N\choose n}.
\end{equation*} 
\end{proof}

\section{Examples and comparisons}\label{sec:ex.comp}
In this section we show examples of  our proposed method~$\mC^*$ using Algorithm~\ref{alg:AO.ints} as the acceptance interval~$\mA^*$,  and give some comparisons with other methods. All calculations of our method were performed using the \textsf{R} package \texttt{hyperMCI}, available at \url{github.com/bartroff792/hyper}.

For comparisons we focus on \textit{exact} methods with guaranteed coverage probability. A standard method for producing a $(1-\alpha)$-confidence interval for $M$ is the so-called \emph{method of pivoting the c.d.f.}\footnote{We have heard this method alternatively called the \emph{quantile method} and the \emph{method of extreme tails}.} giving $\mC_{Piv}(x) = [L_{Piv}(x),U_{Piv}(x)]$ where, for fixed nonnegative $\alpha_1+\alpha_2=\alpha$,  
\begin{align*}
L_{Piv}(x)&=\min\{M\in[N]: P_M(X\ge x)> \alpha_1\},\\
U_{Piv}(x)&=\max\{M\in[N]: P_M(X\le x)> \alpha_2\}.
\end{align*}
See \citet{Buonaccorsi87}, \citet[][Chapter~9]{Casella02}, or \citet{Konijn73}. Taking $\alpha_1=\alpha_2=\alpha/2$ is a common choice, and all our calculations of $\mC_{Piv}$ below use this.

W.~Wang~\citeyearpar{Wang15} proposed a method producing a $(1-\alpha)$-confidence interval for $M$, which we denote by $\mC_W$, that cycles through the intervals $\mC_{Piv}(x)$, shrinking the intervals where possible while checking that coverage probability is maintained. The algorithm can require multiple passes through the intervals, calculating the coverage  probability for all $M\in[N]$ multiple times, and is therefore computationally intensive. We compare the computational times of $\mC_W$ and $\mC^*$ in Examples~\ref{ex: time and length} and \ref{ex:CW.time}.  All calculations of $\mC_W(x)$ were performed using that author's \texttt{R} code.

 Although W.~Wang proves that a 1-sided version of his algorithm produces size-optimal intervals (among 1-sided intervals), it is not claimed that $\mC_W$ is size-optimal. Since $\mC_W$ produces nonempty intervals we know that $|\mC^*|\le |\mC_W|$ by Theorem~\ref{thm:len.opt.int}. In the following example we compare $\mC^*$ with $\mC_W$ in terms of both size and computational time, and  indeed exhibit a setting where $|\mC^*|<|\mC_W|$. We also note that the regularity conditions assumed in W.~Wang's results are  slightly more restrictive than our symmetry condition~\eqref{CI.adm.def}, which W.~Wang calls a ``natural restriction,''  by including two additional requirements that
 both sequences of endpoints of $\mC_W(x)$ be nondecreasing in $x$, and any sub-interval of $\mC_W(x)$ must have confidence level strictly less than $1-\alpha$. Our $\mC^*$ satisfies these additional properties, and see also Figure~\ref{ci.us.500.100.05} for an example of monotonicity of $\mC^*$. However,  we do not require them of the confidence sets considered so that our optimality results apply to a broader class. 

\begin{ex}\label{ex: time and length} We compare $\mC^*$, $\mC_{Piv}$, and $\mC_W$ in the setting $\alpha = 0.05$, $N=500$, and $n=10,20,30, \ldots, 490$.   Figure~\ref{ci.us.500.100.05} plots the $\mC^*$ intervals as vertical bars for the $n=100$ case.    The $\mC^*$ intervals are much shorter than the $\mC_{Piv}$ intervals in this setting, and Figure~\ref{length.us.pivot}  shows the differences in size $|\mC_{Piv}| - |\mC^*|$ for $n=10, 20,\ldots,490$ which are substantial; all the $\mC^*$ intervals are at least $200$ points shorter than their corresponding $\mC_{Piv}$ intervals, and some are as many as $260$ points shorter.  These differences are also sizable fractions of the largest possible range $[0,N] = [0,500]$.

\begin{figure}[!ht]
    \centering
    \caption{Confidence intervals $\mC^*(x)$ for $\alpha = 0.05$, $N = 500$, $n = 100$, and $x=0,1,\ldots,100$.} \label{ci.us.500.100.05}
\includegraphics[width=14cm]{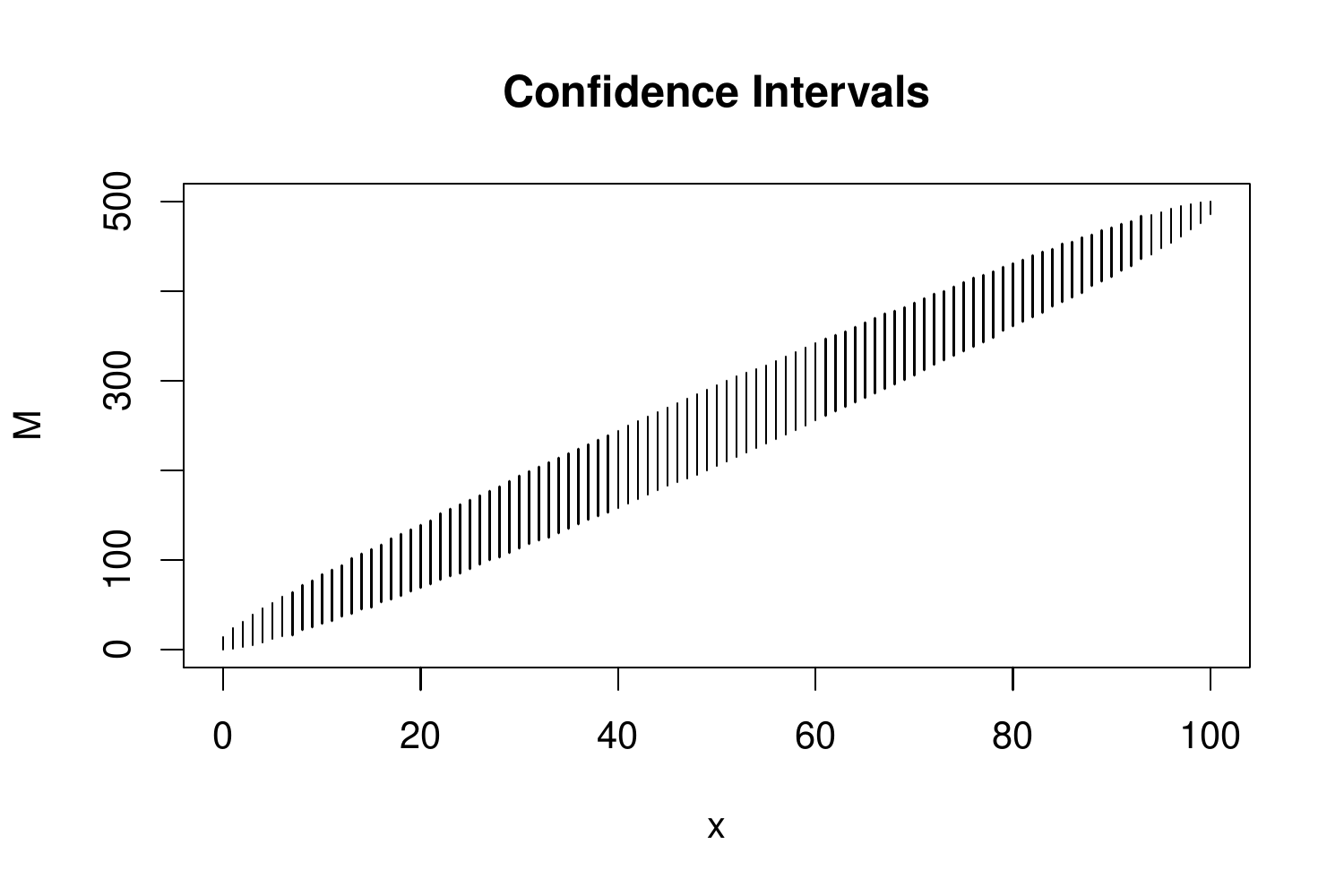} 
\end{figure}

\begin{figure}[!ht]
    \centering
    \caption{The differences in total size $|\mC_{Piv}| - |\mC^*|$, for $N = 500$, $\alpha = 0.05$, and $n=10, 20,\ldots,490$.} \label{length.us.pivot}
\includegraphics[width=12cm]{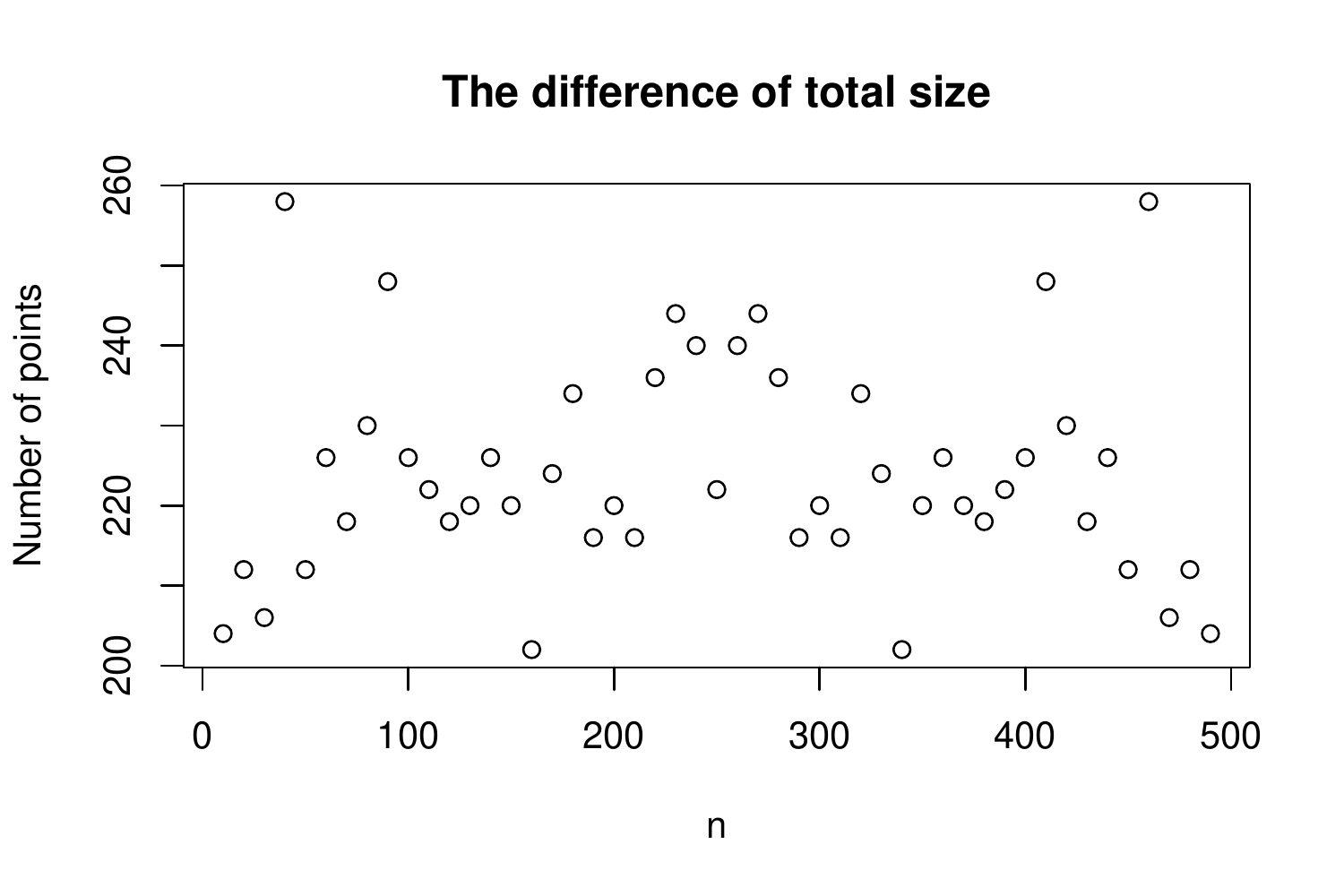} 
\end{figure}

The $\mC_W$ intervals are very similar to $\mC^*$ and so are not shown in Figures~\ref{ci.us.500.100.05}-\ref{length.us.pivot}. In fact, the sizes $|\mC_W|=|\mC^*|$ are exactly equal for all values of $n$ considered, except $n=100$.  To examine this case more closely, we give these confidence intervals explicitly in Tables~\ref{table:us.ne.Wang.500.100} and \ref{table:Wang.ne.us.500.100}. Examining these tables shows very similar, but slightly different intervals, with neither method dominating the other.  For example, $|\mC^*(0)|=|[0,14]|<|[0,16]| = |\mC_W(0)|$ and 
\begin{equation}\label{ex.13}
|\mC^*(13)|=|[40,102]|>|[40,101]| = |\mC_W(13)|. 
\end{equation}
Totaling the sizes gives $|\mC^*|=7129<7131 = |\mC_W|$, showing that the $\mC_W$ intervals are indeed non-optimal. One property of our method is that it does not necessarily producing intervals that are sub-intervals of $\mC_{Piv}$, which $\mC_W$ always does since it begins with these intervals before iteratively shrinking them. For example, in this setting $\mC_{Piv}(13)=[39,101]$ which, by \eqref{ex.13}, contains $\mC_W(13)$ but not $\mC^*(13)$.

\begin{table}[!ht]
\caption{Confidence intervals given by $\mC^*(x)=[L(x),U(x)]$ for $\alpha = 0.05$, $N = 500$, $n = 100$, and $x=0,1,\ldots,100$.}
\begin{tabular}{l|llllllllllllll}
\hline
\hline
$x$  & 0 & 1 & 2  & 3  & 4  & 5  & 6  & 7 & 8 & 9 & 10 & 11  & 12 & 13  \\ \hline
$L(x)$ &0&1&3&5&8&12&15&16&22&25&29&32&37&40\\
$U(x)$ &14&24&31&39&46&52&59&64&72&77&84&89&94&102\\
\hline
$x$  & 14& 15 &16&17&18&19&20&21&22&23&24&25&26&27 \\ \hline
$L(x)$ &45&47 &53&56&60&65&69&73&78&82&85&90&95&100\\
$U(x)$ &107&112 &117&124&129&134&139&144&152&157&162&167&172&177\\
\hline
$x$ &28&29&30&31&32&33&34&35&36&37&38&39&40&41 \\ \hline
$L(x)$ &103&108&113&118&122&125&130&135&140&145&149&153&158&163\\
$U(x)$ &182&188&194&199&204&209&214&219&224&229&234&239&244&250\\
\hline
$x$&42&43&44&45&46&47&48&49&50&51&52&53&54&55 \\ \hline
$L(x)$ &168&173&178&183&187&191&195&200&205&210&215&220&225&230\\
$U(x)$ &255&260&265&270&275&280&285&290&295&300&305&309&313&317\\
\hline
$x$&56&57&58&59&60&61&62&63&64&65&66&67&68&69 \\ \hline
$L(x)$ &235&240&245&250&256&261&266&271&276&281&286&291&296&301\\
$U(x)$ &322&327&332&337&342&347&351&355&360&365&370&375&378&382\\
\hline
$x$&70&71&72&73&74&75&76&77&78&79&80&81&82&83 \\ \hline
$L(x)$ &306&312&318&323&328&333&338&343&348&356&361&366&371&376\\
$U(x)$ &387&392&397&400&405&410&415&418&422&427&431&435&440&444\\
\hline
$x$&84&85&86&87&88&89&90&91&92&93&94&95&96&97 \\ \hline
$L(x)$ &383&388&393&398&406&411&416&423&428&436&441&448&454&461\\
$U(x)$ &447&453&455&460&463&468&471&475&478&484&485&488&492&495\\
\hline
$x$&98&99&100  \\ \hline
$L(x)$ &469&476&486 \\
$U(x)$ &497&499&500 \\
\hline
 \multicolumn{15}{c}{ Computational time: 0.0019 min. Total size: 7129}\\
\hline
\end{tabular}
\label{table:us.ne.Wang.500.100}
\end{table}

\begin{table}[!ht]
\caption{Confidence intervals given by W.~Wang's~\citeyearpar{Wang15} method~$\mC$  for $N = 500$, $n = 100$, and $\alpha = 0.05$.}
\begin{tabular}{l|llllllllllllll}
\hline
\hline
$x$  & 0 & 1 & 2  & 3  & 4  & 5  & 6  & 7 & 8 & 9 & 10 & 11  & 12 & 13  \\ \hline
$L(x)$ &0&1&3&5&8&12&15&17&22&25&29&33&37&40\\
$U(x)$ &16&24&32&39&46&52&59&65&72&78&84&90&95&101\\
\hline
$x$  & 14& 15 &16&17&18&19&20&21&22&23&24&25&26&27 \\ \hline
$L(x)$ &45&47&53&56&60&66&69&73&79&82&85&91&96&100\\
$U(x)$ &107&113&117&124&130&135&141&144&152&157&163&168&173&178\\
\hline
$x$ &28&29&30&31&32&33&34&35&36&37&38&39&40&41 \\ \hline
$L(x)$ &102&108&114&118&122&125&131&136&142&145&149&153&158&164\\
$U(x)$ &182&188&194&200&205&210&215&220&225&231&236&241&246&250\\
\hline
$x$&42&43&44&45&46&47&48&49&50&51&52&53&54&55 \\ \hline
$L(x)$ &169&174&179&183&187&191&195&201&206&211&216&221&226&232\\
$U(x)$ &253&258&263&268&274&279&284&289&294&299&305&309&313&317\\
\hline
$x$&56&57&58&59&60&61&62&63&64&65&66&67&68&69 \\ \hline
$L(x)$ &237&242&247&250&254&259&264&269&275&280&285&290&295&300\\
$U(x)$ &321&326&331&336&342&347&351&355&358&364&369&375&378&382\\
\hline
$x$&70&71&72&73&74&75&76&77&78&79&80&81&82&83 \\ \hline
$L(x)$ &306&312&318&322&327&332&337&343&348&356&359&365&370&376\\
$U(x)$ &386&392&398&400&404&409&415&418&421&427&431&434&440&444\\
\hline
$x$&84&85&86&87&88&89&90&91&92&93&94&95&96&97 \\ \hline
$L(x)$ &383&387&393&399&405&410&416&422&428&435&441&448&454&461\\
$U(x)$ &447&453&455&460&463&467&471&475&478&483&485&488&492&495\\
\hline
$x$&98&99&100  \\ \hline
$L(x)$ &468&476&484 \\
$U(x)$ &497&499&500 \\
\hline
 \multicolumn{15}{c}{ Computational time: 10.1792 min. Total size: 7131}\\
\hline
\end{tabular}
\label{table:Wang.ne.us.500.100}
\end{table}

In addition to the total sizes, Tables~\ref{table:us.ne.Wang.500.100} and \ref{table:Wang.ne.us.500.100} also show the computational times\footnote{Computed using R's \texttt{proc.time()} function} used by both methods, at the bottom of each table. Whereas $\mC^*$ took roughly $1/10$th of a  second ($.0019$ minutes) to fill the table, $\mC_W$ took more than 10 minutes. As mentioned above, this is due to the adjusting technique of $\mC_W$ which requires repeated updating of intervals, whereas $\mC^*$ just requires one pass through the acceptance intervals for adjustment. Figure~\ref{time.wang.us} gives a more complete comparison of computational times in this setting. The additional time required by $\mC_W$ is sizable, even exceeding $25$ minutes for values of $n$ near the middle of the range. A comparison of computational times of $\mC_{Piv}$ and $\mC^*$ is shown in Figure~\ref{time.pivot.us}, which shows that the times are much faster overall compared to $\mC_W$ (the longest times being less than $1/3$ of a second), and comparable between the two methods.

\begin{figure}[!ht]
    \centering
    \caption{The computational time of the confidence intervals $\mC_W$ and $\mC^*$ for $N = 500$, $\alpha = 0.05$, and $n=10, 20, \ldots,490$.} \label{time.wang.us}
\includegraphics[width=12cm]{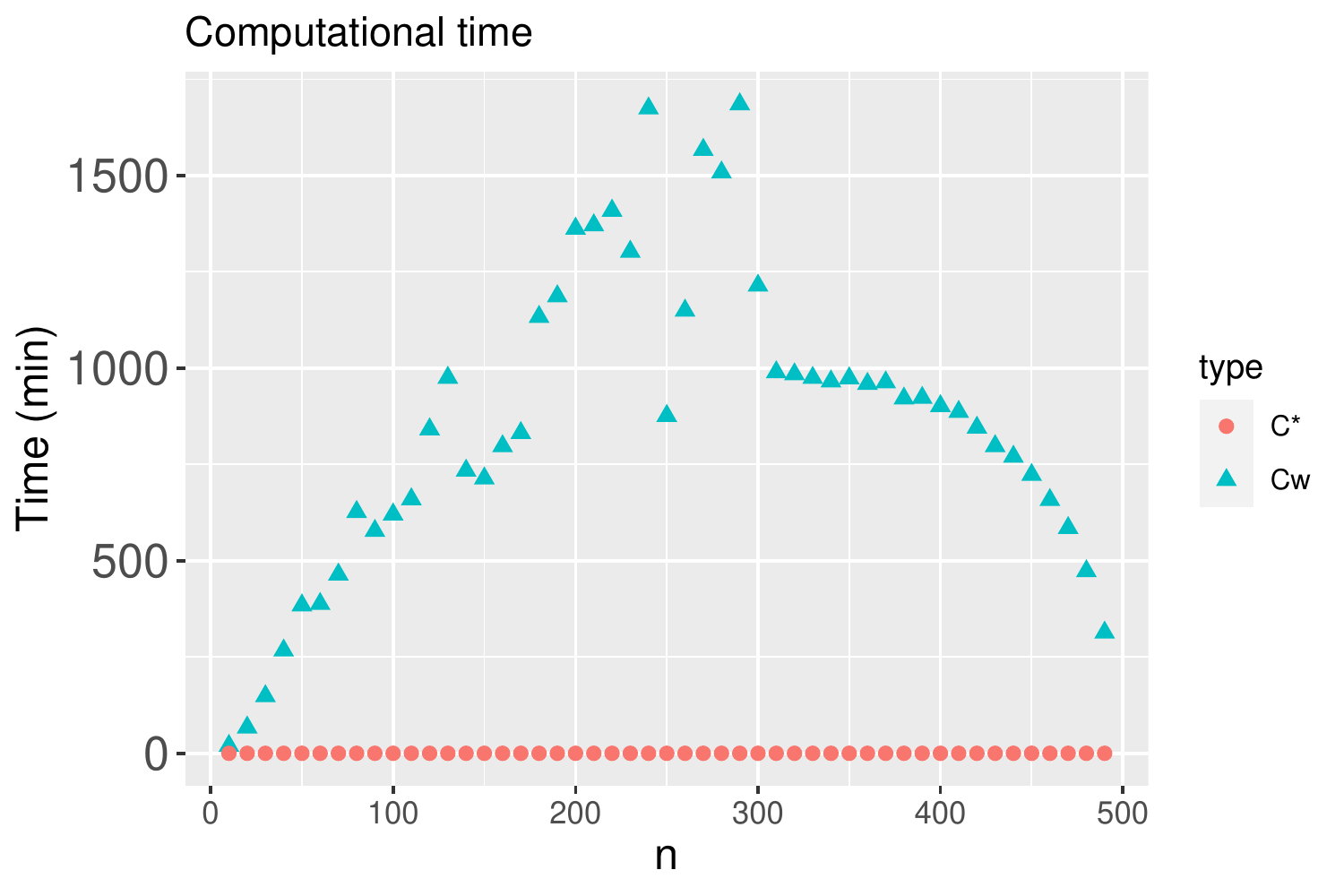} 
\end{figure}

\begin{figure}[!ht]
    \centering
    \caption{The computational time of the confidence intervals $\mC_{Piv}$ and $\mC^*$ for $N = 500$, $\alpha = 0.05$, and $n=10, 20, \ldots,490$.} \label{time.pivot.us}
\includegraphics[width=12cm]{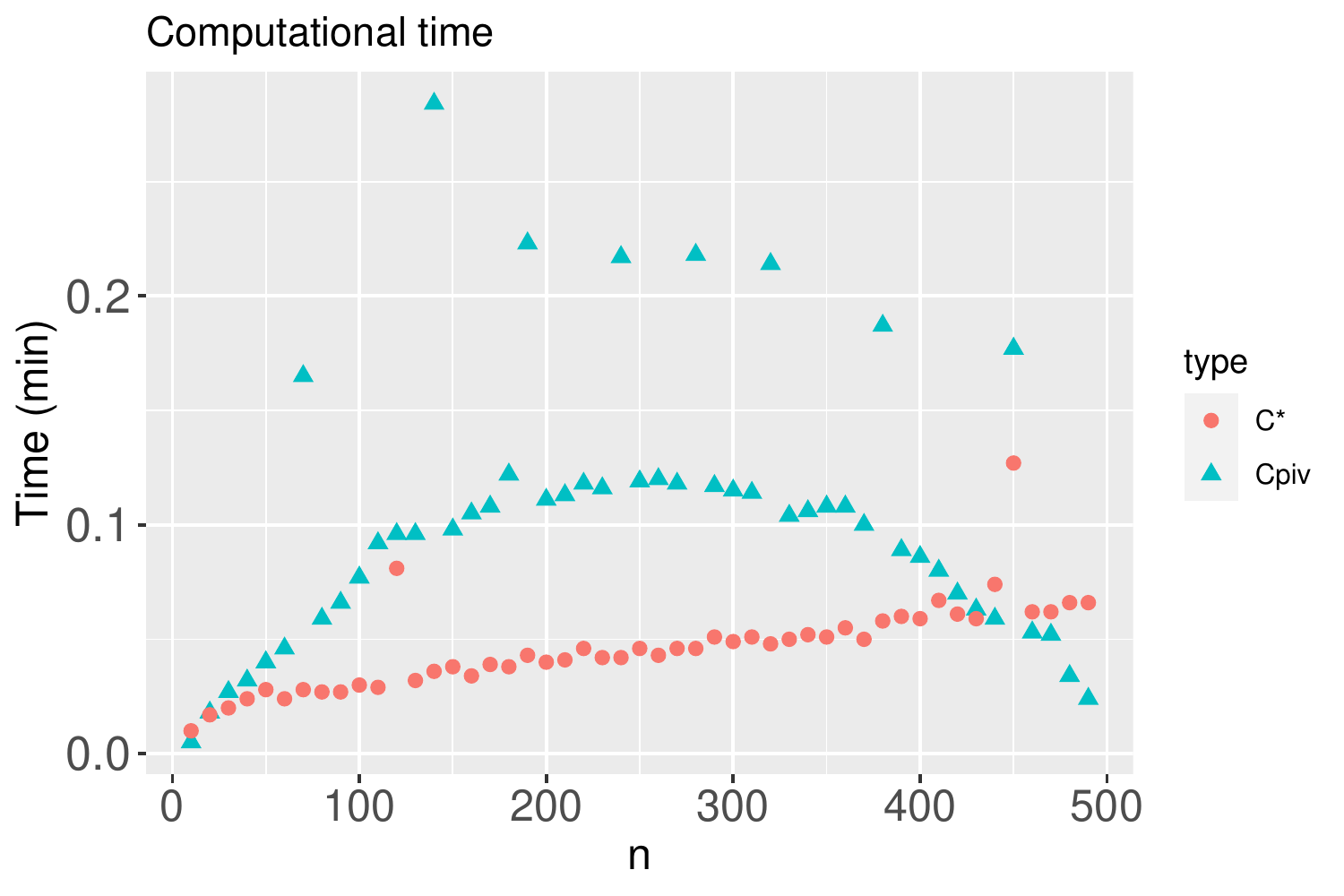} 
\end{figure}

Figures~\ref{cover.us.500.100.05}~and~\ref{cover.Wang.500.100.05} show the coverage probability for the $n=100$ case of the three methods as a function of $M=0,1,\ldots 500$.  Like their sizes, $\mC_W$ and $\mC^*$ have very similar coverage probabilities, whereas that of $\mC_{Piv}$  is overall higher (an \emph{undesirable} property once it exceeds $1-\alpha$), especially for values of $M$ near the endpoints~$0$ and $N=500$.

\begin{figure}[!ht]
    \centering
    \caption{Coverage probability of $\mC^*$  and $\mC_{Piv}$ for $N = 500$, $n = 100$, and $\alpha = 0.05$.} \label{cover.us.500.100.05}
\includegraphics[width=15cm]{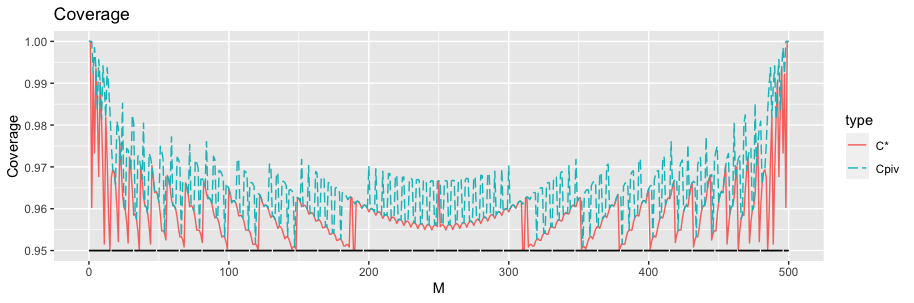} 
\end{figure}

\begin{figure}[!ht]
    \centering
    \caption{Coverage probability of $\mC_W$ for $N = 500$, $n = 100$, and $\alpha = 0.05$.} \label{cover.Wang.500.100.05}
\includegraphics[width=12cm, height=6cm]{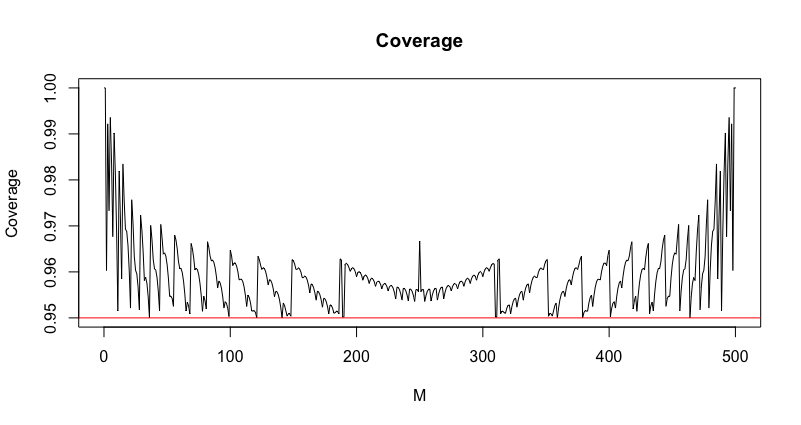}
\end{figure}
\end{ex}

\begin{ex}\label{ex:CW.time} We compare the computational time of $\mC^*$ and $\mC_W$ in the setting $\alpha = 0.05$, $N=200, 400, \ldots, 1000$, and $n=N/2$.  

Wang's~\citeyearpar{Wang15} method is time-consuming especially when the sample size and the population size are large. A comparison of computational times of $\mC^*$ and $\mC_W$ are shown in Figure~\ref{time.wang.us.1000}.  When $N = 1000$, $n = 500 $ and $\alpha = 0.05$, the computational time for $\mC_W$ reached 250 minutes, and $\mC^*$ only took 0.0111 minutes, which shows that our method is efficient.

\begin{figure}[!ht]
    \centering
    \caption{The computational time of the confidence intervals $\mC_W$ and $\mC^*$ for $N=200, 400, \ldots, 1000$, and $n=N/2$.} \label{time.wang.us.1000}
\includegraphics[width=12cm]{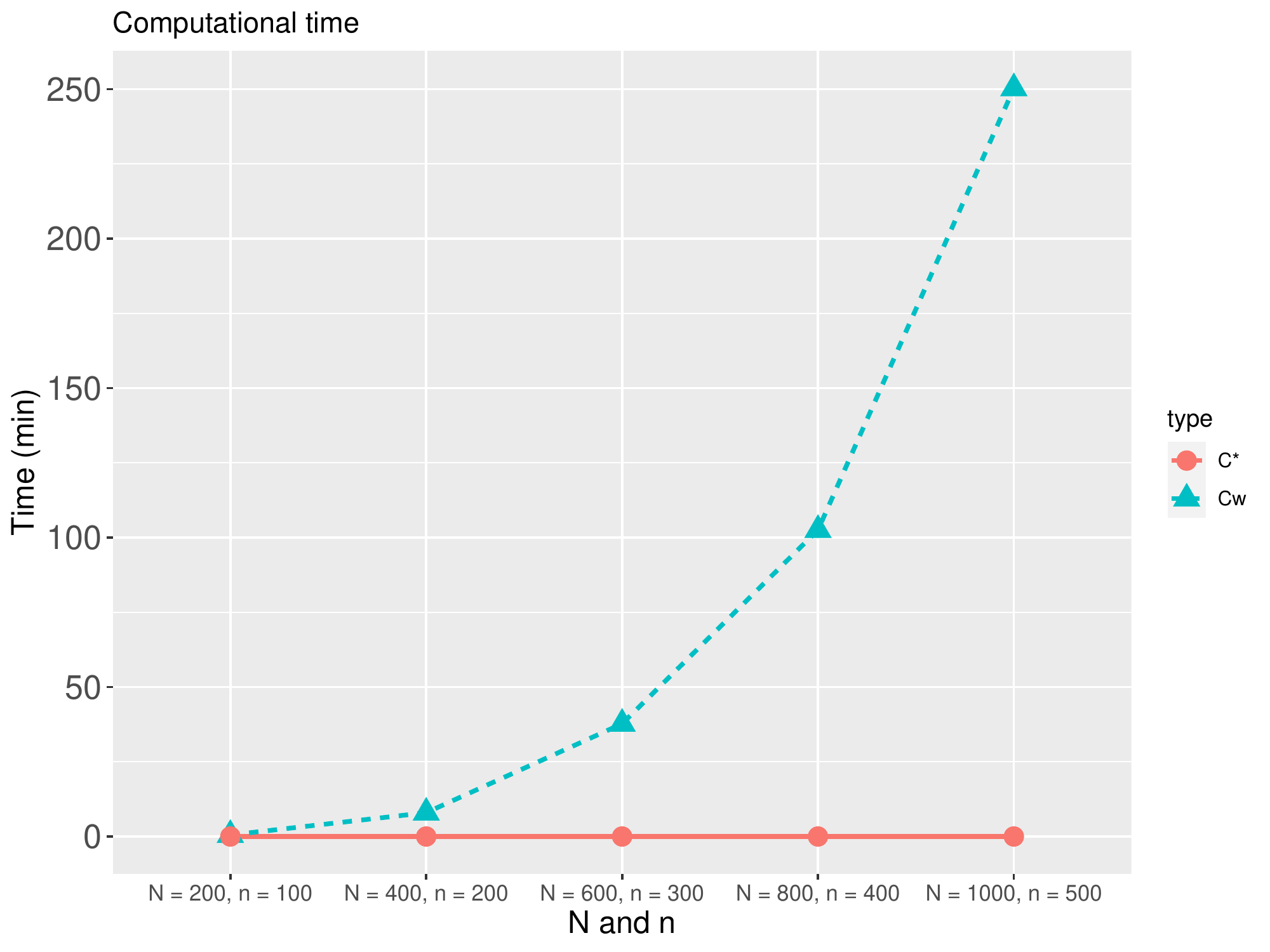} 
\end{figure}

\end{ex}

\begin{ex}\label{ex:AN/2.even} In this example we show the necessity of part~\ref{part:set.nonopt} of Theorem~\ref{thm:len.opt.set}. That is, we exhibit a setting with $n$, $N$, and $\mA(N/2)$ all even  for a certain acceptance set~$\mA$ whose inversion~$\mC$ is size-optimal with $|\mC|=|\mC^*|-1$. Set  $N = 20$, $n = 6$, and $\alpha = 0.6$. For $M\ne N/2=10$ define $\mA(M)=[a_M^*,b_M^*]$ to be the same intervals given by Theorem~\ref{thm:a*b*} and inverted to create $\mC^*$, and define $\mA(10)=\{2,4\}$. For all $M\ne 10$, $\mA(M)$ is a level-$\alpha$ interval, and $\mA(10)$ is as well since
\begin{equation*}
P_{M=10}(2)=P_{M=10}(4)=.244
\end{equation*} to $3$ decimal places, thus
$P_{M=10}(\{2,4\}) > .4 = 1-\alpha.$ It can be shown that $\mA^*(10)=[2,4]$, thus the intervals $\mA$ have $1$ fewer point than $\mA^*$, so by \eqref{lenCA=lenA} we have that $|\mC| = |\mC^*|-1$.
\end{ex}

\begin{ex}[Air quality data] In this example we apply our confidence interval~$\mC^*$ to data collected by China's Ministry of Environmental Protection (MEP) and discussed by \citet{Liang16}. The MEP collects data on particulate matter (PM$_{2.5}$) concentration, measured in $\mu g/m^3$, of fine inhalable particles with diameters less than 2.5 micrometers. The U.S.\ Environmental Protection Agency \citeyearpar{Environmental12} classifies the air quality of a given day as ``hazardous'' if the day's 24-hour average PM$_{2.5}$ measurement  exceeds the set threshold~$250.5$. \citet{Liang16} analyzed the 2013 to 2015 MEP data and concluded that it was consistent with measurements taken at nearby U.S.\ diplomatic posts, the U.S.\ Embassy in Beijing and four U.S\ Consulates in other cities. However, a persistent problem with the MEP data is a high degree of missing days.  For a given year, if the missing days are assumed to be missing at random with each day of the year equally likely, then the number~$X$ of remaining `hazardous' days, conditioned on the number~$n$ of remaining days, follows a hypergeometric distribution with $N=365$ and unknown actual number~$M$ of annual hazardous days, to be estimated as an indication of annual air quality.

We focus on the 2015 data from 3 MEP sites in Beijing: Dongsi, Dongsihuan, and Nongzhanguan. For each of these sites, Table~\ref{table:air} shows the number~$n$ of days with complete measurements, the observed number~$x$ of  days with complete measurements classified as hazardous, the point estimate~$Nx/n$ (with $N=365$) of the number~$M$ of annual hazardous days, and the 90\% confidence interval~$\mC^*(x)$ for $M$, which are also plotted in Figure~\ref{fig:air_MEP_example}.  The point estimates from the MEP sites are similar to and surround the estimate at the U.S.\ Embassy data, similar to the conclusions drawn by \citet{Liang16}. But the confidence intervals also show that the MEP estimates are more variable, largely in the direction of indicating worse air quality, with two out of three upper confidence limits being much larger for the MEP sites than for the U.S.\ Embassy.

\begin{table}[!ht]
\caption{For the Beijing air quality data \citep{Liang16}, the number~$n$ of days with complete measurements, the number~$x$ of  days with complete measurements classified as hazardous, the point estimate~$Nx/n$ (to 1 decimal place) of the number~$M$ of annual hazardous days, and the 90\% confidence interval~$\mC^*(x)$ for $M$.}
\begin{center}
\begin{tabular}{l|c|c|c|c}
Site&$n$&$x$&$Nx/n$&90\% CI for $M$\\ \hline
Dongsi&292&16&20.0 & $[17,24]$\\
Dongsihuan&166&7&15.4& $[10,24]$\\
Nongzhanguan&290&11&13.8& $[11,17]$\\
U.S.\ Embassy&332&15&16.5& $[15,18]$
\end{tabular}
\end{center}
\label{table:air}
\end{table}%

\begin{figure}[!ht]
    \centering
    \caption{The 90\% confidence interval~$\mC^*(x)$ for the number~$M$ of annual hazardous days at different locations in Beijing.} \label{fig:air_MEP_example}
\includegraphics[width=12cm]{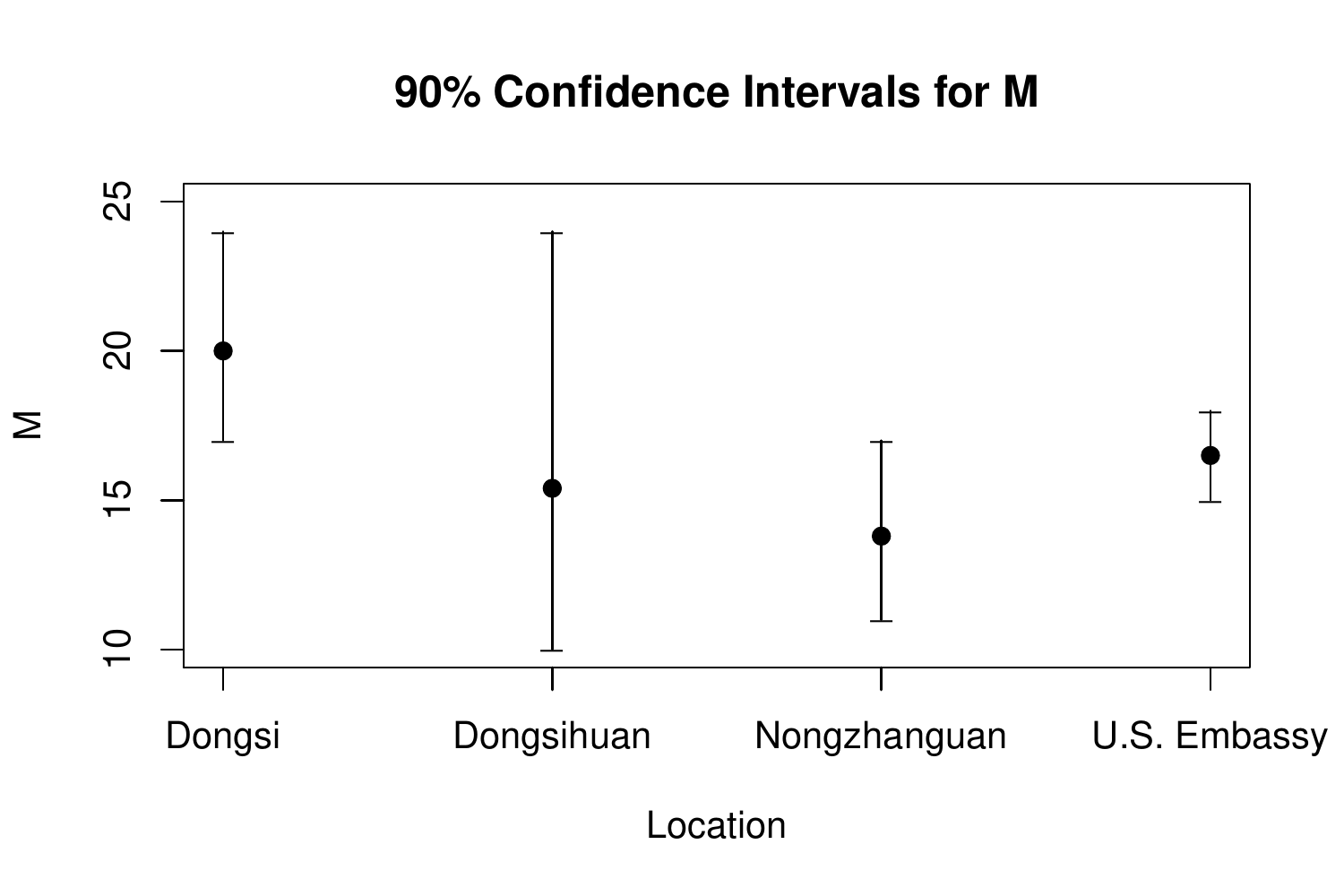} 
\end{figure}

\bigskip

%\citet{Liang16} impute the missing values by moving kernel weighted averages with 3hr bandwidth. They compute the percentage of time (hour) for $PM_{2.5} > 150$, $PM_{2.5} \leq 35$ and $PM_{2.5} > 35$. Form their results, we can see the percentage of time (hour) for $PM_{2.5} > 150$ from the MEP sites in Guangzhou are all $0\%$ from April 2013 to March 2015. 
%
%
%Guangzhou has two MEP sites, 5th Middle School and City Station. The City Station site actually has 243 hours with $PM_{2.5} > 150$. The site has more than $90\%$ missing data. After the imputation, they compute the percentage of time  as $0\%$. In this situation, our method could provide more information. This site reports 225 days' $PM_{2.5}$ completely from April 2013 to March 2015. 36 days have $PM_{2.5}$ reports exceeding 150. Our method provides the $95\%$ confidence interval $[90\,,149]$ as a guide for the number of days with $PM_{2.5}$ exceeding 150 from April 2013 to March 2015 (totally 730 days).
%
%
%
%
%The 5th Middle School site actually has 169 hours with $PM_{2.5} > 150$. The site has more than $70\%$ missing data. After the imputation, they compute the percentage of time (hour) for $PM_{2.5} > 150$ as $0\%$. This site reports 90 days' $PM_{2.5}$ completely from April 2013 to March 2015. 21 days have $PM_{2.5}$ reports exceeding 150. Our method provides the $95\%$ confidence interval $[117\,,238 ]$.

\end{ex}

\section{Discussion}
We have presented an efficient method of computing exact hypergeometric confidence intervals. Compared to the standard pivotal method, our method requires similar computational time but produces much shorter intervals. Our method produces intervals with total size no larger than, and strictly smaller than in some cases, the existing nearly-optimal method of W.~Wang~\citeyearpar{Wang15}, which is computationally much more costly than our and the pivotal method. Therefore we hope our method can provide something near the ``best of both worlds'' for this problem in terms of computational time and interval size.

The key to our method is the novel shifting of acceptance intervals before inversion, developed in Sections~\ref{sec: adj} and \ref{sec:mod.minacc}. We have observed in the numerical examples included in Section~\ref{sec:ex.comp}, as well as extensive further computations not included in this paper, that the needed shifts  in Theorem~\ref{thm:adj} seem to never exceed a single point. This is not needed in our theory but we  close by mentioning it as a tantalizing conjecture.

A similar approach to the one here of shifting optimal acceptance regions before inverting can be used to produce optimal confidence intervals for the hypergeometric population size~$N$ when it is unknown, such as in capture-recapture problems \citep{Bailey51,Pollock90,Wittes97}.  A forthcoming work will cover this problem.

%\bibliographystyle{apalike}
%\bibliography{../Bib_files/bibliography}

\def\cprime{$'$}

\appendix

\section{Properties of the hypergeometric distribution and auxiliary lemmas}\label{sec: general}

In this section, we first record some well known properties of the $\mbox{Hyper}(M,n,N)$ distribution in Lemmas~\ref{lem:hyp.basics} and \ref{lem:hyp.uni.x},  the latter covering unimodality of $P_M(x)$ in $x$.  The content of Lemma~\ref{lem:hyp.basics} is  mentioned in \citet[][Chapter~6]{Johnson93}, so we do not prove that here, and Lemma~\ref{lem:hyp.uni.x} follows from the expression
\begin{equation*}
\frac{P_M(x)}{P_M(x-1)} = \frac{(M+1-x)(n+1-x)}{x(N-M-n+x)}\qmq{for}1\le x\le n.
\end{equation*}
After that we state and prove some needed auxiliary results concerning other types of  monotonicity and unimodality: monotonicity of density function ratios with respect to $M$ in Lemma~\ref{prob ratio}, and unimodality of $P_M([a,b])$ with respect to $M$ in Lemma~\ref{lem:uni.intvl} and with respect to shifts in the interval~$[a,b]$ in Lemma~\ref{peak property}.  Throughout let $P_M(x)$ denote the density~\eqref{hyp.pdf} of the $\mbox{Hyper}(M,n,N)$ distribution.

\begin{lemma}\label{lem:hyp.basics} 
\begin{enumerate}
\item\label{bscs.lem.symms} We have 
\begin{align}
X\sim\mbox{Hyper}(n,M,N)&\Leftrightarrow X\sim\mbox{Hyper}(M,n,N)\nonumber\\
&\Leftrightarrow n-X\sim\mbox{Hyper}(N-M,n,N).\label{X.n-X.sym}
\end{align}
\item\label{bscs.lem.cplng} A useful coupling: For $n<N$, $X\sim\mbox{Hyper}(M,n+1,N)$ can be written  $X=X'+Y$ where $X'\sim\mbox{Hyper}(M,n,N)$ and $Y|X'\sim \mbox{Bern}((M-X')/(N-n))$.
\item Monotone likelihood ratio:  For every $M_1,M_2\in[N]$ with $M_1<M_2$, $P_{M_2}(x)/P_{M_1}(x)$ is nondecreasing in $x$ (with the convention $c/0=\infty$ for $c>0$).
\end{enumerate}
\end{lemma}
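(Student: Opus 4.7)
\medskip

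\noindent\textbf{Proof plan for Lemma~\ref{lem:hyp.basics}.}

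For part~\ref{bscs.lem.symms}, both equivalences should fall out directly from the algebraic form of the mass function~\eqref{hyp.pdf}. The first equivalence is a consequence of the identity $\binom{M}{x}\binom{N-M}{n-x} = \binom{n}{x}\binom{N-n}{M-x}$, which one verifies immediately by expanding both sides as ratios of factorials of $M$, $n$, $N-M$, $N-n$, and $M-x$; dividing by $\binom{N}{n}$ on both sides shows that $P_M(X=x)$ for $X\sim\mbox{Hyper}(M,n,N)$ coincides with the mass function of $\mbox{Hyper}(n,M,N)$ at $x$. For the second equivalence, if $X\sim\mbox{Hyper}(M,n,N)$ and $y=n-x$ lies in the support of $\mbox{Hyper}(N-M,n,N)$, then $P(n-X=y)=P(X=n-y)=\binom{M}{n-y}\binom{N-M}{y}/\binom{N}{n}$, which is exactly $\mbox{Hyper}(N-M,n,N)$ evaluated at $y$; the support match is checked from~\eqref{x.range}.

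For part~\ref{bscs.lem.cplng}, I would give a sampling-theoretic argument. Realize $\mbox{Hyper}(M,n+1,N)$ as the number of special items in a uniform random sample of $n+1$ items drawn without replacement from a population of $N$ containing $M$ special items. Such a sample can be generated in two stages: first draw $n$ items uniformly without replacement (yielding $X'\sim\mbox{Hyper}(M,n,N)$ special items among them), and then draw one more item uniformly from the remaining $N-n$. Conditional on the first stage, there are $M-X'$ special items among the $N-n$ remaining items, so the indicator $Y$ that the $(n+1)$-st draw is special satisfies $Y\mid X'\sim\mbox{Bern}((M-X')/(N-n))$, and the total is $X=X'+Y$, as claimed.

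For part~\ref{bscs.lem.mlr}, the cleanest route is to examine the successive likelihood ratios. Using the recursion $P_M(x+1)/P_M(x)=(M-x)(n-x)/\bigl((x+1)(N-M-n+x+1)\bigr)$ from the discussion before Lemma~\ref{lem:hyp.basics}, it suffices to show that
\begin{equation*}
\frac{P_{M_2}(x+1)P_{M_1}(x)}{P_{M_2}(x)P_{M_1}(x+1)} = \frac{(M_2-x)(N-M_1-n+x+1)}{(M_1-x)(N-M_2-n+x+1)} \ge 1
\end{equation*}
whenever $M_1<M_2$ and all factors are positive (the boundary cases, where the denominator of $P_{M_2}(x)/P_{M_1}(x)$ vanishes, follow from the convention $c/0=\infty$). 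Cross-multiplying and simplifying, the inequality reduces to $(M_2-M_1)(N-n+1)\ge 0$, which is clear. Hence $P_{M_2}(x)/P_{M_1}(x)$ is nondecreasing in $x$ on the joint support, and the convention $c/0=\infty$ handles the transition through $x$ values where $P_{M_1}(x)=0$.

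The only mildly delicate step is tracking the supports carefully in part~\ref{bscs.lem.mlr}, since the $P_{M_i}$ have different support intervals via~\eqref{x.range}; but once one verifies that $P_{M_1}(x)>0$ implies $P_{M_2}(x)>0$ for $M_1<M_2$ on the relevant side, the convention $c/0=\infty$ makes the monotonicity statement true globally on $[n]$. Parts~\ref{bscs.lem.symms} and \ref{bscs.lem.cplng} are essentially bookkeeping and a combinatorial construction, so no real obstacle arises there.
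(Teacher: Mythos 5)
The paper does not actually prove this lemma: it defers to Johnson, Kotz and Kemp (1993, Chapter~6) for these standard facts. Your proposal therefore supplies self-contained arguments where the paper gives only a citation, and the overall strategy of each part --- factorial manipulation for the symmetries, the two-stage sampling construction for the coupling, and the successive-ratio computation for MLR --- is the right one. Parts 2 and 3 are correct as written: the sampling argument for the coupling is standard and sound, and your cross-multiplication in part 3 does reduce to $(M_2-M_1)(N-n+1)\ge 0$ (I checked the algebra), with the support bookkeeping handled adequately by noting that the ratio runs through the pattern $0$, then finite and nondecreasing, then $\infty$ as $x$ increases, since both endpoints of the support interval in \eqref{x.range} are nondecreasing in $M$.

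There is, however, one concrete error in part 1: the identity $\binom{M}{x}\binom{N-M}{n-x}=\binom{n}{x}\binom{N-n}{M-x}$ is false as stated (take $N=4$, $M=2$, $n=1$, $x=1$: the left side is $2$ and the right side is $3$). The two hypergeometric laws have different normalizing constants, $\binom{N}{n}$ versus $\binom{N}{M}$, so the numerators cannot agree termwise; the correct identity is
\begin{equation*}
\frac{\binom{M}{x}\binom{N-M}{n-x}}{\binom{N}{n}}=\frac{\binom{n}{x}\binom{N-n}{M-x}}{\binom{N}{M}},
\end{equation*}
both sides equaling $M!\,(N-M)!\,n!\,(N-n)!\big/\bigl(x!\,(M-x)!\,(n-x)!\,(N-M-n+x)!\,N!\bigr)$. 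Consequently the step ``dividing by $\binom{N}{n}$ on both sides'' is also wrong --- you must divide each side by its own normalizer. This is a fixable slip rather than a flaw of approach (the factorial expansion you propose would have exposed it), and the second equivalence in part 1 is verified correctly.
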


\begin{lemma}[Unimodality properties of the hypergeometric]\label{lem:hyp.uni.x} Let
\begin{equation}\label{argmax.x}
m = \frac{(n+1)(M+1)}{N +2},
\end{equation}
$m_1=\lceil m-1\rceil$, and $m_2=\lfloor m\rfloor$. The following hold.
\begin{enumerate}
\item\label{part:unimod.pos.PM} $P_M(x)$ increases strictly on $[x_{\min}, m_1]$ and decreases strictly on $[m_2,x_{\max}]$, where $x_{\min}=\max\{0,M+n-N\}$ and $x_{\max}=\min\{n,M\}$ are the smallest and largest, respectively, of the $x$ values with positive $P_M$ probability.
\item $\arg\max_x P_M(x) = [m_1,m_2]$, where $[m_1,m_2]=[\lfloor m\rfloor, \lfloor m\rfloor]$, unless $m$ is an integer, in which case $[m_1,m_2]=[m-1,m]$.
\end{enumerate}
\end{lemma}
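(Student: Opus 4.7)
The plan is to exploit the ratio identity displayed just before the lemma:
$$\frac{P_M(x)}{P_M(x-1)} = \frac{(M+1-x)(n+1-x)}{x(N-M-n+x)}\qmq{for}1\le x\le n,$$
which has both sides strictly positive for $x\in[x_{\min}+1,x_{\max}]$. First I would cross-multiply and expand to show that the numerator minus the denominator simplifies to the linear expression $(M+1)(n+1)-x(N+2)$; consequently, the ratio exceeds, equals, or falls below $1$ precisely when $x<m$, $x=m$, or $x>m$, where $m=(n+1)(M+1)/(N+2)$ is the quantity in \eqref{argmax.x}. This reduction to a linear criterion is essentially the only calculation.

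From this sign dichotomy, part~\ref{part:unimod.pos.PM} follows immediately. For integers $x$ with $x_{\min}+1\le x\le m_1$, I have $x\le m_1<m$ in both the integer and non-integer cases (since $m_1=\lceil m-1\rceil$ equals $m-1$ when $m\in\mathbb{Z}$ and equals $\lfloor m\rfloor$ otherwise, both strictly below $m$), so each consecutive ratio exceeds $1$ and $P_M$ strictly increases. Symmetrically, for $m_2+1\le x\le x_{\max}$ I have $x>m_2\ge m$ (treating integer and non-integer $m$ separately), so each consecutive ratio is strictly less than $1$ and $P_M$ strictly decreases. The behavior at the extremes $x=x_{\min}$ and $x=x_{\max}$ is automatic since $P_M$ vanishes immediately outside $[x_{\min},x_{\max}]$.

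For part~2, I would split on whether $m$ is an integer. If $m\notin\mathbb{Z}$, then $m_1=m_2=\lfloor m\rfloor$, and strict increase up to $\lfloor m\rfloor$ combined with strict decrease immediately after identifies $\lfloor m\rfloor$ as the unique maximizer. If $m\in\mathbb{Z}$, then $m_1=m-1$ and $m_2=m$, and at $x=m$ the ratio is exactly $1$, giving $P_M(m-1)=P_M(m)$; together with the strict monotonicity on either side, this pins the argmax down to $\{m-1,m\}=[m_1,m_2]$. I do not anticipate a genuine obstacle: the only subtlety is bookkeeping of the floor/ceiling identities in the two parity cases for $m$, and once the ratio identity in the paragraph preceding the lemma is granted, the whole proof is a short algebraic verification.
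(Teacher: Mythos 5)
Your proof is correct and takes essentially the same route the paper intends: the paper offers no detailed argument, stating only that the lemma follows from the displayed ratio identity, and your reduction of the ratio-versus-one comparison to the sign of $(M+1)(n+1)-x(N+2)$ is precisely the computation being alluded to, with the floor/ceiling bookkeeping for integer versus non-integer $m$ filled in correctly. One small wording slip: for non-integer $m$ the chain ``$x>m_2\ge m$'' is literally false since $m_2=\lfloor m\rfloor<m$, but the needed conclusion $x>m$ still holds because $x\ge m_2+1=\lfloor m\rfloor+1>m$, so nothing breaks.
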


The next lemma establishes monotonicity in $M$ of ratios $P_M(x_2)/P_M(x_1)$.

\begin{lemma} \label{prob ratio}
For fixed $N$ and $n$, let $x_1$, $x_2$ be distinct integers in $[0,n]$ such that $0<x_2-x_1< N - n$.  Then \begin{equation}\label{rat.lem.M.rng}
 \frac{P_M(x_2)}{P_M(x_1)} <\frac{P_{M + 1}(x_2)}{P_{M + 1}(x_1)}\qmq{for}  x_2 \leq M \leq N - n + x_1.
\end{equation}
\end{lemma}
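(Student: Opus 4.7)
The plan is to compute the ratio $R(M) := P_M(x_2)/P_M(x_1)$ explicitly in terms of binomial coefficients and then show that $R(M+1)/R(M)$ simplifies to a rational expression that is strictly greater than $1$ on the given range of $M$. First I would note that for $x_2\le M\le N-n+x_1$ we have $M\ge x_1,x_2$ and $N-M\ge n-x_1,n-x_2$, so all four binomial coefficients in
\[
R(M)=\frac{\binom{M}{x_2}\binom{N-M}{n-x_2}}{\binom{M}{x_1}\binom{N-M}{n-x_1}}
\]
are strictly positive and $R(M)$ is a finite positive number.

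Next I would apply the two one-step binomial recursions $\binom{M+1}{k}=\binom{M}{k}(M+1)/(M+1-k)$ and $\binom{N-M-1}{k}=\binom{N-M}{k}(N-M-k)/(N-M)$. After cancellation these yield the clean identity
\[
\frac{R(M+1)}{R(M)}=\frac{(M+1-x_1)(N-M-n+x_2)}{(M+1-x_2)(N-M-n+x_1)},
\]
valid whenever $M+1\le N-n+x_1$, i.e., when $R(M+1)$ itself is finite.

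To show this ratio exceeds $1$ I would cross-multiply and expand; the four ``cross'' terms in $(M{+}1)(N{-}M{-}n)$ and $x_1 x_2$ cancel, and the remaining difference collapses (this is the only real calculation) to
\[
(M+1-x_1)(N-M-n+x_2)-(M+1-x_2)(N-M-n+x_1)=(x_2-x_1)(N-n+1),
\]
which is strictly positive since $x_2>x_1$ and the hypothesis $x_2-x_1<N-n$ forces $N>n$. Combined with positivity of the four factors in the denominator expressions on the range $x_2\le M\le N-n+x_1-1$, this gives $R(M+1)>R(M)$ strictly.

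The one point requiring a little care, and the main (very mild) obstacle, is the boundary case $M=N-n+x_1$: here $P_{M+1}(x_1)=0$ because $N-M-1<n-x_1$, while $P_{M+1}(x_2)>0$ since $M+1=N-n+x_1+1\le N-n+x_2$. Under the $c/0=\infty$ convention inherited from Lemma~\ref{lem:hyp.basics}, $R(M+1)=\infty>R(M)$, and the claim holds at this endpoint as well. This handles all $M$ in the stated range and completes the proof.
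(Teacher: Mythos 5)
Your proof is correct, and it takes a genuinely, if mildly, different route from the paper's. The paper proves the lemma by telescoping in the $x$-direction: it writes $P_M(x_2)/P_M(x_1)=\prod_{x=x_1}^{x_2-1}\frac{(M-x)(n-x)}{(N-M-(n-x)+1)(x+1)}$ and observes that each factor strictly increases when $M$ is replaced by $M+1$ (numerator up, denominator down), all factors being positive on the stated range since $M-x\ge M-x_2+1\ge 1$ and $N-M-(n-x)+1\ge x-x_1+1\ge 1$. You instead work in the $M$-direction: using the one-step binomial recursions you reduce everything to the closed form $R(M+1)/R(M)=\frac{(M+1-x_1)(N-M-n+x_2)}{(M+1-x_2)(N-M-n+x_1)}$ and the identity $(M+1-x_1)(N-M-n+x_2)-(M+1-x_2)(N-M-n+x_1)=(x_2-x_1)(N-n+1)>0$, which checks out. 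What your version buys is a quantitative expression for how much the ratio grows at each step, and an explicit treatment of the boundary case $M=N-n+x_1$, where $P_{M+1}(x_1)=0$ and the inequality holds under the $c/0=\infty$ convention of Lemma~\ref{lem:hyp.basics}; the paper's product comparison passes over this case silently (its $x=x_1$ factor for $M+1$ has a zero denominator there). What the paper's version buys is brevity: the factorwise comparison needs no cross-multiplication or case analysis beyond noting positivity of the factors. Both are elementary manipulations of the hypergeometric pmf and both establish the strict inequality on the full range $x_2\le M\le N-n+x_1$.
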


\begin{proof} We have
 \begin{align} 
    \frac{P_M(x_2)}{P_M(x_1)} &= \prod^{x_2 - 1}_{x = x_1} \frac{(M - x)(n - x)}{(N - M - (n - x) + 1)(x+1)} \label{eq:1}\\
     & < \prod^{x_2 - 1}_{x = x_1} \frac{(M + 1 - x)(n - x)}{(N - (M + 1) - (n - x) + 1)(x+1)}
     = \frac{P_{M + 1}(x_2)}{P_{M + 1}(x_1)}. \label{eq:2}
\end{align}
\end{proof}

The next lemma establishes the unimodality of probabilities $P_M([a,b])$ as a function of $M$. It is helpful to define coupled random variables $X$ and $Y$ as the numbers of red and white balls, respectively, in a simple random sample of $n$ from a box of $N$ balls in which $M$ balls are white, one is red, and the remaining $N-(M+1)$ balls are green. Then $X\sim\mbox{Hyper}(M,n,N)$ and $X+Y\sim\mbox{Hyper}(M+1,n,N)$. In the usual notation, $P_{M+1}(x) = P(X+Y=x)$ and $P_M(x)=P(X=x)$. Writing
\begin{multline*}
P_{M+1}(x)-P_M(x) = [P(X=x-1,Y=1)+P(X=x,Y=0)]\\
-[P(X=x,Y=1)+P(X=x,Y=0)]\\
=P(X=x-1,Y=1)-P(X=x,Y=1),
\end{multline*}
and summing over $x$ from $a$ to $b$ yields
\begin{equation}\label{Pab.diff}
P_{M+1}([a,b])-P_M([a,b]) = P(X=a-1,Y=1)-P(X=b,Y=1).
\end{equation}
Note that for $x$ such that $P(X=x)>0$, 
\begin{equation}\label{red.ball.cond}
P(X=x,Y=1)= P(X=x)\frac{n-x}{N-M}= P_M(x)\frac{n-x}{N-M}
\end{equation}
since $x$ white balls in the sample implies that $n-x$ of the $N-M$ colored (red or green) balls are in the sample, so that the red ball has conditional probability $(n-x)/(N-M)$ of being in the sample. Relation \eqref{red.ball.cond} is trivially true when $P(X=x)=0$, so it is true for all $x\in[n]$. Using  \eqref{Pab.diff} and  \eqref{red.ball.cond}, 
\begin{equation}\label{Pab.diff.PaPb}
(N-M)(P_{M+1}([a,b])-P_{M}([a,b])) = (n-(a-1))P_M(a-1)-(n-b)P_M(b).
\end{equation} This equation provides the basis for the following lemma.

\begin{lemma}\label{lem:uni.intvl}
Assume $0\le a\le b\le n$ and $b-a<n$.  Then $P_M([a,b])$ is nondecreasing for $M\le M(a,b)$ and nonincreasing for $M\ge M(a,b)$, where
\begin{equation}\label{peak}
M(a,b)=\begin{cases}
0&\mbox{if $a=0$}\\
N&\mbox{if $b=n$}\\
\min\{M:(n-(a-1))P_M(a-1)<(n-b)P_M(b)\}&\mbox{otherwise.}
\end{cases}
\end{equation}
\end{lemma}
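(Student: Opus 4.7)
The plan is to use the identity \eqref{Pab.diff.PaPb} derived immediately before the lemma. Since $N-M>0$ for $M<N$, it shows that the sign of $P_{M+1}([a,b]) - P_M([a,b])$ agrees with the sign of
\[
g(M) := (n-a+1)P_M(a-1) - (n-b)P_M(b),
\]
so the claimed unimodality in $M$ reduces to showing that $g$ has at most a single sign change, from nonnegative to nonpositive, and that the first $M$ at which $g(M)<0$ is precisely $M(a,b)$. The corner cases are immediate: if $a=0$, the convention $P_M(-1)=0$ forces $g(M)\le 0$ for every $M$, consistent with $M(a,b)=0$, and the case $b=n$ is symmetric with $M(a,b)=N$.

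For the main case $1\le a\le b\le n-1$, I would show directly that $g(M)\le 0$ for every $M\ge M_* := M(a,b)$. The strict inequality $g(M_*)<0$ forces $P_{M_*}(b)>0$, and splits into two scenarios: either (i) $P_{M_*}(a-1)=0$, which forces $M_*>N-n+a-1$; or (ii) $P_{M_*}(a-1)>0$, in which case $g(M_*)<0$ is equivalent to the ratio $r(M_*) := P_{M_*}(b)/P_{M_*}(a-1)$ exceeding $(n-a+1)/(n-b)$. In scenario (i), every $M\ge M_*$ satisfies $M>N-n+a-1$, so $P_M(a-1)=0$ and $g(M) = -(n-b)P_M(b)\le 0$. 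In scenario (ii), for each $M\ge M_*$ I would split further on whether $M\le N-n+a-1$: if so, Lemma~\ref{prob ratio} applied with $x_1=a-1$ and $x_2=b$ gives $r(M)\ge r(M_*)$ and hence $g(M)<0$; if not, we fall back into scenario (i) starting from $M$, giving $g(M)\le 0$. For $M<M_*$, $g(M)\ge 0$ holds by definition of $M_*$, so $P_M([a,b])$ is nondecreasing up to $M_*$; combined with the above, this is precisely the asserted unimodality.

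I expect the main obstacle to be the careful bookkeeping of when $P_M(a-1)$ or $P_M(b)$ vanishes, since the hypergeometric support $[\max\{0,M+n-N\},\min\{M,n\}]$ depends on $M$ and the sign analysis of $g$ is sensitive to these vanishings. In particular, one must verify the hypothesis $b-(a-1)<N-n$ of Lemma~\ref{prob ratio} whenever it is invoked, which is automatic when the range of $M$ satisfying both $P_M(a-1)>0$ and $P_M(b)>0$ contains more than one point; the borderline case $b-(a-1)=N-n$ must be handled by observing that scenario (ii) then reduces to the single point $M_*=b$, after which every larger $M$ automatically lives in scenario (i).
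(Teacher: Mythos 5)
Your argument is correct and is essentially the paper's own proof: both rest on the identity \eqref{Pab.diff.PaPb}, reduce the claim to a single sign change of $(n-a+1)P_M(a-1)-(n-b)P_M(b)$, and invoke Lemma~\ref{prob ratio} with $x_1=a-1$, $x_2=b$ together with the support bookkeeping $P_M(x)>0\iff x\le M\le x+N-n$. The only difference is organizational — you propagate negativity forward from $M(a,b)$, while the paper shows any $M_1$ with a positive increment must precede any $M_2$ with a negative one via the support intervals $[a-1,a-1+N-n]$ and $[b,b+N-n]$ — and your treatment of the vanishing and borderline cases matches what the paper needs.
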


 \begin{proof} By \eqref{Pab.diff.PaPb}, 
 \begin{equation}\label{sgn.Pab.diff}
\mbox{sgn}(P_{M+1}([a,b])-P_{M}([a,b])) = \mbox{sgn}((n-(a-1))P_M(a-1)-(n-b)P_M(b)).
\end{equation} If $a=0$, the first term on the right-hand side vanishes and $\{P_M([a,b])\}$ is therefore nonincreasing. Similarly if $b=n$, $\{P_M([a,b])\}$ is nondecreasing. It remains to consider only $1\le a\le b\le n-1$, and it suffices to show that if $M_1, M_2$ are such that
\begin{equation}\label{sgnPab.diff.LHS}
\mbox{sgn}(P_{M+1}([a,b])-P_{M}([a,b])) = \begin{cases}
+1&\mbox{if $M=M_1$}\\
-1&\mbox{if $M=M_2$,}
\end{cases} 
\end{equation}
then $M_1<M_2$. Since the coefficients $n-(a-1)$ and $n-b$ in \eqref{sgn.Pab.diff} are positive, \eqref{sgn.Pab.diff} and \eqref{sgnPab.diff.LHS} imply that $P_{M_1}(a-1)$ and $P_{M_2}(b)$ must be positive. Therefore, since by \eqref{x.range} $P_M(x)$ must be positive if and only if $x\le M\le x+N-n$, we have 
\begin{equation*}
M_1\in I_1:=[a-1,a-1+N-n]\qmq{and} M_2\in I_2:= [b,b+N-n].
\end{equation*}
The endpoints of $I_1$ are less than the corresponding endpoints of $I_2$, so that $M_1$ cannot be to the right of $I_2$, and if it is to the left, $M_1<M_2$ follows immediately. So assume that $M_1$ belongs to $I_2$ and similarly that $M_2$ belongs to $I_1$. Then $M_1$ and $M_2$ both belong to $I_1\cap I_2$, hence
\begin{equation*}
b\le M_1,M_2\le a-1+N-n.
\end{equation*}
Since $n-b$ is positive and $P_M(a-1)$ is positive on this interval, \eqref{sgn.Pab.diff} and \eqref{sgnPab.diff.LHS} imply that
\begin{equation*}
\mbox{sgn}\left(\frac{P_M(b)}{P_M(a-1)} - \frac{n-(a-1)}{n-b}\right) = \begin{cases}
-1&\mbox{if $M=M_1$}\\
+1&\mbox{if $M=M_2$,}
\end{cases} 
\end{equation*} and the property~\eqref{rat.lem.M.rng} implies that $M_1<M_2$.
 \end{proof}

\begin{lemma} \label{peak property}
For fixed $n$, $N$, $0 \leq a \leq b < n$, and positive integer $ d \leq n - b$, we have 
\begin{enumerate}[label=(\roman*)]
\item\label{lem.peak.i} $M(a,b) \leq M(a+d, a+d)$, and 
\item\label{lem.peak.ii} $P_{M}([a, b]) \leq  P_{M} ([a +d, b+d])$ for all $M \geq M(a+d, b+d)$. 
\end{enumerate}

\end{lemma}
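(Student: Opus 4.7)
My plan would be to prove parts (i) and (ii) together by induction on $d$, treating (i) as the shift estimate $M(a,b)\le M(a+d,b+d)$, which is the version needed to drive (ii). The base case $d=1$ is the heart of the argument. For (ii) with $d=1$, telescoping gives $P_M([a+1,b+1])-P_M([a,b])=P_M(b+1)-P_M(a)$; the condition $M\ge M(a+1,b+1)$, via \eqref{peak}, gives $(n-a)P_M(a)<(n-b-1)P_M(b+1)$, and combined with $n-a>n-b-1>0$ (which follows from $a\le b$), this forces $P_M(b+1)>P_M(a)$. For $M$ strictly larger, the ratio $P_M(b+1)/P_M(a)$ is nondecreasing by Lemma~\ref{prob ratio}, so the inequality persists.

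For (i) with $d=1$, it suffices to show the implication $(n-a)P_M(a)<(n-b-1)P_M(b+1)\Rightarrow (n-a+1)P_M(a-1)<(n-b)P_M(b)$, since then the set on which the former holds is contained in the set on which the latter holds, and taking minima yields $M(a,b)\le M(a+1,b+1)$. Setting $h(x)=P_M(x+1)/P_M(x)=(M-x)(n-x)/[(x+1)(N-M-n+x+1)]$, I would rewrite both sides of the conclusion as scaled versions of the corresponding sides of the hypothesis:
\begin{align*}
(n-a+1)P_M(a-1)&=\tfrac{n-a+1}{(n-a)h(a-1)}\cdot (n-a)P_M(a),\\
(n-b)P_M(b)&=\tfrac{n-b}{(n-b-1)h(b)}\cdot (n-b-1)P_M(b+1).
\end{align*}
A sufficient condition for the implication is that the left-hand scaling factor is at most the right-hand one, i.e., $(n-a+1)(n-b-1)h(b)\le (n-a)(n-b)h(a-1)$; substituting the formulas for $h$ and cancelling reduces this to $(n-b-1)(M-b)\,a\,(N-M-n+a)\le (n-a)(M-a+1)(b+1)(N-M-n+b+1)$, which holds term-by-term because $a\le b$.

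For the inductive step, the identity
\begin{align*}
P_M([a+d+1,b+d+1])-P_M([a,b])&=\bigl[P_M([a+d+1,b+d+1])-P_M([a+d,b+d])\bigr]\\
&\quad+\bigl[P_M([a+d,b+d])-P_M([a,b])\bigr]
\end{align*}
combined with iterating the $d=1$ case of (i) to get $M\ge M(a+d+1,b+d+1)\ge M(a+d,b+d)\ge M(a,b)$ shows the first bracket is nonnegative by the $d=1$ case of (ii) applied to $[a+d,b+d]$ and the second by the inductive hypothesis, giving (ii) for $d+1$; part (i) for $d+1$ follows by iterating its $d=1$ case. The hard part will be the term-by-term inequality in the $d=1$ base of (i), whose form relies crucially on the explicit hypergeometric expression for $h$; edge cases where $P_M$ vanishes at one of $a-1,a,b,b+1$ will be handled separately, each rendering either the hypothesis or the conclusion trivial via the conventions in \eqref{peak}.
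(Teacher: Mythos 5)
Your overall route is exactly the one the paper indicates (its proof is only a sketch: induction on $d$ using the definition~\eqref{peak} and the ratio inequality~\eqref{eq:2}), and your central computation is sound: in the regime where $P_M(a-1),P_M(a),P_M(b),P_M(b+1)$ are all positive, the factor comparison does reduce, after cancelling $(n-a+1)(n-b)$, to the product inequality $(n-b-1)(M-b)\,a\,(N-M-n+a)\le(n-a)(M-a+1)(b+1)(N-M-n+b+1)$, whose factors are nonnegative and dominated term by term. The inductive step, telescoping plus chaining the $d=1$ case of (i), is also fine.

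The flaw is in the set-containment step for (i). The implication $(n-a)P_M(a)<(n-b-1)P_M(b+1)\Rightarrow(n-a+1)P_M(a-1)<(n-b)P_M(b)$ is \emph{false} at $M=N-n+b+1$ (assuming $1\le a\le b\le n-2$): there $P_M(a-1)=P_M(a)=P_M(b)=0$ while $P_M(b+1)>0$, so the hypothesis holds and the conclusion fails. This is precisely a case where $P_M$ vanishes at some of $a-1,a,b,b+1$, but it is not resolved the way you claim --- the conclusion becomes trivially false, not trivially true --- so ``the set where the former holds is contained in the set where the latter holds'' is not literally correct, and your edge-case remark does not repair it. The lemma survives because of a separate observation you need to add: at $M=N-n+b$ one has $P_M(a-1)=0$ and $P_M(b)>0$, so the defining inequality for $M(a,b)$ already holds there and $M(a,b)\le N-n+b$. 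Hence if $M(a+1,b+1)=N-n+b+1$ the conclusion of (i) is automatic, and for $M\le N-n+b$ your implication argument (with the easy vanishing-mass cases) is valid, which restores the containment where it is needed. A smaller point in (ii): the strict inequality $(n-a)P_M(a)<(n-b-1)P_M(b+1)$ is only guaranteed at $M=M(a+1,b+1)$ itself and cannot persist for all larger $M$ (both masses vanish once $M>N-n+b+1$); what does persist, via Lemma~\ref{prob ratio} together with the fact that the support of $M\mapsto P_M(a)$ is an interval, is the non-strict $P_M(b+1)\ge P_M(a)$, which is all that part (ii), stated with $\le$, requires. Also handle separately the definitional branches $a=0$ and $b+1=n$ of \eqref{peak}, where (i) and (ii) are immediate.
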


\begin{proof} The lemma can be proved by induction on $d$ as it is straightforward to verify using the definition~\eqref{peak} of $M(a,b)$ and the inequality~\eqref{eq:2}. We omit the details. 
\end{proof}

\section{Proof of Theorem~\ref{thm:adj} and auxiliary results}\label{sec:a.max.opt.proofs}

\begin{proof}[Proof of Theorem~\ref{thm:adj}] We first prove that $\mM_a$ and $\mM_b$ are disjoint. Take $M \in \mM_a$, so that $a_M<\oa_M$, and we will show that $M\not\in\mM_b$, i.e., $b_M\le b_{M'}$ for all $M'>M$. Fix such an $M'\in\mM$, and we will consider 2 cases, comparing $a_{M'}$ and $\oa_M$. \underline{Case 1}: $a_{M'} < \oa_M$. In this case we have $a_{M'} < \oa_M \leq \oa_{M'}$. If equality holds in this last, then there exists $M'_\ell < M < M'$ such that $a_{M'_\ell} = \oa_{M} = \oa_{M'}$, so
$b_{M'} \geq b_M$ by part~\ref{laom.d>1.2} of Lemma~\ref{adjust one more}. Otherwise, $\oa_M < \oa_{M'}$, so it must be that a new maximum is achieved between $M$ and $M'$, i.e., $\oa_{M'} = a_{M'_\ell}$ for some $M < M'_\ell < M'$. Then $a_{M'_\ell} > \oa_M$, so  $b_{M} < b_{M'_\ell}$ by part~\ref{the.thm.a.bond} of Lemma~\ref{the thm}.  We also have $a_{M'_\ell} = \oa_{M'} > a_{M'} $, so $b_{M'_\ell} \leq b_{M'}$ by part~\ref{lemma.a.prob} of Lemma~\ref{a_b_one_problem}. Combining these inequalities gives $b_{M'} > b_M$. \underline{Case 2}: $a_{M'} \ge \oa_M$. We have $b_{M'} >  b_M$ by part~\ref{the.thm.a.bond} of Lemma~\ref{the thm}, implying that $b_M = \ub_M$, satisfying the claim. 

The proof of that $M \in \mM_b$ implies $M \notin \mM_a$ is similar.

For monotonicity of the endpoints it suffices to show that $b_M^{adj}\le b_{M+1}^{adj}$ for $M\le N-1$, since the corresponding result for the sequence $\{a_M^{adj}\}$ is proved similarly. For all $M\le N-1$, $b_{M+1}^{adj}\ge \ub_{M+1}\ge \ub_M$, and if $M\not\in\mM_a$ then $\ub_M=b_M^{adj}$. Hence only the case $M\in\mM_a$ remains to be considered. If also $M+1\in\mM_a$, then Lemma~\ref{adjust one more}, part~\ref{laom.d>1}, applies with $M^*=M+1$, hence $b_M-a_M\le b_{M+1}-a_{M+1}$, and since $\oa_M\le \oa_{M+1}$, the result follows by adding the last two inequalities. If $M+1\not\in \mM_a\cup \mM_b$, then $a_{M+1}=\oa_{M+1}\ge \oa_M$, and Lemma~\ref{the thm} applies, yielding $b_{M+1}\ge b_M+\oa_M-a_M$, which suffices. Finally, if $M+1\in\mM_b$, then $M+1<N$ and 
$$b_{M+1}>b_{M+1}^{adj}=\ub_{M+1}=b_{M'}\qmq{for some} M'>M+1.$$
By Lemma~\ref{a_b_one_problem}, $a_{M'}\ge a_{M+1}=\oa_{M+1}\ge \oa_M$, the equality holding since $M+1\not\in\mM_a$ by the disjointness of $\mM_a$ and $\mM_b$. Applying Lemma~\ref{the thm} and the definition of $M'$, $b_M^{adj}\le B_{M'}=\ub_{M+1}=b_{M+1}^{adj}$.

That the adjusted intervals are level-$\alpha$ is handled in parts~\ref{laom.lvl} and \ref{laom.lvl.Mb} of Lemma~\ref{adjust one more}, respectively, for the two nontrivial cases. Finally, note that the adjusted intervals have the same length as the original intervals, thus implying length optimality.
\end{proof}

The next  lemma establishes that anywhere a ``gap'' $a_M> \oa_M$ occurs in the sequence of lower endpoints of \AMO{} acceptance intervals, the gap may be ``filled'' by shifting the interval up the needed amount while maintaining the interval's acceptance probability and without violating monotonicity in the upper endpoint~$b_{M}$.

\begin{lemma} \label{adjust one more}
Let $\{[a_M, b_M]:\; M\in\mM\}$ be \AMO{} with $\mM\subseteq [N]$ an interval, and $\oa_M, \ub_M, \mM_a, \mM_b$ as defined  in \eqref{oa.ub.def}-\eqref{Ma.Mb.def}.
\begin{enumerate}
\item\label{laom.oa>a} If $M^* \in \mM_a$ then, letting $\delta = \oa_{M^*} - a_{M^*}$, we have
\begin{enumerate}
\item\label{laom.lvl}  $P_{M^*}\left( [a_{M^*} + \delta , b_{M^*} +\delta]\right) \geq 1 - \alpha$,
\item\label{laom.exist}  there exists $M_\ell \in \mM$ with $M_\ell<M^*$ such that $a_{M_\ell} = \oa_{M^*}$, 
\item\label{laom.b.shft} $b_{M^*} + \delta > b_{M_\ell} $ for any $M_\ell$ satisfying \ref{laom.exist},
\item\label{laom.d>1} for any $M_\ell$ satisfying \ref{laom.exist}, then for all $M\in[M_\ell, M^*]$ we have
\begin{enumerate}
\item $b_{M} - a_{M} \leq b_{M^*} - a_{M^*}$,
\item\label{laom.d>1.2} $b_{M} \leq b_{M^*}$.
\end{enumerate}
\end{enumerate}
\item\label{laom.ub<b}  If $M^* \in \mM_b$ then, letting $\delta = b_{M^*} - \ub_{M^*}$, we have
\begin{enumerate}
\item\label{laom.lvl.Mb} $P_{M^*}\left( [a_{M^*} - \delta , b_{M^*} - \delta]\right) \geq 1 - \alpha$,
\item\label{laom.exist2}  there exists $M_u \in \mM$ with $M_u>M^*$ such that $b_{M_u} = \ub_{M_u}$, 
 \item $a_{M^*} - \delta < a_{M_u} $ for any $M_u$ satisfying \ref{laom.exist2},
\item for any $M_u$  satisfying \ref{laom.exist2}, then for all $M^*\le M\le M_u$ we have
\begin{enumerate}
\item $b_{M} - a_{M} \leq b_{M^*} - a_{M^*}$,
\item $a_{M} \geq a_{M^*}$.
\end{enumerate}
\end{enumerate}
\end{enumerate}

\end{lemma}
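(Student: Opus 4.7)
The plan is to reduce part~\ref{laom.ub<b} to part~\ref{laom.oa>a} by the reflection $M \mapsto N-M$, $x \mapsto n-x$ in Lemma~\ref{lem:hyp.basics} part~\ref{bscs.lem.symms}, under which an \AMO{} family remains \AMO{} with $\mM_a$ and $\mM_b$ swapping roles and shifts on the lower side mapping to shifts on the upper side. Thus I focus on part~\ref{laom.oa>a}, treating the sub-claims in the order \ref{laom.exist}, \ref{laom.d>1}, \ref{laom.b.shft}, \ref{laom.lvl}. Sub-claim \ref{laom.exist} is immediate from the definition $\oa_{M^*} = \max_{M' \leq M^*} a_{M'}$: since $a_{M^*} < \oa_{M^*}$, the maximum must be attained at some $M_\ell < M^*$ with $a_{M_\ell} = \oa_{M^*}$.

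For sub-claim \ref{laom.d>1}, fix such an $M_\ell$ and any $M \in [M_\ell, M^*]$, and split on whether $a_M > a_{M^*}$ or $a_M \leq a_{M^*}$. In the first case, Lemma~\ref{a_b_one_problem} part~\ref{lemma.a.prob} applied to the pair $(M, M^*)$ yields $b_M \leq b_{M^*}$, whence $b_M - a_M < b_{M^*} - a_{M^*}$. In the second case, the companion monotonicity statement in Lemma~\ref{the thm} (raising the lower endpoint from $a_M$ to $a_{M^*}$ forces the upper endpoint of an \AMO{} interval to move up as well) yields $b_M \leq b_{M^*}$ together with $b_M - a_M \leq b_{M^*} - a_{M^*}$. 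Sub-claim \ref{laom.b.shft} then drops out: taking $M = M_\ell$ in \ref{laom.d>1.2} gives $b_{M_\ell} \leq b_{M^*}$, and $\delta = \oa_{M^*} - a_{M^*} \geq 1$ gives $b_{M^*} + \delta > b_{M_\ell}$.

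The main obstacle is the coverage-preservation claim \ref{laom.lvl}. Since $a_{M^*} + \delta = \oa_{M^*} = a_{M_\ell}$ and $b_{M^*} + \delta > b_{M_\ell}$ by \ref{laom.b.shft}, the shifted interval contains $[a_{M_\ell}, b_{M_\ell}]$, so it suffices to show $P_{M^*}([a_{M_\ell}, b_{M_\ell}]) \geq 1 - \alpha$. The \AMO{} property gives $P_{M_\ell}([a_{M_\ell}, b_{M_\ell}]) \geq 1 - \alpha$, and the plan is to transport this inequality from $M_\ell$ to $M^*$ via unimodality of $M \mapsto P_M([a_{M_\ell}, b_{M_\ell}])$ from Lemma~\ref{lem:uni.intvl}, by locating $M^*$ between $M_\ell$ and the peak $M(a_{M_\ell}, b_{M_\ell})$, or at least before the first $M$ past the peak at which the coverage drops below $1-\alpha$. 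Placing $M^*$ correctly relative to that peak is the delicate step: the \AMO{} characterization of $[a_{M_\ell}, b_{M_\ell}]$ around the mode \eqref{argmax.x} at $M_\ell$, together with the assumption $a_{M^*} < \oa_{M^*}$ (which prevents $M^*$ from lying too far past the mode of $P_{M_\ell}$), should force $M^*$ onto the required side of the peak. As an alternative route, Lemma~\ref{peak property}(ii) applied with $(a,b) = (a_{M^*}, b_{M^*})$ and $d = \delta$ directly compares the $P_{M^*}$-probabilities of $[a_{M^*}, b_{M^*}]$ and the shifted interval: once the peak inequality $M^* \geq M(a_{M^*}+\delta, b_{M^*}+\delta)$ is verified from the mode formula, the \AMO{} coverage at $M^*$ transfers immediately to the shift, completing the proof.
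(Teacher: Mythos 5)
Your reductions of part~\ref{laom.exist} (definition of $\oa_{M^*}$), of part~\ref{laom.b.shft} from \ref{laom.d>1.2} (since $\delta\ge 1$), and of part~\ref{laom.ub<b} to part~\ref{laom.oa>a} by reflection all match the paper and are fine. The central claim~\ref{laom.lvl}, however, is where your argument breaks, in both of the routes you sketch. Your primary route reduces \ref{laom.lvl} to showing $P_{M^*}([a_{M_\ell},b_{M_\ell}])\ge 1-\alpha$, but that statement is \emph{false}: since $a_{M_\ell}=\oa_{M^*}>a_{M^*}$ and $b_{M_\ell}\le b_{M^*}$ (Lemma~\ref{a_b_one_problem}\ref{lemma.a.prob}), the interval $[a_{M_\ell},b_{M_\ell}]$ is strictly shorter than the \AO{} interval $[a_{M^*},b_{M^*}]$, so Definition~\ref{def:a.opt} forces $P_{M^*}([a_{M_\ell},b_{M_\ell}])<1-\alpha$; the paper exploits exactly this strict inequality in proving Lemma~\ref{the thm}\ref{the.thm.a.bond}. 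Your fallback route fares no better, because the peak inequality $M^*\ge M(a_{M^*}+\delta,b_{M^*}+\delta)$ that you propose to ``verify from the mode formula'' is precisely what fails in the nontrivial case. The argument that works is a dichotomy: since the shifted and unshifted intervals have equal width, the $P_{M^*}$-maximizing property gives $P_{M^*}([a_{M^*}+\delta,b_{M^*}+\delta])\le P_{M^*}([a_{M^*},b_{M^*}])$; if equality holds you are done, and if the inequality is strict then the contrapositive of Lemma~\ref{peak property}\ref{lem.peak.ii} forces $M^*<M(a_{M^*}+\delta,b_{M^*}+\delta)$, so by the unimodality in $M$ of Lemma~\ref{lem:uni.intvl} one gets $P_{M^*}([a_{M^*}+\delta,b_{M^*}+\delta])\ge P_{M_\ell}([a_{M^*}+\delta,b_{M^*}+\delta])\ge P_{M_\ell}([a_{M_\ell},b_{M_\ell}])\ge 1-\alpha$, using the containment $[a_{M_\ell},b_{M_\ell}]\subseteq[a_{M^*}+\delta,b_{M^*}+\delta]$ evaluated under $P_{M_\ell}$, not under $P_{M^*}$.

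There is a second gap in part~\ref{laom.d>1}, in your case $a_M\le a_{M^*}$. The ``companion monotonicity statement in Lemma~\ref{the thm}'' you invoke does not exist in the form you need: part~\ref{the.thm.a.bond} of that lemma concerns $M>M^*$ with $a_M\ge\oa_{M^*}$, the opposite configuration from yours, and it gives no bound on $b_M$ for $M<M^*$. What is actually required is to show that the shifted interval is level-$\alpha$ under $P_M$ for every $M\in[M_\ell,M^*]$ --- which follows from claim~\ref{laom.lvl} at $M^*$, the containment at $M_\ell$, and unimodality in $M$ --- and then to invoke the minimal-width property of $[a_M,b_M]$ to conclude $b_M-a_M\le b_{M^*}-a_{M^*}$; the bound $b_M\le b_{M^*}$ then follows by the case split on the sign of $a_{M^*}-a_M$ exactly as you do for case one. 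In particular \ref{laom.lvl} must be proved \emph{before} \ref{laom.d>1}, the reverse of the order in your plan.
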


 \begin{proof}[Proof of Lemma~\ref{adjust one more}] Part~\ref{laom.exist} follows from the definition of $\oa_M$. For part~\ref{laom.b.shft}, $b_{M^*} \geq b_{M_\ell}$ by Lemma \ref{a_b_one_problem}, so $b_{M^*} + \delta > b_{M_\ell}$.  Using  this and the fact that $a_{M^*} + \delta = \oa_{M^*} = a_{M_\ell}$, we have
\begin{equation*}
[a_{M_\ell} ,b_{M_\ell}] \subseteq [a_{M^*}+\Delta ,b_{M^*}+\Delta]\qmq{for all} \Delta\in[\delta],
\end{equation*} and thus
\begin{equation}\label{laom.PDelt}
P_M([a_{M_\ell} ,b_{M_\ell}]) \le P_M([a_{M^*}+\Delta ,b_{M^*}+\Delta])\qmq{for all} M\in\mM,\; \Delta\in[\delta].
\end{equation}  We know that 
 \begin{equation}\label{eq: adjust one more 1}
 P_{M^*} \left( [a_{M^*}+\delta ,b_{M^*} + \delta] \right) \leq P_{M^*} \left( [a_{M^*} ,b_{M^*}] \right) 
 \end{equation}
by Definition \ref{def:a.opt} since these intervals have the same width. If equality holds in \eqref{eq: adjust one more 1} then part~\ref{laom.lvl}  is proved because the right-hand-side is $\ge 1-\alpha$. Otherwise strict inequality holds in \eqref{eq: adjust one more 1}, which implies that $M^* < M(a_{M^*}+\delta ,b_{M^*} + \delta)$ by Lemma~\ref{peak property}, part~\ref{lem.peak.ii}. Then, using unimodality and \eqref{laom.PDelt}, we have 
\begin{align*}
 P_{M^*} \left( [a_{M^*}+\delta ,b_{M^*} + \delta] \right) & \geq P_{M_\ell} \left( [a_{M^*}+\delta ,b_{M^*} + \delta] \right)\\
    & \geq P_{M_\ell} \left( [a_{M_\ell} ,b_{M_\ell}] \right) \\
    & \geq 1 - \alpha,
\end{align*} finishing the proof of part~\ref{laom.lvl}.  For part~\ref{laom.d>1}, using unimodality we have
\begin{multline*}
P_{M^*}( [a_{M^*} + \delta , b_{M^*} +\delta]) \geq \\
  \min (   P_{M^*} \left( [a_{M^*}+\delta ,b_{M^*} + \delta] \right), P_{M_\ell} \left( [a_{M^*}+\delta ,b_{M^*} + \delta] \right)) \\
  \geq 1 - \alpha,
\end{multline*} and therefore
$$b_{M} - a_{M}  \leq b_{M^*} +\delta -  (a_{M^*} + \delta) = b_{M^*} - a_{M^*}$$ by length optimality of $[a_M,b_M]$.
By this inequality, if $a_{M^*} \geq a_{M}$ then $b_{M^*} \geq b_{M}$. Otherwise, $a_{M^*} < a_{M}$ so $b_{M^*} \geq b_{M}$ by Lemma \ref{a_b_one_problem}, completing the proof of \ref{laom.d>1}.

The proof of part~\ref{laom.ub<b} involves similar arguments, after reflecting the endpoint sequences $\wtilde{a}_{N - M} = n - b_M$, $\wtilde{b}_{N - M} = n - a_M$ and using Lemma~\ref{lemma: opt sym}. We omit the rest of the details.
\end{proof}

Parts~\ref{the.thm.a.bond}-\ref{the.thm.a.mono} of next lemma establish that the adjusted acceptance intervals given in Theorem~\ref{thm:adj}  have nondecreasing lower endpoints, and parts~\ref{the.thm.b.bond}-\ref{the.thm.b.mono} show the same for the upper endpoints.

\begin{lemma} \label{the thm}
Let $\{[a_M, b_M]:\; M\in\mM\}$ be \AMO{} with $\mM\subseteq [N]$ an interval, and $\oa_M, \ub_M, \mM_a, \mM_b$ as defined  in \eqref{oa.ub.def}-\eqref{Ma.Mb.def}.
\begin{enumerate}
\item\label{the.thm.a.bond} If $M^* \in \mM_a$ and $M^* < M \in \mM$ satisfy $a_{M}  \geq \oa_{M^*}$, then $b_{M^*}  < b_{M^*} + \oa_{M^*} - a_{M^*} \leq b_{M}$.
\item\label{the.thm.a.mono} The sequence $b_{M} + \oa_M - a_M$ is nondecreasing in $M \in \mM_a$.
\item\label{the.thm.b.bond} If $M^* \in \mM_b$ and $M^* > M \in \mM$ satisfy $b_{M}  \leq \ub_{M^*}$, then $a_{M^*} > a_{M^*} - (b_{M^*} - \ub_{M^*}) \geq a_{M}$.
\item\label{the.thm.b.mono} The sequence $a_{M} - b_{M} +\ub_{M}$ is nondecreasing in $M \in \mM_b$.
\end{enumerate}
\end{lemma}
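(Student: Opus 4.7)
The plan is to prove Parts~\ref{the.thm.a.bond} and \ref{the.thm.a.mono} directly; Parts~\ref{the.thm.b.bond} and \ref{the.thm.b.mono} then follow by the reflection $[a_M, b_M] \mapsto [n-b_{N-M}, n-a_{N-M}]$ which, combined with $X \mapsto n-X$ from Lemma~\ref{lem:hyp.basics}, exchanges $\mM_a$ with $\mM_b$ and preserves the \AMO{} property, carrying Part~\ref{the.thm.a.bond} statements into Part~\ref{the.thm.b.bond} statements and likewise for monotonicity.

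For Part~\ref{the.thm.a.bond}, the left strict inequality is immediate since $M^* \in \mM_a$ forces $\delta := \oa_{M^*} - a_{M^*} > 0$. For the substantive bound $b_M \geq b_{M^*} + \delta$, let $M_\ell < M^*$ witness $a_{M_\ell} = \oa_{M^*}$ (Lemma~\ref{adjust one more} part~\ref{laom.exist}) and set $I := [\oa_{M^*}, b_{M^*}+\delta]$. By Lemma~\ref{adjust one more} parts~\ref{laom.lvl} and \ref{laom.b.shft}, $I$ is level-$\alpha$ at $M^*$, and also at $M_\ell$ since $[a_{M_\ell}, b_{M_\ell}] \subseteq I$ yields $P_{M_\ell}(I) \geq P_{M_\ell}([a_{M_\ell}, b_{M_\ell}]) \geq 1-\alpha$. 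The unimodality of $P_{M'}(I)$ in $M'$ (Lemma~\ref{lem:uni.intvl}) then gives $P_{M'}(I) \geq 1-\alpha$ for all $M' \in [M_\ell, M^*]$, and Lemma~\ref{peak property} places the peak of $P_\cdot(I)$ at or to the right of the peak of $P_\cdot([a_{M^*}, b_{M^*}])$, so the level-$\alpha$ property propagates past $M^*$; the hypothesis $a_M \geq \oa_{M^*}$ is then used to keep $M$ in this favorable region, yielding $P_M(I) \geq 1-\alpha$. Since $|I| = b_{M^*} - a_{M^*} + 1$, length-optimality of $[a_M, b_M]$ gives $b_M - a_M \leq b_{M^*} - a_{M^*}$, and combining this with $a_M \geq \oa_{M^*} = a_{M^*}+\delta$, the $P_M$-maximizing property, and the endpoint monotonicity encoded in Lemma~\ref{a_b_one_problem} rules out $b_M < b_{M^*}+\delta$.

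For Part~\ref{the.thm.a.mono}, it suffices to compare consecutive $M_1 < M_2$ in $\mM_a$. If $\oa_{M_1} = \oa_{M_2}$, a common witness $M_\ell$ serves both indices, and the required $b_{M_1} + \oa_{M_1} - a_{M_1} \leq b_{M_2} + \oa_{M_2} - a_{M_2}$ reduces to the length inequality $b_{M_1} - a_{M_1} \leq b_{M_2} - a_{M_2}$ supplied by Lemma~\ref{adjust one more} part~\ref{laom.d>1} applied to $M^* = M_2$ on $[M_\ell, M_2]$. If $\oa_{M_1} < \oa_{M_2}$, a new lower-endpoint record is set at some $M_1 < M' \leq M_2$ with $a_{M'} = \oa_{M_2}$; applying Part~\ref{the.thm.a.bond} to $(M^*, M) = (M_1, M')$ gives $b_{M'} \geq b_{M_1} + \oa_{M_1} - a_{M_1}$, and a direct endpoint comparison between $M'$ and $M_2$ closes the chain. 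The principal obstacle throughout is the extension step in Part~\ref{the.thm.a.bond}: unimodality alone guarantees $P_M(I) \geq 1-\alpha$ only on $[M_\ell, M^*]$, so reaching arbitrary $M > M^*$ with $a_M \geq \oa_{M^*}$ requires delicately combining Lemma~\ref{peak property} (to locate the peak of $P_\cdot(I)$ sufficiently to the right) with the structural hypothesis on $a_M$ and the $\alpha$-max-optimality of the surrounding intervals.
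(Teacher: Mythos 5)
Your reduction of Parts~\ref{the.thm.b.bond}--\ref{the.thm.b.mono} to Parts~\ref{the.thm.a.bond}--\ref{the.thm.a.mono} by reflection, and your case split for Part~\ref{the.thm.a.mono}, both match the paper. The problem is the core of Part~\ref{the.thm.a.bond}. You try to show that the \emph{shifted} interval $I=[\oa_{M^*},b_{M^*}+\delta]$ satisfies $P_M(I)\ge 1-\alpha$ and then extract $b_M\ge b_{M^*}+\delta$. Both halves of this fail. First, unimodality of $M'\mapsto P_{M'}(I)$ only propagates the level-$\alpha$ property from $\{M_\ell,M^*\}$ across $[M_\ell,M^*]$ and then up to the peak $M(I)$; nothing in the hypotheses places $M$ below $M(I)$, and indeed for $M$ much larger than $M^*$ (still allowed, since only $a_M\ge\oa_{M^*}$ is assumed) the fixed interval $I$ has $P_M(I)$ tending to $0$, so the intermediate claim $P_M(I)\ge 1-\alpha$ is false in general. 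Second, even where it holds, it yields only $b_M-a_M\le b_{M^*}-a_{M^*}$ by $\alpha$-optimality of $[a_M,b_M]$; combined with $a_M\ge a_{M^*}+\delta$ this bounds $b_M$ from \emph{above} ($b_M\le a_M+b_{M^*}-a_{M^*}$), not from below, and Lemma~\ref{a_b_one_problem} contributes nothing here because its hypotheses (a lower endpoint \emph{decreasing} as $M$ increases) are not triggered when $a_M>a_{M^*}$. So the step ``rules out $b_M<b_{M^*}+\delta$'' is a non sequitur.

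The paper closes this gap by working with a different interval, $[a_{M_\ell},b_M]$, anchored at the witness $a_{M_\ell}=\oa_{M^*}$ on the left and at the \emph{actual} $b_M$ on the right. One first shows $P_{M'}([a_{M_\ell},b_{M_\ell}])<1-\alpha$ for $M'\ge M^*$ (because $[a_{M_\ell},b_{M_\ell}]\subsetneqq[a_{M^*},b_{M^*}]$ forces $M^*\ge M(a_{M_\ell},b_{M_\ell})$ via Lemma~\ref{lem:uni.intvl}), which rules out $b_M\le b_{M_\ell}$ since otherwise $[a_M,b_M]\subseteq[a_{M_\ell},b_{M_\ell}]$ would not be level-$\alpha$ at $M$. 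With $b_M>b_{M_\ell}$, the interval $[a_{M_\ell},b_M]$ contains both $[a_{M_\ell},b_{M_\ell}]$ and $[a_M,b_M]$, hence is level-$\alpha$ at $M_\ell$ and at $M$, hence at $M^*$ in between by unimodality in $M$; $\alpha$-optimality of $[a_{M^*},b_{M^*}]$ then gives $b_M-a_{M_\ell}\ge b_{M^*}-a_{M^*}$, which is exactly $b_M\ge b_{M^*}+\oa_{M^*}-a_{M^*}$. You would need to replace your extension-past-$M^*$ argument with something of this form for the proof to go through.
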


\begin{proof} For part~\ref{the.thm.a.bond}, there must be $M_\ell< M^*$ such that $a_{M_\ell} = \oa_{M^*}>a_{M^*}$. Then we have $b_{M_\ell } \leq b_{M^*}$ by Lemma \ref{a_b_one_problem}, part~\ref{lemma.a.prob}. Combining these two, we have $[a_{M_\ell} , b_{M_\ell}] \subsetneqq [a_{M^*} , b_{M^*}]$, thus
$$
P_{M^*}( [a_{M_\ell} , b_{M_\ell}]) < 1- \alpha\le P_{M_\ell}( [a_{M_\ell} , b_{M_\ell}]),
$$
by length optimality of the latter. This implies that $M^*\geq M(a_{M_\ell} , b_{M_\ell})$ by Lemma~\ref{lem:uni.intvl}. We also have that
\begin{equation}\label{eq: the thm 1}
P_{M}( [a_{M_\ell} , b_{M_\ell}]) \leq P_{M^*}( [a_{M_\ell} , b_{M_\ell}]) < 1- \alpha    
\end{equation}
since $M > M^* \geq M(a_{M_\ell} , b_{M_\ell})$. If it were that $b_{M} \leq b_{M_\ell}$, then we would have $[a_{M} , b_{M}] \subseteq [a_{M_\ell} , b_{M_\ell}]$ and then \eqref{eq: the thm 1} would imply that
$$P_{M}( [a_{M} , b_{M}]) \leq P_{M}\left( [a_{M_\ell} , b_{M_\ell}]\right) < 1 - \alpha,$$ a contradiction. Thus it must be that $b_{M} > b_{M_\ell}$. Then we have $[a_{M_\ell} , b_{M_\ell}] \subseteq [a_{M_\ell} , b_{M}]$ and $[a_{M} , b_{M}] \subseteq [a_{M_\ell} , b_{M}]$,
$$P_{M_\ell}( [a_{M_\ell} , b_{M}]) \geq P_{M_\ell}( [a_{M_\ell} , b_{M_\ell}]) \geq 1 - \alpha$$ and $$P_{M}( [a_{M_\ell} , b_{M}]) \geq P_{M}( [a_{M} , b_{M}]) \geq 1 - \alpha\,.$$ Thus, by unimodality,
$P_{M^*}( [a_{M_\ell} , b_{M}])  \geq 1  - \alpha$, and so by length optimality we have
\begin{equation}\label{eq: the thm 2}
b_{M^*} - a_{M^*} \leq b_{M} - a_{M_\ell}.    
\end{equation}
Note that $b_{M^*}  + \oa_{M^*} - a_{M^*} = a_{M_\ell} + b_{M^*} - a_{M^*}$. By the inequality~\eqref{eq: the thm 2}, we have 
$$b_{M^*}  + \oa_{M^*} - a_{M^*}  = a_{M_\ell} + b_{M^*} - a_{M^*}  \leq a_{M_\ell} + b_{M} - a_{M_\ell} = b_{M},$$
concluding the proof of part~\ref{the.thm.a.bond}.

For \ref{the.thm.a.mono}, consider $M_1,M_2 \in \mM_a$ with $M_1 < M_2$. We have $\oa_{M_1} \leq \oa_{M_2}$. If strict inequality holds then there is $M_{\ell,2} \in \mM$ satisfying that $M_1 < M_{\ell,2} < M_2$ and $a_{M_{\ell,2}} = \oa_{M_2} > \oa_{M_1}$. Then using part~\ref{the.thm.a.bond} and Lemma~\ref{adjust one more}, respectively, for the following inequalities,
$$ b_{M_1} + \oa_{M_1} - a_{M_1} \leq b_{M_{\ell,2}} < b_{M_2} + \oa_{M_2} - a_{M_2}.$$
Otherwise $\oa_{M_1} = \oa_{M_2}$ whence there is $M_{\ell,1} \in \mM$ with $M_{\ell,1} < M_1$ and $a_{M_{\ell,1}} = \oa_{M_1}= \oa_{M_2}$, thus $b_{M_1} - a_{M_1} \leq b_{M_2}  - a_{M_2}$ by Lemma~\ref{adjust one more}. This establishes part~\ref{the.thm.a.mono}.

The proof of parts~\ref{the.thm.b.bond}-\ref{the.thm.b.mono}  involves similar arguments, after reflecting the endpoint sequences $\wtilde{a}_{N - M} = n - b_M$, $\wtilde{b}_{N - M} = n - a_M$ and using Lemma~\ref{lemma: opt sym}. We omit the rest of the details.
\end{proof}

The next lemma shows that \AMO{} intervals for $\mM$ can be reflected across $N/2$ to produce \AMO{} intervals in the reflected set for $\mM'=\{N-M:\; M
\in \mM\}$.

\begin{lemma}\label{lemma: opt sym}
If $\{[a_M, b_M]:\; M \in \mM\}$ are \AMO{} for $\mM$, then  $\{[\wtilde{a}_M := n - b_{N - M}, \wtilde{b}_M := n - a_{N - M}] :\; M \in \wtilde{\mM}\}$ are  \AMO{} for $\wtilde{\mM}:=\{N-M:\; M \in \mM\}$.
\end{lemma}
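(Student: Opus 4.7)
The plan is to exploit the reflection symmetry of the hypergeometric distribution recorded in part~\ref{bscs.lem.symms} of Lemma~\ref{lem:hyp.basics}, namely that if $X\sim\mbox{Hyper}(M,n,N)$ then $n-X\sim\mbox{Hyper}(N-M,n,N)$. The useful consequence is that for any $S\subseteq[n]$ and any $M\in[N]$, the reflected set $n-S=\{n-s:s\in S\}$ satisfies $|n-S|=|S|$ and $P_{N-M}(n-S)=P_M(S)$. Fixing $M\in\wtilde{\mM}$ and setting $M'=N-M\in\mM$, the reflected interval is $[\wtilde a_M,\wtilde b_M]=n-[a_{M'},b_{M'}]$, which has the same cardinality as $[a_{M'},b_{M'}]$ and satisfies $P_M([\wtilde a_M,\wtilde b_M])=P_{M'}([a_{M'},b_{M'}])$.

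First I would check the level-$\alpha$ property, which is immediate: $P_M([\wtilde a_M,\wtilde b_M])=P_{M'}([a_{M'},b_{M'}])\ge 1-\alpha$. Next, for $\alpha$-optimality, suppose $T\subseteq[n]$ with $|T|<|[\wtilde a_M,\wtilde b_M]|=b_{M'}-a_{M'}+1$. Then $n-T$ has the same smaller cardinality, so by $\alpha$-optimality of $[a_{M'},b_{M'}]$ for $M'$, we get $P_{M'}(n-T)<1-\alpha$; reflecting, $P_M(T)=P_{M'}(n-T)<1-\alpha$, establishing $\alpha$-optimality at $M$.

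For the $P_M$-maximizing property, take any $T\subseteq[n]$ with $|T|=|[\wtilde a_M,\wtilde b_M]|$. Then $|n-T|=|[a_{M'},b_{M'}]|$, so by $P_{M'}$-maximality $P_{M'}([a_{M'},b_{M'}])\ge P_{M'}(n-T)$, which by the reflection identity rewrites as $P_M([\wtilde a_M,\wtilde b_M])\ge P_M(T)$. Positivity of $P_M$-probabilities on $[\wtilde a_M,\wtilde b_M]$ follows because $x\in[a_{M'},b_{M'}]$ implies $P_{M'}(x)>0$, and $P_M(n-x)=P_{M'}(x)>0$ by the same reflection identity applied pointwise.

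There is no substantive obstacle here; the argument is essentially bookkeeping around the reflection bijection $S\mapsto n-S$ combined with the distributional identity from Lemma~\ref{lem:hyp.basics}. The only care needed is to keep the roles of $M$ and $M'=N-M$ straight and to note that reflection preserves both cardinality and $P$-measure in the paired sense, so all three parts of Definition~\ref{def:a.opt} transfer symbol-for-symbol.
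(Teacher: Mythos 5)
Your argument is correct and is exactly the paper's approach: the paper's own proof also fixes $M$, invokes the reflection identity $n-X\sim\mbox{Hyper}(N-M,n,N)$ from Lemma~\ref{lem:hyp.basics}, and notes that the three properties of Definition~\ref{def:a.opt} transfer under the bijection $S\mapsto n-S$, simply omitting the bookkeeping you spell out. Nothing further is needed.
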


\begin{proof}[Proof of Lemma~\ref{lemma: opt sym}] Fix $M\in\mM$ and omit it from the notation. Let $X\sim\mbox{Hyper}(M,n,N)$ so that $\wtilde{X}:=n-X\sim\mbox{Hyper}(N-M,n,N)$ by \eqref{X.n-X.sym}. The three properties in Definition~\ref{def:a.opt} are straightforward to verify using that $P(\wtilde{X}\in[a,b]) = P(X\in[n-b,n-a])$. We omit the details.\end{proof}

The next lemma establishes the point masses inside a probability-maximizing (e.g., \AMO{}) set  have probabilities no less than those outside, and is used to prove that monotonicity can only be violated one endpoint at a time in \AMO{} acceptance intervals, and that probability maximizing sets must be intervals.

\begin{lemma} \label{larger} A set~$S\subseteq [x_{\min},x_{\max}]$ is $P_M$-maximizing if and only if $x\in S$, $y\not\in S$ implies that $P_M(x)\ge P_M(y)$.
\end{lemma}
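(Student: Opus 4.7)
The plan is to prove both implications by a simple swap argument.

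For the forward direction ($\Rightarrow$), I will argue by contradiction. Suppose $S$ is $P_M$-maximizing but there exist $x\in S$, $y\notin S$ with $P_M(x)<P_M(y)$. Then the swapped set $S^{*}:=(S\setminus\{x\})\cup\{y\}$ has $|S^{*}|=|S|$ and
\begin{equation*}
P_M(S^{*})-P_M(S)=P_M(y)-P_M(x)>0,
\end{equation*}
contradicting the $P_M$-maximizing property. Hence every $x\in S$ and $y\notin S$ must satisfy $P_M(x)\ge P_M(y)$.

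For the reverse direction ($\Leftarrow$), assume the pointwise condition holds. First note that the hypothesis $S\subseteq[x_{\min},x_{\max}]$ already guarantees the positivity requirement in Definition~\ref{def:a.opt}, since $P_M(x)>0$ precisely on $[x_{\min},x_{\max}]$. It remains to show that $P_M(S)\ge P_M(S^{*})$ for every $S^{*}\subseteq[n]$ with $|S^{*}|=|S|$. Decomposing via the symmetric difference,
\begin{equation*}
P_M(S)-P_M(S^{*})=P_M(S\setminus S^{*})-P_M(S^{*}\setminus S),
\end{equation*}
and because $|S|=|S^{*}|$ we have $|S\setminus S^{*}|=|S^{*}\setminus S|=:k$. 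If $k=0$ the claim is trivial; otherwise every $x\in S\setminus S^{*}$ lies in $S$ and every $y\in S^{*}\setminus S$ lies outside $S$, so by hypothesis
\begin{equation*}
\min_{x\in S\setminus S^{*}}P_M(x)\;\ge\;\max_{y\in S^{*}\setminus S}P_M(y).
\end{equation*}
Multiplying both sides by $k$ and bounding the two sums by $k$ times their min/max respectively yields $P_M(S\setminus S^{*})\ge P_M(S^{*}\setminus S)$, which finishes the proof.

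There is no real obstacle here: this is a textbook greedy/exchange argument. The only subtlety worth flagging is that the positivity clause of Definition~\ref{def:a.opt} is not vacuous in principle, but it comes for free from the stated assumption $S\subseteq[x_{\min},x_{\max}]$ together with the characterization of the support of $P_M$ recorded in \eqref{x.range}.
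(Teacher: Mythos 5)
Your proof is correct and follows essentially the same route as the paper's: the necessity is the same one-element swap argument, and the sufficiency is the same comparison of $P_M(S\setminus S^*)$ with $P_M(S^*\setminus S)$ using $|S\setminus S^*|=|S^*\setminus S|$, which the paper states tersely as ``summing over $x\in S\setminus S^*$ vs.\ $y\in S^*\setminus S$.'' Your added remark that positivity of the points of $S$ follows from $S\subseteq[x_{\min},x_{\max}]$ and \eqref{x.range} is a correct and harmless elaboration.
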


\begin{proof}[Proof of Lemma~\ref{larger}] The condition is obviously necessary since otherwise replacing $x$ in $S$ by $y$ would increase $P_M(S)$. The converse follows by summing over $x\in S\setminus S^*$ vs.\ $y\in S^*\setminus S$ for $|S^*|=|S|$.
\end{proof}

 The following lemma establishes that, in \AMO{} intervals, monotonicity can only be violated one endpoint at a time. This property is used to prove that the sets~$\mM_a$ and $\mM_b$ in Theorem~\ref{thm:adj} are disjoint.

\begin{lemma} \label{a_b_one_problem} Suppose $[a,b]$ and $[a',b']$ are $P_M$ and $P_{M'}$ maximizing, respectively, and $M'>M$.
\begin{enumerate}[label=(\roman*)]
    \item\label{lemma.a.prob} If $a'<a$, then $b'\ge b$.
    \item\label{lemma.b.prob} If $b'<b$, then $a'\ge a$.
    \item\label{lemma.or.prob} $a'\ge a$ or $b'\ge b$.
\end{enumerate}
\end{lemma}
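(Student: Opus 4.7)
The plan is to prove part~\ref{lemma.a.prob} by contradiction using Lemma~\ref{larger} together with the \emph{strict} monotone likelihood ratio of Lemma~\ref{prob ratio}, and then to obtain parts~\ref{lemma.b.prob} and \ref{lemma.or.prob} as immediate reformulations of \ref{lemma.a.prob}. For part~\ref{lemma.a.prob}, suppose for contradiction that $a'<a$ and $b'<b$. Then $a'\in[a',b']\setminus[a,b]$ and $b\in[a,b]\setminus[a',b']$, so Lemma~\ref{larger} applied to the $P_M$- and $P_{M'}$-maximizing sets $[a,b]$ and $[a',b']$ respectively yields $P_M(b)\ge P_M(a')$ and $P_{M'}(a')\ge P_{M'}(b)$. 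Multiplying these gives
\[
P_{M'}(a')\,P_M(b)\ge P_{M'}(b)\,P_M(a').
\]

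The goal is to contradict this by iterating Lemma~\ref{prob ratio} from $M$ to $M'$ to obtain
\[
\frac{P_{M'}(b)}{P_M(b)}>\frac{P_{M'}(a')}{P_M(a')},\qmq{i.e.,} P_{M'}(b)\,P_M(a')>P_{M'}(a')\,P_M(b).
\]
Invoking Lemma~\ref{prob ratio} at each step $M''\in[M,M'-1]$ requires verifying $0<b-a'<N-n$ and $b\le M''\le N-n+a'$. These follow from the fact that any probability-maximizing set lies in the support of its density: $b\le\min\{n,M\}\le M''$ from $[a,b]$, and $a'\ge\max\{0,M'+n-N\}\ge M''+n-N$ from $[a',b']$. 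Combining these also gives $b-a'\le M-(M'+n-N)=N-n-(M'-M)<N-n$, and $b-a'\ge 1>0$ because $a'<a\le b$. These same bounds place $a'$ and $b$ in the common support of every intermediate $P_{M''}$, so all four probabilities $P_M(a'),P_M(b),P_{M'}(a'),P_{M'}(b)$ are strictly positive and the multiplicative manipulations proceed without division-by-zero issues.

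Parts~\ref{lemma.b.prob} and \ref{lemma.or.prob} then require no additional work: \ref{lemma.b.prob} is precisely the contrapositive of \ref{lemma.a.prob}, and \ref{lemma.or.prob} is an immediate restatement (either $a'\ge a$, or else $a'<a$ and hence $b'\ge b$ by \ref{lemma.a.prob}). The main obstacle in the argument is the bookkeeping to ensure that the hypotheses of Lemma~\ref{prob ratio} hold at every intermediate $M''$ so that the MLR inequality can be chained and is strict; once this is checked, the contradiction follows in one line and the other two parts are purely logical.
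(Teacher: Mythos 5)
Your proof is correct and follows essentially the same route as the paper's: both arguments combine Lemma~\ref{larger} (points inside a maximizing set have no smaller probability than points outside) with the strict ratio monotonicity of Lemma~\ref{prob ratio} applied to the pair $a'<b$, and both reduce parts~\ref{lemma.b.prob} and \ref{lemma.or.prob} to part~\ref{lemma.a.prob} by pure logic. The only differences are cosmetic: you phrase part~\ref{lemma.a.prob} as a contradiction rather than concluding directly that $b\in[a',b']$, and you spell out the support/hypothesis checks for Lemma~\ref{prob ratio} that the paper only gestures at.
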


\begin{proof}[Proof of Lemma~\ref{a_b_one_problem}] It suffices to prove part~\ref{lemma.a.prob} since \ref{lemma.a.prob}, \ref{lemma.b.prob}, and \ref{lemma.or.prob} are logically equivalent.

Assume $a'<a$. Then $a'$ is outside $[a,b]$, and hence $P_M(a')\le P_M(b)$ by Lemma~\ref{larger}. Then $P_{M'}(a')<P_{M'}(b)$ by  Lemma~\ref{prob ratio}, the hypotheses of which follow from \eqref{x.range} and the fact that the intervals are probability maximizing, hence its points have positive probability. Since $[a',b']$ is $P_{M'}$ maximizing, it must contain $b$.
\end{proof}

\section{Auxiliary results for Section~\ref{sec:mod.minacc}}

\begin{lemma}\label{lem:a*b*.mono}
Under the assumptions of Theorem~\ref{thm:a*b*}, the intervals $\{[a^*_M, b^*_M]:\; M\in[N]\}$ defined by \eqref{a*b*.def} have nondecreasing endpoint sequences.
\end{lemma}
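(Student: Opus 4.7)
The plan is to verify the two monotonicity inequalities $a^*_M \le a^*_{M+1}$ and $b^*_M \le b^*_{M+1}$ for each adjacent pair, splitting into three regimes based on the piecewise definition~\eqref{a*b*.def}: both indices in the first piece $[0, \lceil N/2 \rceil - 1]$, both in the second piece $[\lfloor N/2 \rfloor + 1, N]$, or straddling a boundary. Monotonicity within the first piece is immediate from Theorem~\ref{thm:adj} applied to the AMO intervals on $[\lfloor N/2 \rfloor]$. Monotonicity within the second piece follows by unraveling the reflection: as $M$ grows by one, $N - M$ decreases by one, so the (already-monotonic) sequences $a^{adj}_{N-M}$ and $b^{adj}_{N-M}$ weakly decrease, making $a^*_M = n - b^{adj}_{N-M}$ and $b^*_M = n - a^{adj}_{N-M}$ weakly increase.

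The substantive content is at the boundary. For $N$ odd, both the $a^*$- and $b^*$-conditions at the single transition $(N-1)/2 \to (N+1)/2$ collapse to the one inequality $a^{adj}_{(N-1)/2} + b^{adj}_{(N-1)/2} \le n$. For $N$ even, the two transitions through $[h_{\alpha/2}, n - h_{\alpha/2}]$ reduce to $a^{adj}_{N/2-1} \le h_{\alpha/2}$ and $b^{adj}_{N/2-1} + h_{\alpha/2} \le n$.

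My plan for these inequalities is first to establish a pre-adjustment bound $a_M + b_M \le n$ for any AMO interval at $M < N/2$, and then to propagate this bound through the shift of Theorem~\ref{thm:adj}. The pre-adjustment bound rests on Lemma~\ref{lemma: opt sym} (the reflection $[n-b_M,n-a_M]$ of an AMO interval at $M$ is AMO at $N - M \ge M$) together with a shift-monotonicity of $P_M$-maximizing intervals derived from the MLR in Lemma~\ref{prob ratio}: when $M$ increases, $P_M$-maximizing intervals of a given length cannot slide leftward, so the $P_M$-maximizing interval at $M < N/2$ must lie weakly left of its reflection, giving $a_M \le n - b_M$. For the additional $N$-even condition $a^{adj}_{N/2-1} \le h_{\alpha/2}$, I would use the identity $h_{\alpha/2} = \min\{a_{N/2}, n - b_{N/2}\}$ noted after Theorem~\ref{thm:a*b*} together with the adjusted lower-endpoint monotonicity $a^{adj}_{N/2-1} \le a^{adj}_{N/2}$ from Theorem~\ref{thm:adj}.

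The main obstacle is propagating the pre-adjustment bound through the shift of Theorem~\ref{thm:adj}: in the $\mM_a$ case the shift by $\delta = \oa_{M_0} - a_{M_0}$ increases the sum $a + b$ by $2\delta$, so a naive application of the pre-adjustment bound at $M_0$ fails. I expect the resolution to come from part~\ref{laom.b.shft} of Lemma~\ref{adjust one more}, which identifies the shifted upper endpoint $b^{adj}_{M_0} = b_{M_0} + \delta$ in terms of $b_{M_\ell}$ where $\oa_{M_0} = a_{M_\ell}$ for some $M_\ell < M_0$, so that the desired sum $a^{adj}_{M_0} + b^{adj}_{M_0} \le n$ can be inherited from the pre-adjustment bound at the strictly earlier index $M_\ell < M_0 \le \lfloor N/2 \rfloor$, where $M_\ell < N/2$ is automatic. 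A mirror-image argument, using Lemma~\ref{lemma: opt sym} to swap the roles of $\mM_a$ and $\mM_b$, handles the $\mM_b$ case, and the trivial case $M_0 \notin \mM_a \cup \mM_b$ retains the pre-adjustment bound directly.
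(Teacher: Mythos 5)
Your reduction to the boundary inequalities matches the paper's, and your pre-adjustment bound $a_M+b_M\le n$ for $M<N/2$ is exactly the paper's Lemma~\ref{lemma: n - b > a}, provable as you indicate from Lemma~\ref{lemma: opt sym} together with the same-length comparison coming from Lemmas~\ref{larger} and \ref{prob ratio}. The first genuine gap is your propagation step through an $\mM_a$-shift: part~\ref{laom.b.shft} of Lemma~\ref{adjust one more} says $b_{M_0}+\delta> b_{M_\ell}$, an inequality pointing the \emph{wrong} way, so the shifted interval $[a_{M_\ell},\,b_{M_0}+\delta]$ strictly contains $[a_{M_\ell},b_{M_\ell}]$ and the bound $a_{M_\ell}+b_{M_\ell}\le n$ cannot be ``inherited'' from $M_\ell$; nothing in your outline bounds $b_{M_0}+\delta$ above by $n-a_{M_\ell}$. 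The paper needs a different idea here: it extends the \AMO{} family past $\lfloor N/2\rfloor$ by reflection ($a'_M=n-b_{N-M}$, $b'_M=n-a_{N-M}$) and applies part~\ref{the.thm.a.bond} of Lemma~\ref{the thm} at the reflected index $N-M_\ell>M_0$, whose hypothesis $a'_{N-M_\ell}=n-b_{M_\ell}\ge a_{M_\ell}=\oa_{M_0}$ is precisely the pre-adjustment bound at $M_\ell$, and whose conclusion is the needed $b^{adj}_{M_0}\le n-a^{adj}_{M_0}$.

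The second gap is the $N$-even inequality $a^{adj}_{N/2-1}\le h_{\alpha/2}$: it does not follow from $a^{adj}_{N/2-1}\le a^{adj}_{N/2}$ plus $h_{\alpha/2}=\min\{a_{N/2},n-b_{N/2}\}$. Those facts bound $a^{adj}_{N/2-1}$ only by $a_{N/2}$ (and even that presumes $N/2\notin\mM_a$, which you have not argued), whereas $a_{N/2}\ge h_{\alpha/2}$ can be strict because the \AMO{} interval at $N/2$ need not be symmetric --- that asymmetry is the whole reason the third case of \eqref{a*b*.def} exists. For instance, a configuration with $n=10$, $[a^{adj}_{N/2-1},b^{adj}_{N/2-1}]=[4,6]$, $[a_{N/2},b_{N/2}]=[4,8]$, $h_{\alpha/2}=2$ is consistent with everything you cite (including your sum bound) yet violates the target inequality. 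What is actually required is part~\ref{prt:a<h} of Lemma~\ref{lem:a.largest}: every original lower endpoint $a_M$ with $M\le N/2-1$ satisfies $a_M\le h_{\alpha/2}$, which rests on the unimodality in $M$ of $P_{M'}([a_M,n-a_M])$ (Lemma~\ref{lem:uni.intvl}) --- an ingredient absent from your outline. The same fact is what shows $N/2\notin\mM_a$, and hence is also needed for the companion inequality $b^{adj}_{N/2-1}\le b^{adj}_{N/2}=b_{N/2}\le n-h_{\alpha/2}$.
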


\begin{proof} Monotonicity holds separately for $M$ in the first and second cases of \eqref{a*b*.def} by Theorem~\ref{thm:adj}. We must show monotonicity ``across'' $N/2$, i.e., 
\begin{equation}\label{a*.mono.Neven}
a_{N/2-1}^*\le a_{N/2}^*\le a_{N/2+1}^*\qmq{and}b_{N/2-1}^*\le b_{N/2}^*\le b_{N/2+1}^*\qm{if $N$ is even, }
\end{equation}and
\begin{equation}\label{a*.mono.Nodd}
a_{\lfloor N/2\rfloor}^*\le a_{\lfloor N/2\rfloor+1}^*\qmq{and}b_{\lfloor N/2\rfloor}^*\le b_{\lfloor N/2\rfloor+1}^*\qm{if $N$ is odd.}
\end{equation}
For \eqref{a*.mono.Neven} take $N$ even, and both sets of inequalities in \eqref{a*.mono.Neven} are equivalent to
\begin{equation}\label{a*.mono.Neven.alt}
a_{N/2-1}^{adj}\le h_{\alpha/2}\le n	-b_{N/2-1}^{adj}.
\end{equation}
By Lemma~\ref{lem:a.largest} we have $a_{N/2} \geq h_{\alpha/2} \geq a_M$ for $M \in [N/2 - 1]$, so $a_{N/2} = \oa_{N/2}$, i.e., $N/2 \notin \mM_a$. Note that, by virtue of \eqref{oa.ub.def}, the upper endpoint~$b_M$ for the largest $M$ in the index set never gets adjusted. Here the index set is $[N/2]$ so $\ub_{N/2} =b_{N/2}$, thus $N/2 \notin \mM_b$. Therefore $[a_{N/2},b_{N/2}] = [a^{adj}_{N/2},b^{adj}_{N/2}]$. By  Lemma~\ref{lem:a.largest}  
 we also have that $b_{N/2}\le n-h_{\alpha/2}$, and combining these last two gives 
 $$h_{\alpha/2}\le n- b_{N/2} = n-b^{adj}_{N/2}\le n-b^{adj}_{N/2-1},$$  giving the second inequality in \eqref{a*.mono.Neven.alt}.
 
For the first inequality in \eqref{a*.mono.Neven.alt}, if $a_{N/2-1}\ge a_{N/2-1}^{adj}$ then $a_{N/2-1}^{adj}\le a_{N/2-1}\le h_{\alpha/2}$, using Lemma~\ref{lem:a.largest} for this last inequality. Otherwise, $a_{N/2-1}< a_{N/2-1}^{adj}$, meaning $N/2 - 1 \in \mM_a$ so, in particular, there exists $M^* < N/2 - 1$ such that $a_{M^*} = \oa_{N/2 - 1} = a^{adj}_{N/2 - 1}$.  Then $a^{adj}_{N/2 - 1}=a_{M^*}\le h_{\alpha/2}$, using Lemma~\ref{lem:a.largest} for the inequality.  

To prove \eqref{a*.mono.Nodd} take $N$ odd and let $M^*=\lfloor N/2\rfloor$. Both the inequalities in \eqref{a*.mono.Nodd} are equivalent to 
\begin{equation*}
a_{M^*}^{adj}+b_{M^*}^{adj}\le n.
\end{equation*}
If no adjustment is applied to $[a_{M^*}, b_{M^*}]$, i.e., $[a_{M^*}, b_{M^*}] = [a^{adj}_{M^*},b^{adj}_{M^*}]$, then this holds by Lemma~\ref{lemma: n - b > a}. Otherwise $M^*$ is in $\mM_a$ or $\mM_b$. If the latter then $[a^{adj}_{M^*},b^{adj}_{M^*}]$ is shifted down from $[a_{M^*}, b_{M^*}]$, i.e., $a^{adj}_{M^*}\le a_{M^*}$ and $b^{adj}_{M^*}\le b_{M^*}$, thus $$a_{M^*}^{adj}+b_{M^*}^{adj}\le a_{M^*}+b_{M^*}\le n$$ using Lemma~\ref{lemma: n - b > a}. 

Otherwise $M^* \in \mM_a$ meaning there is $M' < M^*$ such that $a_{M'} = \oa_{M^*} = a^{adj}_{M^*}$. By Lemma~\ref{lemma: n - b > a} we know that
\begin{equation}\label{M'.mdpnt.n}
a_{M'}+b_{M'}\le n.
\end{equation}
Define
\begin{equation*}
a'_M=\begin{cases}
a_M,&\mbox{for $M\in[M^*]$}\\
n-b_{N-M},&\mbox{for $M^*<M\le N$,}
\end{cases}
\quad
b'_M=\begin{cases}
b_M,&\mbox{for $M\in[M^*]$}\\
n-a_{N-M},&\mbox{for $M^*<M\le N$.}
\end{cases}
\end{equation*} Then $\{[a'_M,b'_M]:\; M\in[N]\}$ are \AMO{} since the $[a_M,b_M]$ are. We now apply part~\ref{the.thm.a.bond} of Lemma~\ref{the thm} to this set, with $M:=N-M'>M^*$. We have 
$$a'_M = n-b_{M'}\ge a_{M'} = \oa_{M^*} = \overline{a}'_{M^*},$$ where the inequality is by \eqref{M'.mdpnt.n}. Then using Lemma~\ref{the thm} for the following inequality, 
$$b_{M^*}^{adj} = b_{M^*}+\oa_{M^*}-a_{M^*}= b'_{M^*}+\overline{a}'_{M^*}-a'_{M^*}\le b_M'=n-a_{M'}=n-a_{M^*}^{adj},$$ the desired inequality.
\end{proof}

\begin{lemma} \label{lem:a.largest} If $N$ is even and $\{[a_M, b_M]:\; M\in[N/2]\}$ is any  \AMO{} set, then
\begin{enumerate}[label=(\roman*)]
\item\label{prt:cntr.in.hn-h} $[a_{N/2}, b_{N/2}] \subseteq [h_{\alpha/2}, n- h_{\alpha/2}]$,
\item\label{prt:a<h} for all   $M\in[N/2 - 1]$ we have $P_{N/2}( [a_M, n - a_M]) \geq 1- \alpha$ and $a_M\le h_{\alpha/2}$.
\end{enumerate}
\end{lemma}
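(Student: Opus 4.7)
The plan is to prove part (ii) first, since it reduces cleanly to a unimodality-plus-symmetry argument, and then to tackle part (i) via a parity case analysis of $P_{N/2}$-maximizing intervals.

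For part (ii), fix $M \in [N/2 - 1]$. I would first invoke Lemma~\ref{lemma: n - b > a} (which gives $a_M + b_M \le n$ for $M \le N/2$) to obtain $[a_M, b_M] \subseteq [a_M, n - a_M]$ and hence $P_M([a_M, n - a_M]) \ge P_M([a_M, b_M]) \ge 1 - \alpha$. The key observation is that the distributional identity~\eqref{X.n-X.sym} gives $P_{M'}([a, n - a]) = P_{N - M'}([a, n - a])$ for every symmetric-about-$n/2$ interval, so the map $M' \mapsto P_{M'}([a_M, n - a_M])$ is symmetric about $M' = N/2$. Combined with its unimodality in $M'$ from Lemma~\ref{lem:uni.intvl}, this forces the maximum to occur at $M' = N/2$, yielding $P_{N/2}([a_M, n - a_M]) \ge P_M([a_M, n - a_M]) \ge 1 - \alpha$. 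The bound $a_M \le h_{\alpha/2}$ is then immediate: $[a_M, n - a_M]$ is a symmetric level-$\alpha$ acceptance interval for $H(N/2)$, and $[h_{\alpha/2}, n - h_{\alpha/2}]$ is the shortest such (as established in the proof of Theorem~\ref{thm:a*b*}), so $n - 2 a_M + 1 \ge n - 2 h_{\alpha/2} + 1$.

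For part (i), by Lemma~\ref{lemma: opt sym} the reflection $[n - b_{N/2}, n - a_{N/2}]$ is also \AMO{} for $M = N/2$, so it suffices to prove $a_{N/2} \ge h_{\alpha/2}$; applying this to the reflected interval then yields $b_{N/2} \le n - h_{\alpha/2}$. I would set $L = b_{N/2} - a_{N/2} + 1$ and use the \AO{} property together with the level-$\alpha$ interval $[h_{\alpha/2}, n - h_{\alpha/2}]$ to conclude $L \le n - 2h_{\alpha/2} + 1$. Then I would pin down the possible shapes of the $P_{N/2}$-maximizing interval: by strict unimodality (Lemma~\ref{lem:hyp.uni.x}) and the symmetry $P_{N/2}(x) = P_{N/2}(n - x)$, the values $P_{N/2}(x)$ come in tied symmetric pairs about $n/2$ that strictly decrease as $|x - n/2|$ grows, so by Lemma~\ref{larger} any $P_{N/2}$-maximizing interval of length $L$ consists of the $L$ points closest (in this tied order) to $n/2$. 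A short parity check across the four combinations of $n$ and $L$ then gives the uniform bound $a_{N/2} \ge \lceil (n - L)/2 \rceil \ge \lceil (2h_{\alpha/2} - 1)/2 \rceil = h_{\alpha/2}$.

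The main obstacle I anticipate is making the ``closest $L$ points'' step fully rigorous: when $L$ and the mode count share parity, the $P_{N/2}$-maximizing interval of length $L$ is not unique (there is a left-shifted and a right-shifted choice), so the bound on $a_{N/2}$ must hold in the worst case. The ceiling formulation unifies all four parity combinations, and in the two cases where $L$ and $n$ have the same parity, the estimate $L \le n - 2h_{\alpha/2} + 1$ automatically improves to $L \le n - 2h_{\alpha/2}$ by parity, which is exactly what the worst-case interval needs.
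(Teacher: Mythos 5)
Your proposal is correct, and part~\ref{prt:a<h} follows essentially the paper's own route: both use Lemma~\ref{lemma: n - b > a} to get $[a_M,b_M]\subseteq[a_M,n-a_M]$ and Lemma~\ref{lem:uni.intvl} to push the level-$\alpha$ bound to $M'=N/2$ (the paper evaluates at $M$ and $N-M$ and takes the minimum of the endpoints, you invoke symmetry of $M'\mapsto P_{M'}([a_M,n-a_M])$ about $N/2$; these are equivalent), and then both deduce $a_M\le h_{\alpha/2}$ from the definition of $h_{\alpha/2}$. Part~\ref{prt:cntr.in.hn-h} is where you genuinely diverge. The paper argues by contradiction with a single two-point comparison: if $a_{N/2}<h_{\alpha/2}$, then length-minimality against the level-$\alpha$ interval~\eqref{N/2.sym.int} forces $b_{N/2}<n-h_{\alpha/2}$, so $n-h_{\alpha/2}$ lies outside $[a_{N/2},b_{N/2}]$ while $a_{N/2}$ lies inside; Lemma~\ref{larger} then gives $P_{N/2}(a_{N/2})\ge P_{N/2}(n-h_{\alpha/2})=P_{N/2}(h_{\alpha/2})$, which contradicts strict unimodality (Lemma~\ref{lem:hyp.uni.x}) since $a_{N/2}<h_{\alpha/2}\le\lfloor n/2\rfloor$ both sit on the strictly increasing side, cf.\ \eqref{h<n/2}. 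Your route instead classifies all $P_{N/2}$-maximizing intervals of length $L$ (the $L$ highest-probability points, with a one-point tie-break when $L$ and the mode count share parity) and bounds $a_{N/2}\ge\lceil(n-L)/2\rceil\ge h_{\alpha/2}$ using $L\le n-2h_{\alpha/2}+1$; the ceiling/parity bookkeeping does close in all four cases (when $n-L$ is even, $L$ and $n-2h_{\alpha/2}+1$ indeed have opposite parity), and the tied-pairs structure you need follows from the strict unimodality in Lemma~\ref{lem:hyp.uni.x} plus the symmetry $P_{N/2}(x)=P_{N/2}(n-x)$ from \eqref{X.n-X.sym}, with the support being a symmetric interval so it never interferes. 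What the paper's argument buys is brevity and no case analysis; what yours buys is an explicit description of the $M=N/2$ acceptance interval (centered up to a one-point shift), which is more informative. Two small cautions: justify in one line that a symmetric unimodal sequence attains its maximum at $N/2$, and in part~\ref{prt:a<h} do not cite the proof of Theorem~\ref{thm:a*b*} for the minimality of $[h_{\alpha/2},n-h_{\alpha/2}]$ among symmetric intervals, since that theorem's proof relies (via Lemma~\ref{lem:a*b*.mono}) on the present lemma; instead derive $a_M\le h_{\alpha/2}$ directly from the definition of $h_{\alpha/2}$, as $P_{N/2}([a_M,n-a_M])\ge1-\alpha$ and symmetry give $2P_{N/2}(X<a_M)\le\alpha$.
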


\begin{proof} For part~\ref{prt:cntr.in.hn-h}, toward contradiction suppose that $a_{N/2} < h_{\alpha/2} $. Since  $[a_{N/2}, b_{N/2}]$ is length-minimizing, it must also be that $b_{N/2} < n - h_{\alpha/2}$, hence $n - h_{\alpha/2}\not\in [a_{N/2},b_{N/2}]$. Then by Lemma \ref{larger}, 
\begin{equation}
P_{N/2}(a_{N/2}) \geq P_{N/2}(n - h_{\alpha/2}) = P_{N/2}(h_{\alpha/2}),\label{inequality N/2}  
\end{equation} this last by symmetry. On the other hand, recall that $h_{\alpha/2}\le \lfloor n/2\rfloor$;  see \eqref{h<n/2}.  The mode~\eqref{argmax.x} in this case is $m= (n + 1)/2$. We have 
\begin{equation*}
a_{N/2}< h_{\alpha/2} \le \lfloor n/2\rfloor \leq 
\begin{cases} \lfloor m \rfloor - 1, & \mbox{if $n$ is odd ($m =  \lfloor m \rfloor $),}\\
\lfloor m \rfloor & \mbox{if $n$ is even ($m \neq  \lfloor m \rfloor $).} 
\end{cases}
\end{equation*} 
By this and Lemma~\ref{lem:hyp.uni.x} we have that  $P_{N/2}(h_{\alpha/2})\ge P_{N/2}(a_{N/2})$. Moreover, this inequality is strict by part~\ref{part:unimod.pos.PM} of Lemma~\ref{lem:hyp.uni.x} since the latter is positive, $a_{N/2}$ being an endpoint of an \AMO{} interval, hence  the former is positive too. The strict inequality contradicts \eqref{inequality N/2}.

If $b_{N/2} > n - h_{\alpha/2}$ then similar arguments apply.

For part~\ref{prt:a<h}, for $M\in[N/2-1]$, using Lemma~\ref{lemma: n - b > a} to show inclusion of the following intervals, we have
\begin{gather*}
P_{M} ( [a_M, n - a_M]) \geq P_{M} ( [a_M, b_M]) \geq 1 - \alpha,\qm{and}\\
P_{N -M} ( [a_M, n - a_M]) \geq P_{N - M} ( [n - b_M, n - a_M]) \geq 1 - \alpha.
\end{gather*}
The first  claim of part~\ref{prt:a<h} then follows from these inequalities and Lemma~\ref{lem:uni.intvl}, which says that $M'\mapsto P_{M'} ( [a_M, n - a_M])$ is unimodal, thus $P_{N/2} ( [a_M, n - a_M]) \geq 1 - \alpha$. This inequality is equivalent to
 $$\alpha\ge P_{N/2}(X < a_M ) + P_{N/2}(X > n - a_M ) =2P_{N/2}(X < a_M ),$$ by symmetry, and thus $a_M\le h_{\alpha/2}$ by definition of the latter, establishing the second claim.
\end{proof}

\begin{lemma}\label{lemma: n - b > a}
Let $\{[a_M, b_M] \mid M =  [\lfloor N/2 \rfloor] \}$ be \AMO{}. Then
\begin{equation}\label{lem.al.n-a>b}
a_M +b_M\le n\qmq{for all} M < N/2.
\end{equation}
\end{lemma}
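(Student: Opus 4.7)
The plan is a proof by contradiction: suppose $a_M + b_M \ge n+1$ for some $M < N/2$ and show that the reflected interval $[n - b_M, n - a_M]$, which has the same length as $[a_M, b_M]$, has strictly larger $P_M$-probability, contradicting the $P_M$-maximizing property guaranteed by AMO.

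The first step is a pointwise comparison: for $M < N/2$ and $0 \le x < n/2$ with $P_M(x) > 0$, I would show
\[
\frac{P_M(x)}{P_M(n - x)} = \prod_{i=0}^{n - 2x - 1} \frac{N - M - x - i}{M - x - i} > 1,
\]
because $N - M > M$ makes each factor strictly exceed $1$. By the $x \leftrightarrow n-x$ swap this yields $P_M(x) < P_M(n-x)$ for $x > n/2$ whenever both quantities are positive.

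The second step verifies that $x_{\min} + x_{\max} \le n$ whenever $M < N/2$, by a short case analysis on whether $M \le n$ and $M \le N - n$; the only real check is that $M > n$ and $M > N - n$ combined with $M < N/2$ would force both $N > 2n$ and $N < 2n$, impossible. Consequently, for every $x$ in the support $[x_{\min},x_{\max}]$ of $P_M$, the reflected point $n-x$ also lies in the support, so both $P_M(x)$ and $P_M(n-x)$ are strictly positive wherever we invoke step one.

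With $I_+ = [a_M, b_M]$ and $I_- = [n - b_M, n - a_M]$, canceling the common part $I_+ \cap I_-$ and pairing each remaining point with its reflection gives
\[
P_M(I_+) - P_M(I_-) = \sum_{x \in I_+ \setminus I_-} \bigl[ P_M(x) - P_M(n - x) \bigr].
\]
The assumption $a_M + b_M \ge n+1$ makes $I_+ \setminus I_- = [\max(a_M, n - a_M + 1), b_M]$ nonempty and forces every $x$ in this set to satisfy $x > n/2$, so by step one every summand is strictly negative, yielding $P_M(I_+) < P_M(I_-)$ and the desired contradiction. The main obstacle is the binomial-ratio computation together with careful bookkeeping of the support so that every strict pointwise inequality is legitimately applied; both are elementary but demand attention at the boundary.
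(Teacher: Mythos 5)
Your proof is correct, but it takes a genuinely different route from the paper's. The paper also argues by contradiction from $a_{M^*}+b_{M^*}>n$, but instead of comparing the whole interval with its reflection under $P_{M^*}$, it compares point masses at just two points, $b_{M^*}$ and $n-b_{M^*}$: Lemma~\ref{lemma: opt sym} shows the reflected family $\{[n-b_{N-M},\,n-a_{N-M}]\}$ is \AMO{} for the reflected indices, Lemma~\ref{larger} then gives $P_{N-M^*}(b_{M^*})\le P_{N-M^*}(n-b_{M^*})$ and $P_{M^*}(b_{M^*})\ge P_{M^*}(n-b_{M^*})$, and the strict monotone-ratio Lemma~\ref{prob ratio} makes these two inequalities incompatible. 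You instead prove directly that for $M<N/2$ the pmf is tilted toward small values, $P_M(x)>P_M(n-x)$ for $x<n/2$ (your product formula is correct, each factor exceeding $1$ because $N-M>M$), and conclude that the reflected interval $[n-b_M,n-a_M]$ would have strictly larger $P_M$-probability with the same cardinality, contradicting the $P_M$-maximizing half of the \AMO{} property. Your argument is more self-contained: it needs neither Lemma~\ref{lemma: opt sym}, nor Lemma~\ref{prob ratio}, nor Lemma~\ref{larger}, only the definition of $P_M$-maximizing, at the price of an explicit binomial computation; the paper's version reuses machinery it needs elsewhere anyway and only inspects two points.

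One small repair to your step two: from $x_{\min}+x_{\max}\le n$ it does \emph{not} follow that $n-x$ lies in the support for every $x$ in the support (take $N=10$, $n=4$, $M=2$, $x=0$: then $n-x=4>x_{\max}=2$). What you actually need, and what is true, is the claim only for the points you use, namely those $x$ in $[a_M,b_M]$ outside the reflected interval, which all satisfy $x>n/2$: for such $x$ one has $n-x<x\le x_{\max}$ and $n-x\ge n-x_{\max}\ge x_{\min}$, so $P_M(n-x)>0$, while $P_M(x)>0$ because every point of a $P_M$-maximizing interval has positive probability by Definition~\ref{def:a.opt}. With the claim restricted in this way, every summand in your final display is indeed strictly negative and the contradiction stands.
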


\begin{proof} Suppose this fails, so that $b_{M^*} > n - a_{M^*}$  for some  $M^*\in\mM:=[\lfloor (N - 1)/2 \rfloor] $. By Lemma~\ref{lemma: opt sym}, \begin{equation}\label{eq:a largest 1}
\{[\wtilde{a}_M = n - b_{N - M}, \wtilde{b}_M  = n - a_{N - M}]:\; M \in\wtilde{\mM}:=N-\mM \} 
\end{equation}
is \AMO{} for $\wtilde{\mM}$. Then, since $$b_{M^*} \notin [n - b_{M^*},n -  a_{M^*}] = [\wtilde{a}_{N-M^*}, \wtilde{b}_{N-M^*}],$$
by  Lemma \ref{larger} we have that
 \begin{equation}\label{eq:a largest 2}
P_{N-M^*}( b_{M^*}) \leq P_{N - M^*} ( n - b_{M^*}).    
\end{equation} By similar arguments, since $n - b_{M^*} \notin [ a_{M^*}, b_{M^*}]$ we have that $P_{M^*}( b_{M^*}) \geq P_{M^*} ( n - b_{M^*})$. Next we will apply Lemma \ref{prob ratio} with $x_1=b_{M^*}$ and $x_2=n-b_{M^*}$. We have $$b_{M^*} \leq M^*< N/2 < N - M^* \leq  N - n + n - b_{M^*},$$ so that lemma tells us that
$$\frac{P_{N -M^*}( b_{M^*}) }{P_{N - M^*} ( n - b_{M^*})} >\frac{P_{M^*}( b_{M^*}) }{P_{M^*} ( n - b_{M^*})} \geq 1,$$
which contradicts \eqref{eq:a largest 2} and thus establishes \eqref{lem.al.n-a>b}.
\end{proof}

\section{Algorithm~\ref{alg:AO.ints}}\label{sec:alg2}

\begin{algorithm}[!htp]
\caption{Given $\alpha$, $n$, and $N$, calculate a set of level-$\alpha$ acceptance intervals~$\{[a^*_M,b^*_M]:\; M\in[N]\}$.}

\begin{algorithmic} 
\REQUIRE $N \in \mathbb{N}$, $n \leq N$ and $ 0 < \alpha < 1$

\FOR{$M = 0, ..., \lfloor N/2 \rfloor$ }
\STATE $x_{\min}= \max\{0,M+n-N\}$
\STATE $x_{\max}= \min\{n,N\}$
\STATE $C, D = \lfloor \frac{(n+1)(M+1)}{N+2}\rfloor$
\STATE $P = P_M(C)$
\STATE\textbf{if}  $C>x_{\min}$ \textbf{then} $PC= P_M(C-1)$ \textbf{else} $PC= 0$ \textbf{end if}
\STATE\textbf{if}  $D<x_{\max}$ \textbf{then} $PD = P_M(D+1)$ \textbf{else} $PD= 0$ \textbf{end if}
\WHILE{$P< 1 - \alpha$}
\IF{$PD>PC$}
\STATE $D= D+1$, $P= P+PD$
\STATE\textbf{if}  $D<x_{\max}$ \textbf{then} $PD = P_M(D+1)$ \textbf{else} $PD= 0$ \textbf{end if}
\ELSE 
\STATE $C= C-1$, $P= P+PC$
\STATE\textbf{if}  $C>x_{\min}$ \textbf{then} $PC= P_M(C-1)$ \textbf{else} $PC= 0$ \textbf{end if}
\ENDIF
\ENDWHILE
\STATE $a_M = C$, $b_M = D$
\ENDFOR
\STATE $b^*_0  = b_0$, $a^*_0  = a_0$
\FOR {$M = 1, \ldots, \lfloor N/2 \rfloor$ }
\IF {$a_{M} < a^*_{M - 1}$}
\STATE $a^*_M  =  a^*_{M - 1}$, $b^*_M  = b_M + a^*_{M - 1} - a_{M}$
\ELSE 
\STATE $b^*_M  = b_M$, $a^*_M  = a_M$
\ENDIF 
\ENDFOR
\FOR {$M =  \lfloor N/2 \rfloor$ - 1,\ldots, 0 }
\IF {$b^*_{M} > b^*_{M + 1}$}
\STATE $a^*_M  = a^*_M + b^*_{M + 1} - b^*_{M}$, $b^*_M  =  b^*_{M + 1}$
\ENDIF 
\STATE $a^*_{N - M} = n - b^*_M$, $b^*_{N - M} = n - a^*_M$
\ENDFOR
\IF{$N$ is even}
\STATE $a^*_{N/2} = \max\left\{a_{N/2}, n-b_{N/2}\right\}$, $b^*_{N/2} = n - a^*_{N/2}$
\ELSE
\STATE $a^*_{ \lfloor N/2 \rfloor + 1} = n - b^*_{ \lfloor N/2 \rfloor}$, $b^*_{ \lfloor N/2 \rfloor +1} = n - a^*_{ \lfloor N/2 \rfloor}$

\ENDIF

\RETURN $\{[a^*_M,b^*_M]\}^{N}_{M = 0}$ 
\end{algorithmic}
\label{alg:AO.ints}
\end{algorithm}

\section{Auxiliary results for Section~\ref{sec:length.CI}}\label{sec:len.opt.aux}

\begin{lemma}\label{lem:A*<ACM} In the setting of Theorem~\ref{thm:len.opt.set}, for any $\mC\in\fC_S$ and  $M\ne N/2$, 
\begin{equation*}
|\mA^*(M)|\le|\mA_{\mC}(M)|.
\end{equation*}
\end{lemma}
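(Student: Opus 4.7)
The plan is to invoke Theorem~\ref{thm:a*b*} and unwind the definitions. By Theorem~\ref{thm:a*b*}, the collection $\mA^*$ is size-optimal (i.e., \AO{}) at every index $M$ except possibly $M = N/2$ when $N$ is even, and in particular at every $M \ne N/2$. So the content of the lemma reduces to the claim that no level-$\alpha$ acceptance set for $H(M)$ can be strictly smaller than the \AO{} one, applied to $\mA_{\mC}(M)$.

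First I would note that, since $\mC$ is a $(1-\alpha)$-confidence set, the standard duality gives that $\mA_{\mC}$ is a level-$\alpha$ acceptance set; that is, $P_M(X \in \mA_{\mC}(M)) \ge 1-\alpha$ for every $M \in [N]$. Then, fixing $M \ne N/2$, Theorem~\ref{thm:a*b*} tells us that $\mA^*(M)$ is \AO{} for $M$. Unfolding Definition~\ref{def:a.opt}, \AO{} at $M$ means that every strictly smaller subset of $[n]$ has $P_M$-probability below $1-\alpha$. If we had $|\mA_{\mC}(M)| < |\mA^*(M)|$, then $\mA_{\mC}(M)$ would be a level-$\alpha$ acceptance set for $H(M)$ of cardinality less than that of an \AO{} set, contradicting \AO{}-ness.

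Therefore $|\mA^*(M)| \le |\mA_{\mC}(M)|$, as required. I do not anticipate any obstacle: the lemma is essentially a bookkeeping statement extracting, from Theorem~\ref{thm:a*b*}, exactly the indices at which $\mA^*$ beats every competing acceptance set pointwise, and then applying this pointwise inequality to $\mA_{\mC}(M)$ via the acceptance/confidence duality already recorded in the paper. The only place care is needed is to remember that the exclusion $M \ne N/2$ is genuinely required: at $M = N/2$ with $N$ even, $\mA^*(N/2)$ is the shortest \emph{symmetrical} level-$\alpha$ interval, which can be strictly larger than an \AO{} acceptance set, so the pointwise bound may fail there — and that failure is exactly what the separate treatment in Theorem~\ref{thm:len.opt.set}(b) and Lemma~\ref{lem:Aodd.C*opt} will address.
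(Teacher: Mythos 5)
Your proof is correct in outline and follows the same strategy as the paper: reduce the lemma to the pointwise optimality of $\mA^*(M)$ from Theorem~\ref{thm:a*b*}, combined with the duality fact that $\mA_{\mC}(M)$ is a level-$\alpha$ acceptance set. The one step you elide is the passage from optimality against \emph{intervals} to optimality against arbitrary subsets. For $M\le\lceil N/2\rceil-1$ your reading is fully justified, since there Theorem~\ref{thm:a*b*} inherits the genuine \AO{} property (against all subsets of $[n]$) from Theorem~\ref{thm:adj}. But for $M>\lfloor N/2\rfloor$ the proof of Theorem~\ref{thm:a*b*} only verifies that no shorter level-$\alpha$ \emph{interval} $[c,d]$ exists, whereas $\mA_{\mC}(M)$ need not be an interval. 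The paper bridges this by first replacing $\mA_{\mC}(M)$ with an interval of the same cardinality and no smaller $P_M$-probability (the unimodality/rearrangement argument behind the converse part of Lemma~\ref{larger}) and only then invoking interval-optimality. Your argument needs either that rearrangement step or a one-line reflection argument in the spirit of Lemma~\ref{lemma: opt sym}: if $P_M(S)\ge 1-\alpha$ with $M>\lfloor N/2\rfloor$, then $P_{N-M}(n-S)\ge 1-\alpha$ by \eqref{X.n-X.sym}, so $|S|=|n-S|\ge|\mA^*(N-M)|=|\mA^*(M)|$ by the \AO{} property at $N-M$. Either patch is immediate, so this is a small omission rather than a flaw in the approach; your closing remark about why $M=N/2$ must be excluded is exactly right.
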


\begin{proof} Fix $M\in[N]$, $M\ne N/2$. By an argument similar to the proof of the converse part of Lemma~\ref{larger}, there is an interval $[a_M, b_M]$ such that $b_M - a_M + 1 = |\mA_{\mC}(M)|$  and $P_M( [a_M, b_M]) \geq P_M(\mA_{\mC}(M))$. Since $M\ne N/2$, $\mA^*(M)=[a_M^*,b_M^*]$ is size-optimal by Theorem~\ref{thm:a*b*} so  
\begin{equation}\label{A*<A}
|\mA^*(M)|=b_M^* - a_M^*+1\le b_M - a_M +1= |\mA_{\mC}(M)|.
\end{equation} 
\end{proof}

\begin{lemma}\label{lemma: interval 2 }
For even $N$, assume $A\subseteq [n]$ is nonempty and such that $x \in A \Rightarrow n - x \in A$. Then there exists $c\in[n]$ such that 
\begin{equation}\label{lem.Pcn-c>A}
P_{N/2}( [c, n - c]) \geq P_{N/2}( A)
\end{equation}
 and
 \begin{equation*}
n - 2c + 1= \begin{cases}
|A|,&\mbox{if $n$   or $|A|$ is odd,}\\
|A|+1,&\mbox{if $n$   and $|A|$ are even.}
\end{cases}
\end{equation*}
\end{lemma}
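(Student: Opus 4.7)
The plan is to exploit two facts about $P_{N/2}$: it is symmetric about $n/2$, which follows from part~\ref{bscs.lem.symms} of Lemma~\ref{lem:hyp.basics} applied with $M=N/2$, so that $X$ and $n-X$ are identically distributed; and it is strictly unimodal with mode equal to $\{n/2\}$ when $n$ is even and $\{(n-1)/2,(n+1)/2\}$ when $n$ is odd, by Lemma~\ref{lem:hyp.uni.x} with $m=(n+1)/2$. Together these imply that, for any $k$ whose parity matches the symmetry of the support (even $k$ when $n$ is odd, either parity when $n$ is even), the $P_{N/2}$-maximizing set of size $k$ can be taken to be a symmetric interval~$[c,n-c]$ about $n/2$.

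First I would reduce to parity cases. If $n$ is odd, then $x\mapsto n-x$ has no fixed point on $[n]$, so the symmetry condition forces $|A|$ to be even; choose $c=(n+1-|A|)/2\in\mathbb{Z}$, so that $[c,n-c]$ is symmetric about $n/2$ with $n-2c+1=|A|$. By unimodality it is the $P_{N/2}$-maximizing set of its size, yielding \eqref{lem.Pcn-c>A}. If $n$ is even and $|A|$ is odd, the only fixed point of $x\mapsto n-x$ is $n/2$, so $|A|$ odd forces $n/2\in A$; the same choice $c=(n+1-|A|)/2$ is an integer, and the argument above applies.

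The last case, in which $n$ and $|A|$ are both even, is the one that forces the ``$+1$'' in the cardinality conclusion, since no symmetric interval around the integer~$n/2$ can have even size. Here $n/2\notin A$. I would set $c=(n-|A|)/2$, giving $|[c,n-c]|=|A|+1$, and write $A$ as the disjoint union of its $|A|/2$ mirror pairs $\{x,n-x\}$ with $x<n/2$. By symmetry the pair at distance $i$ from $n/2$ has probability $2P_{N/2}(n/2-i)$, which by strict unimodality is strictly decreasing in $i\ge 1$. Thus the $|A|/2$ highest-probability pairs are exactly those at distances $1,2,\ldots,|A|/2$, which together comprise $[c,n-c]\setminus\{n/2\}$. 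Consequently
\[
P_{N/2}(A)\le P_{N/2}\bigl([c,n-c]\setminus\{n/2\}\bigr)\le P_{N/2}([c,n-c]).
\]

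The proof is essentially parity bookkeeping plus unimodality; I do not expect any deep obstacle. The only subtle point is the last case, where one must recognize that the obstruction to matching $|A|$ exactly is the parity mismatch between ``symmetric interval of integer length centered at an integer'' (which always has odd length) and $|A|$, and that augmenting by the single point $n/2$ can only increase probability.
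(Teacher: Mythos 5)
Your proof is correct and follows essentially the same route as the paper: split by parity of $n$ and $|A|$, take $c=(n+1-|A|)/2$ (resp.\ $c=(n-|A|)/2$ in the even--even case), and use symmetry of $P_{N/2}$ about $n/2$ together with unimodality to see that the symmetric interval maximizes probability among sets of the relevant size. The only cosmetic difference is in the even--even case, where you decompose the symmetric set $A$ into mirror pairs while the paper instead compares $A$ with the off-center interval $[c,n-c-1]$ of size $|A|$ before enlarging to $[c,n-c]$; both rest on the same facts (and your ``strictly decreasing in $i$'' should read ``nonincreasing'' once $n/2-i$ falls below $x_{\min}$, which does not affect the conclusion).
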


\begin{proof} 
If $n$ is odd then $x \neq n - x$ for all $x \in[n]$, implying that $|A|$ is even. Let $c = (n - |A| + 1)/2$, an integer.  If $|A|=0$ then there is nothing to prove for \eqref{lem.Pcn-c>A}, so assume $|A|\ge 2$, whence $c\le (n-1)/2=:m$, which by \eqref{argmax.x} is the mode of the $\mbox{Hyper}(N/2, n, N)$ density.  Thus $m\in[c,n-c]$ and by \eqref{X.n-X.sym} this density takes the same value at the endpoints~$c$ and $n-c$.   Combining these facts implies that  $P_{N/2}(x_1) \geq P_{N/2}(x_2)$ for any $x_1 \in [c, n - c]$ and $x_2 \notin [c, n - c]$. Then $P_{N/2}( [c, n - c]) \geq P_{N/2}( A)$ now follows from this and the fact that these two sets have the same number of points, $n - 2c + 1 = |A|$. 
 
 If $n$ is even and $|A|$ is odd, then $c = (n - |A| + 1)/2$ is still an integer. If $|A|=1$ then $[c,n-c]=\{n/2\}$, the point maximizing  $P_{N/2}(\cdot)$ by Lemma~\ref{lem:hyp.basics}, hence  \eqref{lem.Pcn-c>A} holds.  Otherwise, $|A|\ge 3$ and $c<m$ so the argument in the previous paragraph applies.
 
If $n$ and $|A|$ are both even, let $c = (n - |A|)/2$, an integer,  and $B=[c, n - c - 1]$.  By unimodality and symmetry of $P_{N/2}(\cdot)$ about $n/2$ we have that $$\min_{x\in B}P_{N/2}(x)=P_{N/2}(c)= P_{N/2}(n-c) = \max_{x\not\in B} P_{N/2}(x).$$ It follows from this and $|B|=n-2c=|A|$ that $P_{N/2}(B) \geq P_{N/2}( A)$, thus $P_{N/2}( [c, n - c]) \geq P_{N/2}(B) \geq P_{N/2}( A)$.
\end{proof}

\begin{lemma}\label{lem:Aodd.C*opt}
In the setting of Theorem~\ref{thm:len.opt.set}, suppose $n$ and $N$ are even. Then, for $\mC\in\fC_S$,
\begin{equation*}
|\mC^*|\le \begin{cases}
|\mC|,&\mbox{if $|\mA_{\mC}(N/2)|$ is odd,}\\
|\mC|+1,&\mbox{if $|\mA_{\mC}(N/2)|$ is even.}\\
\end{cases}
\end{equation*}
\end{lemma}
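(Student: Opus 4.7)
The plan is to split the comparison $|\mC^*|$ vs.\ $|\mC|$ index by index over $M\in[N]$, treating the $M\neq N/2$ indices uniformly via Lemma~\ref{lem:A*<ACM} and focusing the real work on the single index $M=N/2$. Since $|\mC|=|\mA_{\mC}|$ and $|\mC^*|=|\mA^*|$ by Lemma~\ref{big theorem}, it suffices to show that $|\mA^*|$ exceeds $|\mA_{\mC}|$ by at most $0$ or $1$ depending on the parity of $|\mA_{\mC}(N/2)|$, with all of the slack coming from the middle index.

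First I would apply Lemma~\ref{lem:A*<ACM} to obtain $|\mA^*(M)|\le |\mA_{\mC}(M)|$ for every $M\in[N]\setminus\{N/2\}$. Then, to handle $M=N/2$, note that $\mC\in\fC_S$ forces $\mA_{\mC}$ to be symmetrical by Lemma~\ref{big theorem}, so $\mA_{\mC}(N/2)=n-\mA_{\mC}(N/2)$; moreover $\mA_{\mC}(N/2)$ is nonempty since $P_{N/2}(\mA_{\mC}(N/2))\ge 1-\alpha$. This puts $A:=\mA_{\mC}(N/2)$ in the hypotheses of Lemma~\ref{lemma: interval 2 }, so there is an integer $c$ with $P_{N/2}([c,n-c])\ge P_{N/2}(A)\ge 1-\alpha$, giving a level-$\alpha$ symmetrical acceptance interval for $H(N/2)$. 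By the final assertion of Theorem~\ref{thm:a*b*} (that $\mA^*(N/2)=[h_{\alpha/2},n-h_{\alpha/2}]$ is $\alpha$-optimal among symmetrical acceptance intervals for $H(N/2)$), we get $|\mA^*(N/2)|\le n-2c+1$.

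Now I would split on parity. Since $n$ is assumed even, the two branches of Lemma~\ref{lemma: interval 2 } give
\begin{equation*}
n-2c+1=\begin{cases} |\mA_{\mC}(N/2)|, & \text{if }|\mA_{\mC}(N/2)|\text{ is odd,}\\ |\mA_{\mC}(N/2)|+1, & \text{if }|\mA_{\mC}(N/2)|\text{ is even.}\end{cases}
\end{equation*}
Summing the per-index inequality from Lemma~\ref{lem:A*<ACM} over $M\neq N/2$ together with the appropriate one above for $M=N/2$ yields
\begin{equation*}
|\mA^*|=\sum_{M\in[N]}|\mA^*(M)|\le \sum_{M\in[N]}|\mA_{\mC}(M)|+\varepsilon=|\mA_{\mC}|+\varepsilon,
\end{equation*}
where $\varepsilon=0$ in the odd case and $\varepsilon=1$ in the even case. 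Translating back via Lemma~\ref{big theorem} gives $|\mC^*|\le |\mC|$ or $|\mC^*|\le |\mC|+1$, respectively.

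There is no real obstacle here: all ingredients have already been proved, and the statement is essentially bookkeeping. The only subtlety to verify carefully is that the parity of $|\mA_{\mC}(N/2)|$ in the lemma's hypothesis lines up with the branch of Lemma~\ref{lemma: interval 2 } that one invokes; since $n$ is even by assumption, the odd case of the lemma corresponds exactly to the $|A|=|\mA_{\mC}(N/2)|$ odd case, and the even case forces the $+1$ gap, which is precisely the source of the single-point slack in the second conclusion.
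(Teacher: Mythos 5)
Your proposal is correct and follows essentially the same route as the paper's own proof: bound $|\mA^*(M)|\le|\mA_{\mC}(M)|$ for $M\ne N/2$ via Lemma~\ref{lem:A*<ACM}, handle $M=N/2$ by applying Lemma~\ref{lemma: interval 2 } to the symmetrical set $\mA_{\mC}(N/2)$ and comparing with the minimal symmetrical level-$\alpha$ interval $\mA^*(N/2)$, then sum and translate via Lemma~\ref{big theorem}. The parity bookkeeping you highlight (with $n$ even forcing the $+1$ only in the even-$|\mA_{\mC}(N/2)|$ branch) is exactly the paper's argument.
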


\begin{proof} We have $|\mA^*(M)|\le|\mA_{\mC}(M)|$ for all $M\ne N/2$ by Lemma~\ref{lem:A*<ACM}. $\mA_{\mC}(N/2)$ is symmetrical, so by Lemma~\ref{lemma: interval 2 } there is  an interval~$[a, n-a]$ such that $P_{N/2}( [a, n - a]) \geq P_{N/2}( \mA_{\mC}(N/2))$ and 
\begin{equation}\label{n-2a+1<}
n - 2a + 1 \le \begin{cases}
|\mA_{\mC}(N/2)|,&\mbox{if $|\mA_{\mC}(N/2)|$ is odd,}\\
|\mA_{\mC}(N/2)|+1,&\mbox{if $|\mA_{\mC}(N/2)|$ is even.}
\end{cases}
\end{equation}
Since $\mA^*(N/2)=[a_{N/2}^*, b_{N/2}^*] = [a_{N/2}^*, n  - a_{N/2}^*]$ is the shortest symmetrical acceptance interval for $M = N/2$, we have  
\begin{equation*}
|\mA^*(N/2)|=b_{N/2}^* - a_{N/2}^*+1\le n-2a +1,
\end{equation*} which is thus $\le$ the right-hand-side of \eqref{n-2a+1<}.   This, with the above inequality for the $M\ne N/2$ cases, gives the desired result after summing in an argument like \eqref{C>C*.set}.
\end{proof}

%\begin{lemma} \label{lemma: interval max} For any $A\subseteq[n]$, there exists an interval $[a, b]\subseteq[n]$ such that $b - a + 1 = |A|$  and $P_M( [a, b]) \geq P_M( A)$. 
%\end{lemma}
%
%\begin{proof}
%Because $P_M( x)$ is unimodal with respect to $x$, we can find an interval $[a, b]$ satisfying $b - a + 1 = |A|$ and $P_M(x_1) \geq P_M(x_2)$ for any $x_1 \in [a, b]$ and $x_2 \notin [a, b]$. 
%Let $B = \{x = 0, \ldots, n \mid x \in [a, b]/A\} $. Then, $|B| = |A\setminus [a,b]|$, since $b - a + 1 = |A|$. Therefore,
%\begin{align*}
%P_M( A) & = P_M( A \cap [a,b]) + \sum_{x \in A\setminus [a,b]}P_M(x)\\
%& \leq P_M( A \cap [a,b]) + \sum_{x \in B}P_M(x) = P_M( [a, b]).
%\end{align*}
%\end{proof}

\end{document}